\newtheorem{theorem}{Theorem}
\newtheorem{lemma}{Lemma}
\newtheorem{proposition}{Proposition}
\newtheorem{corollary}{Corollary}
\newtheorem{definition}{Definition}
\newtheorem{example}{Example}
\newtheorem{remark}{Remark}
\newcommand{\prob}{\ensuremath{\mathbb{P}}}
\newcommand{\Reals}{\ensuremath{\mathbb{R}}}
\newcommand{\expectation}{\ensuremath{\mathbb{E}}}
\newcommand{\set}{\ensuremath{\mathcal}}
\begin{document}
\title{\huge{On $f$-Divergences: Integral Representations, Local Behavior, and Inequalities}}
\author{Igal Sason
\thanks{
I. Sason is  with the Department of Electrical Engineering, Technion--Israel
Institute of Technology, Haifa 32000, Israel (e-mail: sason@ee.technion.ac.il).}}

\maketitle

\begin{abstract}
This paper is focused on $f$-divergences, consisting of three
main contributions. The first one introduces integral representations of
a general $f$-divergence by means of the relative information spectrum.
The second part provides a new approach for the derivation of $f$-divergence
inequalities, and it exemplifies their utility in the setup of Bayesian
binary hypothesis testing. The last part of this paper further studies
the local behavior of $f$-divergences.
\end{abstract}

{\bf{Keywords}}:
DeGroot statistical information,
$f$-divergences,
local behavior,
relative information spectrum,
R\'{e}nyi divergence.

\section{Introduction}
\label{section: introduction}
Probability theory, information theory, learning theory,
statistical signal processing and other related disciplines,
greatly benefit from non-negative measures of dissimilarity (a.k.a.
divergence measures) between pairs of probability measures defined
on the same measurable space (see, e.g., \cite{Basseville13}, \cite{LieseV_book87},
\cite{LieseV_IT2006}, \cite{ReidW11}, \cite{Tsybakov09}, \cite{Vapnik98}, \cite{Verdu_book}).
An axiomatic characterization of information measures, including divergence
measures, was provided by Csisz\'ar \cite{Csiszar08}.
Many useful divergence measures belong to the set of $f$-divergences,
independently introduced by Ali and Silvey
\cite{AliS}, Csisz\'{a}r (\cite{Csiszar63}--\cite{Csiszar67b}), and Morimoto
\cite{Morimoto63} in the early sixties. The family of $f$-divergences
generalizes the relative entropy (a.k.a. the Kullback-Leibler
divergence) while also satisfying the data processing inequality
among other pleasing properties (see, e.g., \cite{LieseV_IT2006}
and references therein).

Integral representations of $f$-divergences serve to study properties
of these information measures, and they are also used to establish relations
among these divergences. An integral representation of
$f$-divergences, expressed by means of the DeGroot statistical
information, was provided in \cite{LieseV_IT2006} with a simplified
proof in \cite{Liese2012}.
The importance of this integral representation stems from the
operational meaning of the DeGroot statistical information \cite{DeGroot62},
which is strongly linked to Bayesian binary hypothesis testing.
Some earlier specialized versions
of this integral representation were introduced in \cite{CZK98}, \cite{FeldmanO89},
\cite{Guttenbrunner92}, \cite{OV93} and \cite{Torgersen}, and a variation
of it also appears in \cite[Section~5.B]{ISSV16}.
Implications of the integral
representation of $f$-divergences, by means of the DeGroot statistical information,
include an alternative proof of the data processing inequality, and a study of
conditions for the sufficiency or $\varepsilon$-deficiency of observation channels
(\cite{LieseV_IT2006}, \cite{Liese2012}).

Since many distance measures of interest fall under the
paradigm of an $f$-divergence \cite{GibbsSu02}, bounds among $f$-divergences
are very useful in many instances such as the analysis of rates of
convergence and concentration of measure bounds,
hypothesis testing, testing goodness of fit, minimax risk in estimation and modeling,
strong data processing inequalities and contraction coefficients, etc.
Earlier studies developed systematic approaches to obtain $f$-divergence inequalities
while dealing with pairs of probability measures defined on arbitrary alphabets. A list
of some notable existing $f$-divergence inequalities is provided, e.g., in \cite[Section~3]{GibbsSu02}
and \cite[Section~1]{ISSV16}.
State-of-the-art techniques which serve to derive bounds among
$f$-divergences include:
\begin{enumerate}[1)]
\item Moment inequalities which rely on log-convexity arguments (\cite{AnwarHP09}, \cite[Section~5.D]{ISSV16},
\cite{Simic07}, \cite{Simic08}, \cite{Simic09}, \cite{Simic15});
\item Inequalities which rely on the characterization of the exact locus of the joint range of
$f$-divergences \cite{HarremoesV_2011};
\item $f$-divergence inequalities via functional domination
(\cite{SV16a}, \cite[Section~3]{ISSV16}, \cite{Taneja05b}, \cite{Taneja13});
\item Sharp $f$-divergence inequalities by using numerical tools for maximizing or minimizing an
$f$-divergence subject to a finite number of constraints on other $f$-divergences \cite{GSS_IT14};
\item Inequalities which rely
on powers of $f$-divergences defining a distance (\cite{EndresS03}, \cite{KafkaOV91}, \cite{LL2015}, \cite{Vajda09});
\item Vajda and Pinsker-type inequalities for $f$-divergences (\cite{Csiszar63}--\cite{Csiszar67b},
\cite{Gilardoni10}, \cite[Section~6-7]{ISSV16}, \cite{ReidW11}, \cite{Topsoe_IT00});
\item Bounds among $f$-divergences when the relative information is bounded (\cite{Dragomir00a}, \cite{Dragomir00b},
\cite{Dragomir00c}, \cite{DragomirG_01}, \cite{Dragomir06},
\cite{KumarC04}, \cite{SV15}, \cite[Sections~4-5]{ISSV16}, \cite{Taneja05}), and reverse Pinsker inequalities
(\cite{Binette18}, \cite{SV15}, \cite[Section~6]{ISSV16});
\item Inequalities which rely on the minimum of an $f$-divergence for a given total variation
distance and related bounds (\cite{Gilardoni06}, \cite{Gilardoni06-cor}, \cite{Gilardoni10}, \cite[p.~115]{GSS_IT14},
\cite{Gushchin16}, \cite{ReidW11}, \cite{IS15}, \cite{IS16}, \cite{Vajda09});
\item Bounds among $f$-divergences (or functions of $f$-divergences such as the R\'enyi divergence) via integral
representations of these divergence measures \cite[Section~8]{ISSV16};
\item Inequalities which rely on variational representations of $f$-divergences (e.g., \cite[Section~2]{Liu17}).
\end{enumerate}

Following earlier studies of the local behavior of $f$-divergences and their asymptotic properties (see
related results by Csisz\'{a}r and Shields \cite[Theorem~4.1]{CsiszarS_FnT}, Pardo and Vajda \cite[Section~3]{PardoV03},
and Sason and V\'{e}rdu \cite[Section~3.F]{ISSV16}), it is known that the local behavior of $f$-divergences
scales like the chi-square divergence (up to a scaling factor which depends on $f$) provided that the
first distribution approaches the reference measure in a certain strong sense. The study of the local
behavior of $f$-divergences is an important aspect of their properties, and we further study it in this work.

This paper considers properties of $f$-divergences, while first introducing in
Section~\ref{section: preliminaries} the basic definitions and notation needed,
and in particular the various measures of dissimilarity between probability measures
used throughout this paper. The presentation of our new results is then structured as follows:

Section~\ref{section: integral representation} is focused on
the derivation of new integral representations of $f$-divergences,
expressed as a function of the relative information spectrum of
the pair of probability measures, and the convex function $f$.
The novelty of Section~\ref{section: integral representation} is in
the unified approach which leads to integral representations of $f$-divergences
by means of the relative information spectrum, where the latter cumulative distribution function
plays an important role in information theory and statistical decision theory (see, e.g., \cite{Liu17} and
\cite{Verdu_book}). Particular integral representations of the type
of results introduced in Section~\ref{section: integral representation}
have been recently derived by Sason and Verd\'{u} in a case-by-case basis
for some $f$-divergences (see \cite[Theorems~13 and 32]{ISSV16}), while
lacking the approach which is developed in Section~\ref{section: integral representation}
for general $f$-divergences. In essence,
an $f$-divergence $D_f(P\|Q)$ is expressed in Section~\ref{section: integral representation}
as an inner product of a simple function of the relative
information spectrum (depending only on the probability measures $P$ and $Q$),
and a non-negative weight function $\omega_f \colon (0, \infty) \mapsto [0, \infty)$
which only depends on $f$. This kind of representation, followed by a generalized result,
serves to provide new integral representations of various useful $f$-divergences.
This also enables in Section~\ref{section: integral representation}
to characterize the interplay between the DeGroot statistical information (or
between another useful family of $f$-divergence, named the $E_\gamma$ divergence with $\gamma \geq 1$)
and the relative information spectrum.

Section~\ref{section: new inequalities} provides a new approach for the
derivation of $f$-divergence inequalities, where an arbitrary $f$-divergence
is lower bounded by means of the $E_\gamma$ divergence \cite{PPV10}
or the DeGroot statistical information \cite{DeGroot62}.
The approach used in Section~\ref{section: new inequalities} yields several
generalizations of the Bretagnole-Huber inequality \cite{BretagnolleH79},
which provides a closed-form and simple upper bound on the total variation
distance as a function of the relative entropy; the Bretagnole-Huber inequality
has been proved to be useful, e.g., in the context of lower bounding the minimax
risk in non-parametric estimation (see, e.g., \cite[pp.~89--90, 94]{Tsybakov09}),
and in the problem of density estimation (see, e.g., \cite[Section~1.6]{Vapnik98}).
Although Vajda's tight lower bound in \cite{Vajda70} is slightly tighter everywhere
than the Bretagnole-Huber inequality, our motivation for the generalization of
the latter bound is justified later in this paper.
The utility of the new inequalities is exemplified in the setup of Bayesian binary
hypothesis testing.

Section~\ref{section: local behavior} finally derives new results on the local behavior
of $f$-divergences, i.e., the characterization of their scaling when the pair of probability
measures are sufficiently close to each other. The starting point of our analysis in
Section~\ref{section: local behavior} relies on the analysis in \cite[Section~3]{PardoV03},
regarding the asymptotic properties of $f$-divergences.

The reading of Sections~\ref{section: integral representation}--\ref{section: local behavior}
can be done in any order since the analysis in these sections is independent.

\section{Preliminaries and Notation}
\label{section: preliminaries}

We assume throughout that the probability measures $P$ and $Q$ are
defined on a common measurable space $(\set{A}, \mathscr{F})$, and $P \ll Q$
denotes that $P$ is {\em absolutely continuous} with respect to $Q$, namely
there is no event $\set{F} \in \mathscr{F}$ such that $P(\set{F}) > 0 = Q(\set{F})$.

\begin{definition}
\label{def:RI}
The {\em relative information} provided by $a \in \set{A}$
according to $(P,Q)$, where $P \ll Q$, is given by
\begin{align}  \label{eq:RI}
\imath_{P\|Q}(a) := \log \, \frac{\text{d}P}{\text{d}Q} \, (a).
\end{align}
More generally, even if $P \not\ll Q$, let $R$ be an arbitrary dominating probability measure such that
$P,Q \ll R$ (e.g., $R = \tfrac12 (P+Q)$); irrespectively of the choice of $R$, the relative information
is defined to be
\begin{align}  \label{eq2:RI}
\imath_{P\|Q}(a) := \imath_{P\|R}(a) - \imath_{Q\|R}(a), \quad a \in \set{A}.
\end{align}
\end{definition}
The following asymmetry property follows from \eqref{eq2:RI}:
\begin{align}
\label{eq: asymmetry RI}
\imath_{P\|Q} = -\imath_{Q\|P}.
\end{align}

\begin{definition}  \label{def:RIS}
The {\em relative information spectrum} is the cumulative distribution function
\begin{align} \label{eq:RIS}
\mathds{F}_{P \| Q}(x) = \prob\bigl[\imath_{P\|Q}(X) \leq x \bigr], \quad x \in \Reals, \; X \sim P.
\end{align}
The {\em relative entropy} is the expected valued of the relative information when it is distributed according to
$P$:
\begin{align} \label{eq: KL div}
D(P\|Q) := \mathbb{E}\bigl[\imath_{P\|Q}(X)\bigr], \quad X \sim P.
\end{align}
\end{definition}

Throughout this paper, $\set{C}$ denotes the set of convex functions $f \colon (0, \infty) \mapsto \Reals$
with $f(1)=0$. Hence, the function $f \equiv 0$ is in $\set{C}$;
if $f \in \set{C}$, then $a f \in \set{C}$ for all $a>0$;
and if $f,g \in \set{C}$, then $f+g \in \set{C}$.
We next provide a general definition for the family of $f$-divergences (see \cite[p.~4398]{LieseV_IT2006}).
\begin{definition} \label{def:fD} ($f$-divergence \cite{AliS, Csiszar63, Csiszar67a})
Let $P$ and $Q$ be probability measures, let $\mu$ be a dominating measure of $P$ and $Q$
(i.e., $P, Q \ll \mu$; e.g., $\mu = P+Q$),
and let $p := \frac{\text{d}P}{\text{d}\mu}$ and $q := \frac{\text{d}Q}{\text{d}\mu}$.
The {\em $f$-divergence from $P$ to $Q$} is given, independently of $\mu$, by
\begin{align} \label{eq:fD}
D_f(P\|Q) := \int q \, f \Bigl(\frac{p}{q}\Bigr) \, \text{d}\mu,
\end{align}
where
\begin{align}
& f(0) := \underset{t \downarrow 0}{\lim} \, f(t), \\
& 0 f\Bigl(\frac{0}{0}\Bigr) := 0, \\
& 0 f\Bigl(\frac{a}{0}\Bigr) := \lim_{t \downarrow 0} \, t f\Bigl(\frac{a}{t}\Bigr) = a \lim_{u \to \infty} \frac{f(u)}{u}, \quad a>0.
\end{align}
\end{definition}

We rely in this paper on the following properties of $f$-divergences:
\begin{proposition} \label{proposition: uniqueness}
Let $f, g \in \set{C}$. The following conditions are equivalent:
\begin{enumerate}[1)]
\item
\begin{align}
D_f(P\|Q) = D_g(P\|Q), \quad \forall \, P, Q;
\end{align}
\item
there exists a constant $c \in \Reals$ such that
\begin{align} \label{eq: invariance of f-div}
f(t) - g(t) = c \; (t-1), \quad \forall \, t \in (0, \infty).
\end{align}
\end{enumerate}
\end{proposition}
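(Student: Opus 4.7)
The implication (2) $\Rightarrow$ (1) is the easy direction. If $f(t) - g(t) = c(t-1)$, substituting into the definition \eqref{eq:fD} gives
\begin{align*}
D_f(P\|Q) - D_g(P\|Q) = \int q \cdot c \left( \frac{p}{q} - 1 \right) \text{d}\mu = c \int (p-q) \, \text{d}\mu = 0,
\end{align*}
since $P$ and $Q$ are probability measures. The only mild point here is to verify that the substitution is valid at the limit values where $q=0$ or $p=0$, but this is immediate from the conventions below \eqref{eq:fD} since $c(t-1)$ is affine.

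For the nontrivial direction (1) $\Rightarrow$ (2), the plan is to set $h := f-g$ and test the hypothesis on a rich enough family of Bernoulli measures. On $\set{A} = \{0,1\}$ take $P=(p,1-p)$ and $Q=(q,1-q)$ with $p,q\in(0,1)$; then $D_f(P\|Q)=D_g(P\|Q)$ forces
\begin{align*}
q \, h\!\left(\frac{p}{q}\right) + (1-q)\, h\!\left(\frac{1-p}{1-q}\right) = 0.
\end{align*}
Introducing the change of variables $u := p/q$ and $v := (1-p)/(1-q)$, which always satisfies the affine relation $qu+(1-q)v=1$, and solving for $q = (v-1)/(v-u)$, $1-q = (1-u)/(v-u)$ (valid whenever $u\neq v$), the constraint rewrites as
\begin{align*}
(v-1)\, h(u) \; = \; (u-1)\, h(v).
\end{align*}

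Next, I would verify the reachability of parameters: given any $u,v\in(0,\infty)\setminus\{1\}$ lying on opposite sides of $1$, the above formulas produce admissible $p,q\in(0,1)$ (indeed, $p<q \Leftrightarrow u<1 \Leftrightarrow v>1$, so $q=(v-1)/(v-u)$ and the corresponding $p=uq$ both lie in $(0,1)$). Hence for every such pair $(u,v)$, the map $t \mapsto h(t)/(t-1)$ takes the same value on $u$ and on $v$. Since any two points $u_1,u_2 \in (0,\infty)\setminus\{1\}$ can be connected by comparing each to a common reference point on the opposite side of $1$, the function $t \mapsto h(t)/(t-1)$ is a constant $c\in\Reals$ on $(0,\infty)\setminus\{1\}$. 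Combined with $h(1)=f(1)-g(1)=0$, this yields $h(t)=c(t-1)$ on all of $(0,\infty)$, which is precisely \eqref{eq: invariance of f-div}.

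The main (minor) obstacle is purely bookkeeping: confirming that the Bernoulli parameterization actually attains every pair $(u,v)$ on opposite sides of $1$, and then chaining the equalities $h(u)/(u-1)=h(v)/(v-1)$ so as to propagate the constant across all of $(0,\infty)\setminus\{1\}$. Neither convexity of $f$ or $g$ nor any regularity of $h$ is used beyond $h(1)=0$; only the value of $D_h$ on the two-point space is exploited, which is what makes binary measures a canonical test family.
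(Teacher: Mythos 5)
Your proof is correct. The paper states Proposition~\ref{proposition: uniqueness} without proof (it is a standard property of $f$-divergences, cf.\ Liese--Vajda), so there is no in-paper argument to compare against; your route --- testing on binary distributions, the change of variables $u=p/q$, $v=(1-p)/(1-q)$ with $qu+(1-q)v=1$, and propagating the constancy of $t\mapsto h(t)/(t-1)$ across $t=1$ via pairs on opposite sides of~$1$ --- is the canonical argument, and the easy direction with the boundary conventions is also handled correctly.
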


\begin{proposition}  \label{proposition: conjugate}
Let $f \in \set{C}$, and let
$f^\ast \colon (0, \infty) \mapsto \Reals$ be the {\em conjugate function}, given by
\begin{align} \label{eq: conjugate f}
f^\ast(t) = t \, f\left(\tfrac1t\right)
\end{align}
for $t > 0$. Then,
\begin{enumerate}[1)]
\item $f^\ast \in \set{C}$;
\item $f^{\ast\ast} = f$;
\item for every pair of probability measures $(P,Q)$,
\begin{align} \label{eq: Df and Df^ast}
D_f(P \| Q) = D_{f^\ast}(Q \| P).
\end{align}
\end{enumerate}
\end{proposition}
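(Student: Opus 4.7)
I would prove the three parts in the order (1), (2), (3), since (3) is the substantive claim and the first two are really just bookkeeping about $f^\ast$ itself. The main obstacle will not be convexity or involution but rather ensuring in part (3) that the boundary conventions from Definition~\ref{def:fD} (the $f(0)$, $0\,f(0/0)$, and $0\,f(a/0)$ cases) are compatible with swapping $f$ for $f^\ast$ and $P$ for $Q$.

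For part (1), the value $f^\ast(1) = 1 \cdot f(1) = 0$ is immediate. Convexity of $f^\ast$ on $(0,\infty)$ would follow from the standard fact that the perspective map $(t,s) \mapsto t\,f(s/t)$ is jointly convex on $(0,\infty)^2$ whenever $f$ is convex; then $f^\ast(t) = t\,f(1/t)$ is its restriction to $s=1$, hence convex. If one prefers, the convexity inequality can be verified directly: given $t_1, t_2 > 0$ and $\lambda \in (0,1)$, write $\bar{t} = \lambda t_1 + (1-\lambda)t_2$ and note that $1/\bar{t}$ equals the $\bar{t}$-weighted convex combination of $1/t_1$ and $1/t_2$ with weights $\lambda t_1 / \bar{t}$ and $(1-\lambda)t_2/\bar{t}$, then apply convexity of $f$ and multiply through by $\bar{t}$. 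For part (2) there is nothing beyond one line of algebra: $f^{\ast\ast}(t) = t\,f^\ast(1/t) = t \cdot (1/t)\,f(t) = f(t)$ for all $t > 0$.

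For part (3), I would pick a common dominating measure $\mu$ (e.g.\ $\mu = P+Q$) with densities $p = \text{d}P/\text{d}\mu$ and $q = \text{d}Q/\text{d}\mu$, and split the integral $\int q\,f(p/q)\,\text{d}\mu$ defining $D_f(P\|Q)$ according to the three regions $\{p>0,\,q>0\}$, $\{p>0,\,q=0\}$, and $\{p=0,\,q>0\}$. On the interior region, the identity is pointwise algebra: $q\,f(p/q) = p \cdot (q/p)\,f(p/q) = p\,f^\ast(q/p)$. On $\{p>0,\,q=0\}$, Definition~\ref{def:fD} gives the $D_f(P\|Q)$ contribution $p \lim_{u\to\infty} f(u)/u$, while the $D_{f^\ast}(Q\|P)$ contribution is $p\,f^\ast(0) = p \lim_{t \downarrow 0} t\,f(1/t) = p \lim_{u\to\infty} f(u)/u$, matching. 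On $\{p=0,\,q>0\}$, the $D_f(P\|Q)$ contribution is $q\,f(0)$, while the $D_{f^\ast}(Q\|P)$ contribution is $q \lim_{u\to\infty} f^\ast(u)/u = q \lim_{u\to\infty} f(1/u) = q\,f(0)$, also matching. Summing over the three regions yields $D_f(P\|Q) = D_{f^\ast}(Q\|P)$.

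The only real obstacle, as noted, is the boundary bookkeeping in part (3): one must consistently apply the three conventions in Definition~\ref{def:fD} and verify that $f^\ast(0)$ and $\lim_{u\to\infty} f^\ast(u)/u$ are precisely the swap of $\lim_{u\to\infty} f(u)/u$ and $f(0)$ needed to make the boundary contributions agree.
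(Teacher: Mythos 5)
Your proof is correct. The paper itself states this proposition without proof (it is a standard property, cited from the $f$-divergence literature), and your argument is the canonical one: the perspective-function (or direct weighted-convex-combination) argument for part~(1), the one-line involution for part~(2), and the three-region decomposition with the pointwise identity $q\,f(p/q)=p\,f^\ast(q/p)$ plus the boundary conventions for part~(3). The only cosmetic omission is the region $\{p=0,\,q=0\}$, which contributes zero to both sides by the convention $0\,f(0/0):=0$, so nothing is lost.
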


By an analytic extension of $f^\ast$ in \eqref{eq: conjugate f} at $t=0$, let
\begin{align}
\label{eq: conjugate f at 0}
f^\ast(0) := \lim_{t \downarrow 0} f^\ast(t) = \lim_{u \to \infty} \frac{f(u)}{u}.
\end{align}
Note that the convexity of $f^\ast$ implies that $f^\ast(0) \in (-\infty, \infty]$.
In continuation to Definition~\ref{def:fD}, we get
\begin{align} \label{eq:fD2}
D_f(P\|Q) &= \int q \; f\left(\frac{p}{q}\right) \, \text{d}\mu \\
\label{eq2:fD2}
& = \int_{\{pq > 0\}} q \, f \left( \frac{p}{q} \right) \,\mathrm{d} \mu
+  Q( p = 0 ) \, f(0) + P (q = 0) \, f^\ast(0)
\end{align}
with the convention in \eqref{eq2:fD2} that $0 \cdot \infty = 0$.

\vspace*{0.2cm}
We refer in this paper to the following $f$-divergences:
\begin{enumerate}[1)]
\item {\em Relative entropy}:
\begin{align}
\label{eq: KL divergence}
D(P\|Q) &= D_f(P\|Q).
\end{align}
with
\begin{align}
\label{eq: f for KL}
f(t) = t\, \log t, \quad t>0.
\end{align}

\item {\em Jeffrey's divergence} \cite{jeffreys46}:
\begin{align}
\label{eq1: jeffreys}
J(P\|Q) &:= D( P \| Q) + D(Q\|P) \\
\label{eq2: jeffreys}
& \, = D_f(P\|Q)
\end{align}
with
\begin{align}
\label{f - jeffreys}
f(t) = (t-1) \, \log t,  \quad t>0.
\end{align}

\item {\em Hellinger divergence of order $\alpha \in (0,1) \cup (1, \infty)$}
\cite[Definition~2.10]{LieseV_book87}:
\begin{align} \label{eq: Hel-divergence}
\mathscr{H}_{\alpha}(P \| Q) = D_{f_\alpha}(P \| Q)
\end{align}
with
\begin{align} \label{eq: H as fD}
f_\alpha(t) = \frac{t^\alpha-1}{\alpha-1}, \quad t > 0.
\end{align}
Some of the significance of the Hellinger divergence stems from the following facts:
\begin{enumerate}[a)]
\item
The analytic extension of $\mathscr{H}_{\alpha}(P \| Q)$ at $\alpha=1$ yields
\begin{align}
\label{eq1: KL}
D(P \| Q) = H_1(P\| Q) \, \log e.
\end{align}
\item
The {\em chi-squared divergence} \cite{Pearson1900x} is
the second order Hellinger divergence (see, e.g.,
\cite[p.~48]{LeCam86}), i.e.,
\begin{align}
\label{eq2: chi^2}
\chi^2(P \| Q) = \mathscr{H}_2(P \| Q).
\end{align}
Note that, due to Proposition~\ref{proposition: uniqueness},
\begin{align} \label{eq: chi squared}
\chi^2(P \| Q) = D_f(P\|Q),
\end{align}
where $f \colon (0, \infty) \mapsto \Reals$ can be defined as
\begin{align} \label{eq: f for chi^2}
f(t) = (t-1)^2, \quad t>0.
\end{align}
\item The \textit{squared Hellinger distance} (see, e.g., \cite[p.~47]{LeCam86}),
denoted by $\mathscr{H}^2(P \| Q)$, satisfies the identity
\begin{align}
\label{eq3: Sq Hel}
\mathscr{H}^2(P \| Q) = \tfrac12 \, \mathscr{H}_{\frac12}(P \| Q).
\end{align}
\item
The {\em Bhattacharyya distance} \cite{Kailath67}, denoted by
$B(P\|Q)$, satisfies
\begin{align}
\label{eq: B dist}
B(P\|Q) = \log \frac1{1 - \mathscr{H}^2(P \| Q)}.
\end{align}
\item
The {\em R\'enyi divergence} of order $\alpha \in (0,1) \cup (1,\infty)$ is a
one-to-one transformation of the Hellinger divergence of the same order \cite[(14)]{Csiszar66}:
\begin{align}\label{renyimeetshellinger}
D_\alpha(P \| Q ) = \frac1{\alpha -1} \; \log \bigl( 1 + (\alpha - 1)
\, \mathscr{H}_\alpha(P \| Q) \bigr).
\end{align}
\item
The {\em Alpha-divergence} of order $\alpha$, as it is defined in \cite{AmariN00} and \cite[(4)]{CichockiA10}, is
a generalized relative entropy which (up to a scaling factor) is equal to the Hellinger divergence of the
same order $\alpha$. More explicitly,
\begin{align}
D_{\text{A}}^{(\alpha)}(P\|Q) = \frac1\alpha \, \mathscr{H}_{\alpha}(P \| Q),
\end{align}
where $D_{\text{A}}^{(\alpha)}(\cdot \| \cdot)$ denotes the Alpha-divergence of order $\alpha$.
Note, however, that the Beta and Gamma-divergences in \cite{CichockiA10}, as well as the
generalized divergences in \cite{CichockiCA11} and \cite{CichockiCA15}, are
not $f$-divergences in general.
\end{enumerate}

\item {\em $\chi^s$ divergence for $s \geq 1$} \cite[(2.31)]{LieseV_book87}, and the {\em total variation distance}:
The function
\begin{align}
\label{eq: f - chi^s div}
f_s(t) = |t-1|^s, \quad t>0
\end{align}
results in
\begin{align}
\label{eq: chi^s div}
\chi^s(P\|Q) &=  D_{f_s}(P\|Q).
\end{align}
Specifically, for $s=1$, let
\begin{align} \label{eq: f-TV}
f(t):=f_1(t)=|t-1|, \quad t>0,
\end{align}
and the total variation distance is expressed as an $f$-divergence:
\begin{align}
\label{eq1: TV distance}
|P-Q| &=  D_f(P\|Q).
\end{align}

\item {\em Triangular Discrimination \cite{Topsoe_IT00} (a.k.a. Vincze-Le Cam distance)}:
\begin{align} \label{eq:delta}
\Delta(P\|Q) = D_f(P\|Q)
\end{align}
with
\begin{align} \label{eq:tridiv}
f(t) = \frac{(t-1)^2}{t+1}, \quad t > 0.
\end{align}
Note that
\begin{align}
\label{eq: Delta and chi^2}
\tfrac12 \, \Delta  (P \| Q) & = \chi^2 (P \, \| \, \tfrac12 P + \tfrac12 Q)
= \chi^2 (Q \, \| \, \tfrac12 P + \tfrac12 Q).
\end{align}

\item {\em Lin's measure} \cite[(4.1)]{Lin91}:
\begin{align} \label{eq: Lin91}
L_\theta(P\|Q) &:= H\bigl(\theta P + (1-\theta)Q\bigr) - \theta H(P) - (1-\theta) H(Q) \\
\label{eq2: Lin91}
& \, = \theta \, D\bigl(P \, \| \, \theta P + (1-\theta)Q\bigr) + (1-\theta) \, D\bigl(Q \, \| \, \theta P + (1-\theta)Q \bigr),
\end{align}
for $\theta \in [0,1]$. This measure can be expressed by the following $f$-divergence:
\begin{align}
\label{eq: Lin div as Df}
L_\theta(P\|Q) = D_{f_\theta}(P\|Q),
\end{align}
with
\begin{align}
\label{eq: f of Lin div}
f_{\theta}(t) := \theta t \log t - \bigl(\theta t + 1-\theta \bigr) \, \log \bigl(\theta t + 1-\theta \bigr), \quad t>0.
\end{align}
The special case of \eqref{eq: Lin div as Df} with $\theta = \tfrac12$ gives the
{\em Jensen-Shannon divergence} (a.k.a. capacitory discrimination):
\begin{align}
\label{eq:js1}
\mathrm{JS}(P\|Q) & := L_{\frac12}(P\|Q) \\
\label{eq:js2}
&= \tfrac12 D\bigl(P \, \| \, \tfrac12 P + \tfrac12 Q\bigr) + \tfrac12 D\bigl(Q \, \| \, \tfrac12 P + \tfrac12 Q \bigr).
\end{align}

\item {\em $E_\gamma$ divergence} \cite[p.~2314]{PPV10}: For $\gamma \geq 1$,
\begin{align}
\label{eq1:E_gamma}
E_{\gamma}(P\|Q) &:= \max_{\set{U} \in \mathscr{F}} \bigl( P(\set{U}) - \gamma \, Q(\set{U}) \bigr) \\
\label{eq2:E_gamma}
& \; = \prob[\imath_{P\|Q}(X) > \log \gamma] - \gamma \, \prob[\imath_{P\|Q}(Y) > \log \gamma]
\end{align}
with $X \sim P$ and $Y \sim Q$, and where \eqref{eq2:E_gamma} follows from the Neyman-Pearson
lemma. The $E_\gamma$ divergence can be identified as an $f$-divergence:
\begin{align} \label{eq:Eg f-div}
E_{\gamma}(P \| Q) = D_{f_\gamma}(P\|Q)
\end{align}
with
\begin{align} \label{eq: f for EG}
f_\gamma(t) := (t-\gamma)^+, \quad t > 0
\end{align}
where $(x)^+ := \max\{x,0\}$.
The following relation to the total variation distance holds:
\begin{align} \label{eq:EG-TV}
E_1(P\|Q) = \tfrac12 \, |P-Q|.
\end{align}

\item {\em DeGroot statistical information} (\cite{DeGroot62}, \cite{LieseV_IT2006}):
For $\omega \in (0,1)$,
\begin{align} \label{eq:DG f-div}
\mathcal{I}_\omega(P\|Q) = D_{\phi_\omega}(P\|Q)
\end{align}
with
\begin{align} \label{eq: f for DG}
\phi_\omega(t) = \min \{\omega, 1-\omega\}
- \min \{\omega t, 1-\omega\}, \quad t > 0.
\end{align}
The following relation to the total variation distance holds:
\begin{align} \label{eq:DG-TV}
\mathcal{I}_{\frac12}(P\|Q) = \tfrac14 \, |P-Q|,
\end{align}
and the DeGroot statistical information and the $E_\gamma$ divergence
are related as follows \cite[(384)]{ISSV16}:
\begin{align}
\label{eq: DG-EG}
\mathcal{I}_\omega(P\|Q) =
\begin{dcases}
\omega \, E_{\frac{1-\omega}{\omega}}(P\|Q), & \quad \mbox{$\omega \in \bigl(0, \tfrac12\bigr]$,} \\[0.2cm]
(1-\omega) \, E_{\frac{\omega}{1-\omega}}(Q\|P), & \quad \mbox{$\omega \in \bigl(\tfrac12, 1)$.}
\end{dcases}
\end{align}
\end{enumerate}

\vspace*{0.2cm}
\section{New Integral Representations of $f$-divergences}
\label{section: integral representation}

The main result in this section provides new integral representations
of $f$-divergences as a function of the relative information spectrum
(see Definition~\ref{def:RIS}). The reader is referred to other integral
representations (see \cite[Section~2]{Liese2012}, \cite[Section~5]{ReidW11},
\cite[Section~5.B]{ISSV16}, and references therein), expressing a general
$f$-divergence by means of the DeGroot statistical information or the
$E_\gamma$ divergence.

\begin{lemma} \label{lemma: g}
Let $f \in \set{C}$ be a strictly convex function at~1. Let $g \colon \Reals \mapsto \Reals$ be defined as
\begin{align} \label{eq:g}
g(x) := \exp(-x) \, f\bigl(\exp(x)\bigr) - f'_{+}(1) \, \bigl(1 - \exp(-x) \bigr), \qquad x \in \Reals
\end{align}
where $f'_{+}(1)$ denotes the right-hand derivative of $f$ at 1 (due to the convexity of $f$ on $(0, \infty)$,
it exists and it is finite). Then, the function $g$ is non-negative, it is strictly monotonically decreasing
on $(-\infty, 0]$, and it is strictly monotonically increasing on $[0, \infty)$ with $g(0)=0$.
\end{lemma}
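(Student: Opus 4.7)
The natural move is the change of variables $t = e^x$. Set $F(t) := f(t) - f'_+(1)(t-1)$ and $h(t) := F(t)/t$ for $t > 0$, so that $g(x) = h(e^x)$. Because $x \mapsto e^x$ is a strictly increasing bijection from $\Reals$ onto $(0,\infty)$ sending $0$ to $1$, the lemma becomes equivalent to the following statements about $h$: (i) $h \geq 0$ on $(0,\infty)$ with $h(1) = 0$; and (ii) $h$ is strictly decreasing on $(0,1]$ and strictly increasing on $[1,\infty)$.

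For (i), $F$ is convex on $(0,\infty)$ as the sum of the convex function $f$ and an affine one, and $F(1) = f(1) = 0$. Since $F'_+(1) = f'_+(1) - f'_+(1) = 0$, the horizontal line through the origin is a supporting line of $F$ at $1$, and convexity yields $F \geq 0$, hence $h \geq 0$; the equality $h(1) = 0$ is immediate. Strict convexity of $f$ at $1$ is precisely the statement that every supporting line of $f$ at $1$ touches the graph only there, which upgrades this to $F(t) > 0$ for all $t \neq 1$; I will rely on this strengthening below.

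For the monotonicity claims in (ii), introduce $\phi(t) := tF'_+(t) - F(t)$. Convexity of $F$ makes $F'_+$ non-decreasing, and a brief calculation (or the distributional identity $\phi'(t) = tF''(t) \geq 0$) shows $\phi$ is non-decreasing on $(0,\infty)$; moreover, $\phi(1) = F'_+(1) - F(1) = 0$. Hence $\phi \leq 0$ on $(0,1]$ and $\phi \geq 0$ on $[1,\infty)$, and combined with $h'(t) = \phi(t)/t^2$ (valid wherever $h$ is differentiable, which is almost everywhere) together with the absolute continuity of $h$ on compact subintervals of $(0,\infty)$, this delivers the non-strict monotonicities of $h$ on the two sides of $1$.

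The main obstacle, and the point where strict convexity of $f$ at $1$ is indispensable, is upgrading these to \emph{strict} monotonicities. My plan is a contradiction argument: if $h(s) = h(t)$ for some $1 \leq s < t$, then by the non-strict monotonicity just established $h$ is constant on $[s,t]$, so $F(u) = cu$ on $[s,t]$ for $c := F(s)/s \geq 0$. Applying convexity of $F$ to the three points $1, s, t$, together with $F(1) = 0$, forces $c = 0$ by an elementary slope inequality, whence $F \equiv 0$ on $[s,t]$, contradicting the strict positivity $F(t) > 0$ for $t \neq 1$ obtained in step (i). A symmetric three-point argument using $s, t, 1$ with $0 < s < t \leq 1$ handles the interval $(0,1]$, completing the proof.
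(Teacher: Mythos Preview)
Your proof is correct. Both your argument and the paper's begin with the same substitution, writing $g(x)$ in terms of $\widetilde{f}(t) := f(t) - f'_+(1)(t-1)$; in your notation $h(t) = \widetilde{f}(t)/t$, while the paper writes $g(x) = (\widetilde{f})^\ast(e^{-x})$ via the conjugate $(\widetilde{f})^\ast(t) = t\,\widetilde{f}(1/t)$, which is the same function under $t \leftrightarrow 1/t$. The difference is in how the strict monotonicity is obtained. The paper cites an external result (\cite[Theorem~3]{LieseV_IT2006}) to assert that both $\widetilde{f}$ and $(\widetilde{f})^\ast$ are non-negative and strictly decreasing on $(0,1]$, and then uses the conjugate relation to deduce that $(\widetilde{f})^\ast$ is strictly increasing on $[1,\infty)$. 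You instead prove everything from scratch: weak monotonicity via the sign of $\phi(t) = tF'_+(t) - F(t)$ (which is indeed non-decreasing, e.g.\ because $\phi(b)-\phi(a) \geq a\bigl(F'_+(b)-F'_+(a)\bigr) \geq 0$), and strict monotonicity via the three-point slope contradiction. Your route is more elementary and self-contained; the paper's is shorter by outsourcing the monotonicity step to a reference and exploiting the conjugate symmetry to handle both half-lines at once.
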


\begin{proof}
For any function $u \in \set{C}$, let $\widetilde{u} \in \set{C}$ be given by
\begin{align} \label{eq:widetilde u}
\widetilde{u}(t) = u(t) - u'_{+}(1) (t-1), \quad t \in (0, \infty),
\end{align}
and let $u^\ast \in \set{C}$ be the conjugate function, as given in \eqref{eq: conjugate f}.
The function $g$ in \eqref{eq:g} can be expressed in the form
\begin{align}
\label{eq:decompose g}
g(x) = (\widetilde{f})^{\ast} \bigl( \exp(-x) \bigr), \quad x \in \Reals,
\end{align}
as it is next verified. For $t>0$, we get from \eqref{eq: conjugate f} and \eqref{eq:widetilde u},
\begin{align}
\label{eq:verify g}
(\widetilde{f})^{\ast}(t) = t \widetilde{f}\left(\frac1t\right)
= t f\left( \frac1t \right) + f'_{+}(1) \, (t-1),
\end{align}
and the substitution $t := \exp(-x)$ for $x \in \Reals$
yields \eqref{eq:decompose g} in view of \eqref{eq:g}.

By assumption, $f \in \set{C}$ is strictly convex at~1, and therefore these properties are inherited to
$\widetilde{f}$. Since also $\widetilde{f}(1) = \widetilde{f}'(1) = 0$, it follows from \cite[Theorem~3]{LieseV_IT2006}
that both $\widetilde{f}$ and $\widetilde{f}^\ast$ are non-negative on $(0, \infty)$, and they are also
strictly monotonically decreasing on $(0,1]$. Hence, from \eqref{eq: conjugate f}, it follows that the
function $(\widetilde{f})^\ast$ is strictly monotonically increasing on $[1, \infty)$.
Finally, the claimed properties of the function $g$ follow from \eqref{eq:decompose g}, and in view
of the fact that the function $(\widetilde{f})^\ast$ is non-negative with $(\widetilde{f})^\ast(1)=0$,
strictly monotonically decreasing on $(0,1]$ and strictly monotonically increasing on $[1, \infty)$.
\end{proof}

\begin{lemma} \label{lemma: 1st int. for f-div}
Let $f \in \set{C}$ be a strictly convex function at~1, and let $g \colon \Reals \mapsto \Reals$ be as
in \eqref{eq:g}. Let
\begin{align}
\label{eq:a}
a & := \lim_{x \to \infty} g(x) \in (0, \infty], \\
\label{eq:b}
b & := \lim_{x \to -\infty} g(x) \in (0, \infty],
\end{align}
and let $\ell_1 \colon [0, a) \mapsto [0, \infty)$ and $\ell_2 \colon [0, b) \mapsto (-\infty, 0]$ be
the two inverse functions of $g$. Then,
\begin{align}  \label{eq: 1st int. for f-div}
D_f(P \| Q) = \int_0^a \bigl[ 1 - \mathds{F}_{P \| Q}\bigl( \ell_1(t) \bigr) \bigr] \, \mathrm{d}t
+ \int_0^b \mathds{F}_{P \| Q}\bigl( \ell_2(t) \bigr) \, \mathrm{d}t.
\end{align}
\end{lemma}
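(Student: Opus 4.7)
The plan is to express $D_f(P\|Q)$ as the expectation of $g$ applied to the relative information under $P$, and then to convert that expectation into the two tail integrals in \eqref{eq: 1st int. for f-div} by a layer-cake argument using the shape of $g$ established in Lemma~\ref{lemma: g}. First, I would use the invariance in Proposition~\ref{proposition: uniqueness} to replace $f$ by $\widetilde{f}(t) := f(t) - f'_{+}(1)(t-1)$: this $\widetilde{f}$ lies in $\set{C}$, is strictly convex at 1, and satisfies $\widetilde{f}(1) = \widetilde{f}'_{+}(1) = 0$, hence $\widetilde{f} \geq 0$. Then $D_f = D_{\widetilde{f}}$, and by \eqref{eq:decompose g} the function $g$ depends on $f$ only through $\widetilde{f}$, so both sides of \eqref{eq: 1st int. for f-div} are invariant under this normalization; I may therefore assume $f'_{+}(1) = 0$ from now on.

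Second, with densities $p, q$ of $P, Q$ relative to a dominating measure $\mu$, the identity $q f(p/q) = p \, f^{\ast}(q/p) = p \, g(\imath_{P\|Q})$ on $\{pq > 0\}$ (the last step from Lemma~\ref{lemma: g}), combined with the boundary conventions in Definition~\ref{def:fD} and the identification $a = f^{\ast}(0)$ arising from \eqref{eq: conjugate f at 0} and \eqref{eq:a}, yields $D_f(P\|Q) = \expectation[g(W)]$, where $W := \imath_{P\|Q}(X)$ with $X \sim P$, so the law of $W$ is described by $\mathds{F}_{P\|Q}$ on $\Reals$ (together possibly with an atom of mass $P(q = 0)$ at $+\infty$ carrying the value $g(+\infty) = a$). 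Since $g \geq 0$, a layer-cake formula gives $\expectation[g(W)] = \int_0^\infty \prob[g(W) > t] \, \mathrm{d}t$; splitting the event $\{g(W) > t\}$ according to the sign of $W$ and using the strict monotonicity of $g$ on $[0, \infty)$ and on $(-\infty, 0]$ from Lemma~\ref{lemma: g}, this event becomes the disjoint union $\{W > \ell_1(t)\} \cup \{W < \ell_2(t)\}$, nonempty only for $t < a$ and $t < b$ respectively. Hence
\begin{align*}
\expectation[g(W)] = \int_0^a \prob[W > \ell_1(t)] \, \mathrm{d}t + \int_0^b \prob[W < \ell_2(t)] \, \mathrm{d}t,
\end{align*}
and replacing the two probabilities by $1 - \mathds{F}_{P\|Q}(\ell_1(t))$ and $\mathds{F}_{P\|Q}(\ell_2(t))$ is legitimate because $\mathds{F}_{P\|Q}$ has at most countably many jumps, so these equalities hold for Lebesgue-a.e.~$t$ and the integrals coincide.

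The main obstacle I expect is the careful bookkeeping of the singular parts in the Lebesgue decomposition of $(P, Q)$: one must verify that the conventions $0 f(a/0) = a f^{\ast}(0)$ and $0 f(0/0) = 0$ of Definition~\ref{def:fD} are correctly represented by the limiting behavior of $g$ at $\pm \infty$ and by the definitions \eqref{eq:a}--\eqref{eq:b} of $a$ and $b$, including when $a$ or $b$ equals $+\infty$, where the corresponding tail integral must be interpreted as an improper Lebesgue integral. The reduction to $\widetilde{f}$ in the first step is what pins down these limits intrinsically in terms of the divergence, removing the linear ambiguity $c(t-1)$ that Proposition~\ref{proposition: uniqueness} permits in the choice of $f$.
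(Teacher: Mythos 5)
Your proposal is correct and follows essentially the same route as the paper's proof: reduce to $\widetilde{f}(t)=f(t)-f'_{+}(1)(t-1)$ via Proposition~\ref{proposition: uniqueness}, identify $D_f(P\|Q)$ with $\expectation\bigl[g\bigl(\imath_{P\|Q}(X)\bigr)\bigr]$ for $X \sim P$ through the conjugate $(\widetilde{f})^\ast$ and \eqref{eq:decompose g}, and then apply the layer-cake formula together with the monotonicity of $g$ from Lemma~\ref{lemma: g} to split $\{g>t\}$ into the two tail events inverted by $\ell_1$ and $\ell_2$. Your additional remarks on the countable jump set of $\mathds{F}_{P\|Q}$ and on the boundary conventions of Definition~\ref{def:fD} only make explicit points the paper treats implicitly.
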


\begin{proof}
In view of Lemma~\ref{lemma: g}, it follows that $\ell_1 \colon [0, a) \mapsto [0, \infty)$ is strictly monotonically increasing
and $\ell_2 \colon [0, b) \mapsto (-\infty, 0]$ is strictly monotonically decreasing with $\ell_1(0) = \ell_2(0) = 0$.

Let $X \sim P$, and let $V := \exp \bigl( \imath_{P\|Q}(X) \bigr)$. Then, we have
\begin{align}
\label{eq: linear shift}
D_f(P\|Q) &= D_{\widetilde{f}}(P \| Q) \\
\label{eq: P and Q switched}
&= D_{(\widetilde{f})^\ast}(Q\|P) \\
&= \int (\widetilde{f})^\ast \bigl( \exp\bigl(\imath_{Q\|P}(x) \bigr) \bigr) \, \mathrm{d}P(x) \\
\label{eq: plus/minus RI}
&= \int (\widetilde{f})^\ast \bigl( \exp\bigl(-\imath_{P\|Q}(x) \bigr) \bigr) \, \mathrm{d}P(x) \\
\label{eq: by the expression for g}
&= \int g \bigl( \imath_{P\|Q}(x) \bigr) \, \mathrm{d}P(x) \\
\label{eq: by V}
&= \expectation\bigl[ g(V) \bigr] \\
\label{eq: expectation of non-negative RV}
&= \int_0^{\infty} \prob \bigl[ g(V) > t \bigr] \, \mathrm{d}t \\
\label{eq: sum pf prob}
&= \int_0^a \prob \bigl[V \geq 0, \, g(V) > t \bigr] \, \mathrm{d}t
   + \int_0^b \prob \bigl[ V < 0, \, g(V)>t \bigr] \, \mathrm{d}t \\
\label{eq: ell's}
&= \int_0^a \prob \bigl[V > \ell_1(t) \bigr] \, \mathrm{d}t
   + \int_0^b \prob \bigl[ V \leq \ell_2(t) \bigr] \, \mathrm{d}t \\
\label{eq: expression with RIS}
&= \int_0^a \bigl[ 1 - \mathds{F}_{P \| Q}\bigl( \ell_1(t) \bigr) \bigr] \, \mathrm{d}t
   + \int_0^b \mathds{F}_{P \| Q}\bigl( \ell_2(t) \bigr) \, \mathrm{d}t
\end{align}
where \eqref{eq: linear shift} relies on Proposition~\ref{proposition: uniqueness};
\eqref{eq: P and Q switched} relies on Proposition~\ref{proposition: conjugate};
\eqref{eq: plus/minus RI} follows from \eqref{eq: asymmetry RI};
\eqref{eq: by the expression for g} follows from \eqref{eq:decompose g};
\eqref{eq: by V} holds by the definition of the random variable $V$;
\eqref{eq: expectation of non-negative RV} holds since, in view of Lemma~\ref{lemma: g},
$Z := g(V) \geq 0$, and $\expectation[Z] = \int_0^{\infty} \prob[Z > t] \, \mathrm{d}t$
for any non-negative random variable $Z$;
\eqref{eq: sum pf prob} holds in view of the monotonicity properties of $g$ in Lemma~\ref{lemma: g},
the definition of $a$ and $b$ in \eqref{eq:a} and \eqref{eq:b}, and by expressing the event
$\{g(V) > t\}$ as a union of two disjoint events;
\eqref{eq: ell's} holds again by the monotonicity properties of $g$ in Lemma~\ref{lemma: g},
and by the definition of its two inverse functions $\ell_1$ and $\ell_2$ as above;
in \eqref{eq: expectation of non-negative RV}--\eqref{eq: ell's}
we are free to substitute $>$ by $\geq$, and $<$ by $\leq$; finally, \eqref{eq: expression with RIS}
holds by the definition of the relative information spectrum in \eqref{eq:RIS}.
\end{proof}

\begin{remark} \label{remark: invariance of g}
The function $g \colon \Reals \mapsto \Reals$ in \eqref{eq:g} is invariant to the mapping
$f(t) \mapsto f(t) + c \, (t-1)$, for $t>0$, with an arbitrary $c \in \Reals$. This invariance
of $g$ (and, hence, also the invariance of its inverse functions $\ell_1$ and $\ell_2$) is
well expected in view of Proposition~\ref{proposition: uniqueness} and Lemma~\ref{lemma: 1st int. for f-div}.
\end{remark}

\begin{example} \label{example: chi-squared divergence}
For the chi-squared divergence in \eqref{eq: chi squared}, letting $f$ be as in \eqref{eq: f for chi^2},
it follows from \eqref{eq:g} that
\begin{align}
g(x) = 4 \sinh^2\left(\tfrac1{2 \log e} \, x \right), \quad x \in \Reals,
\end{align}
which yields, from \eqref{eq:a} and \eqref{eq:b}, $a=b=\infty$. Calculation of the two inverse functions of $g$,
as defined in Lemma~\ref{lemma: 1st int. for f-div}, yields the following closed-form expression:
\begin{align} \label{ell_1,2 chi^2}
\ell_{1,2}(u) &= \pm 2 \log \left(\frac{u+\sqrt{u+4}}{2}\right), \quad u \geq 0.
\end{align}
Substituting \eqref{ell_1,2 chi^2} into \eqref{eq: 1st int. for f-div} provides an integral
representation of $\chi^2(P\|Q)$.
\end{example}

\begin{lemma} \label{lemma: identity with RIS}
\begin{align} \label{eq: int. identity with RIS}
\int_0^\infty \frac{\mathds{F}_{P \| Q}(\log \beta)}{\beta^2} \, \mathrm{d}\beta = 1.
\end{align}
\end{lemma}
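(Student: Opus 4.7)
The plan is to reduce the identity to a familiar expectation via the change of variables $\beta = e^x$, which gives $\mathrm{d}\beta/\beta^2 = e^{-x}\,\mathrm{d}x$ and transforms the left-hand side into
$$\int_{-\infty}^\infty \mathds{F}_{P\|Q}(x)\, e^{-x}\,\mathrm{d}x.$$
By Definition~\ref{def:RIS}, $\mathds{F}_{P\|Q}(x) = \expectation\bigl[\mathbf{1}\{W \leq x\}\bigr]$ is the cumulative distribution function, under $P$, of the random variable $W := \imath_{P\|Q}(X)$ with $X \sim P$.

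Next I would swap the Lebesgue integral and the $P$-expectation via Tonelli's theorem (the integrand is non-negative), which yields
$$\expectation\!\left[\int_W^\infty e^{-x}\,\mathrm{d}x\right] = \expectation\!\left[e^{-W}\right] = \expectation\!\left[e^{-\imath_{P\|Q}(X)}\right].$$
The identification $e^{-\imath_{P\|Q}(X)} = \frac{\mathrm{d}Q}{\mathrm{d}P}(X)$ $P$-almost surely, which follows from the defining relation \eqref{eq2:RI} together with the asymmetry property \eqref{eq: asymmetry RI}, then reduces the last expectation to $\int \frac{\mathrm{d}Q}{\mathrm{d}P}\,\mathrm{d}P = \int \mathrm{d}Q = 1$, as claimed.

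The only delicate point is handling the atom $\{W = +\infty\}$ that appears when $P \not\ll Q$; on that atom $e^{-W} = 0$ by convention and contributes nothing to the expectation, so Tonelli applies unconditionally because non-negativity is preserved. Granted that $Q$ is supported on the support of $P$ (consistent with the standing setup of this section, as otherwise one only obtains $Q(\{\mathrm{d}P/\mathrm{d}R>0\})$ on the right-hand side), no further obstacle arises and the proof boils down essentially to one application of Tonelli together with the definition of the Radon-Nikodym derivative.
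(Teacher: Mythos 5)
Your proof is correct and essentially identical to the paper's: both reduce the left side, via a change of variables and an interchange of integration and expectation (the paper substitutes $u=1/\beta$ and invokes the tail-integral formula $\expectation[Z]=\int_0^\infty \prob[Z\geq u]\,\mathrm{d}u$, which is the same Tonelli step), to $\expectation\bigl[\exp\bigl(\imath_{Q\|P}(X)\bigr)\bigr]$ with $X\sim P$, which equals $1$. Your closing caveat is apt: in general that expectation equals $Q\bigl(\mathrm{d}P/\mathrm{d}R>0\bigr)$, so the identity implicitly assumes $Q\ll P$ --- a point the paper's proof also passes over silently.
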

\begin{proof}
Let $X \sim P$. Then, we have
\begin{align}
\int_0^\infty \frac{\mathds{F}_{P \| Q}(\log \beta)}{\beta^2} \, \mathrm{d}\beta
\label{eq1: int. identity with RIS}
& = \int_0^\infty \frac1{\beta^2} \, \prob[ \imath_{P\|Q}(X) \leq \log \beta ] \, \mathrm{d}\beta \\[0.1cm]
\label{eq2: int. identity with RIS}
& = \int_0^\infty \frac1{\beta^2} \, \prob\biggl[ \exp\bigl(\imath_{Q\|P}(X)\bigr) \geq \frac1\beta \biggr] \, \mathrm{d}\beta \\[0.1cm]
\label{eq3: int. identity with RIS}
& = \int_0^\infty \, \prob\bigl[ \exp\bigl(\imath_{Q\|P}(X)\bigr) \geq u \bigr] \, \mathrm{d}u \\[0.1cm]
\label{eq4: int. identity with RIS}
& = \expectation\bigl[ \exp\bigl(\imath_{Q\|P}(X)\bigr)\bigr] \\
\label{eq5: int. identity with RIS}
& = 1,
\end{align}
where \eqref{eq1: int. identity with RIS} holds by \eqref{eq:RIS};
\eqref{eq2: int. identity with RIS} follows from \eqref{eq: asymmetry RI};
\eqref{eq3: int. identity with RIS} holds by the substitution $u := \frac1\beta$;
\eqref{eq4: int. identity with RIS} holds since $\exp\bigl(\imath_{Q\|P}(X)\bigr) \geq 0$,
and finally \eqref{eq5: int. identity with RIS} holds since $X \sim P$.
\end{proof}

\begin{remark}
Unlike Example~\ref{example: chi-squared divergence}, in general, the inverse functions
$\ell_1$ and $\ell_2$ in Lemma~\ref{lemma: 1st int. for f-div} are not expressible in
closed form, motivating our next integral representation in Theorem~\ref{theorem: Int. rep.}.
\end{remark}

\vspace*{0.2cm}
The following theorem provides our main result in this section.
\begin{theorem} \label{theorem: Int. rep.}
The following integral representations of $f$-divergences hold:
\begin{enumerate}[a)]
\item
\label{theorem: int. rep. - part 1}
Let
\begin{itemize}
\item $f \in \set{C}$ be differentiable on $(0, \infty)$;
\item $w_f \colon (0, \infty) \mapsto [0, \infty)$ be the non-negative weight
function given, for $\beta>0$, by
\begin{align} \label{eq: weight function}
w_f(\beta) & := \frac1{\beta} \left| f'(\beta) - \frac{f(\beta) + f'(1)}{\beta} \right|;
\end{align}
\item the function $G_{P\|Q} \colon (0, \infty) \mapsto [0,1]$ be given by
\begin{align} \label{eq: G function}
G_{P\|Q}(\beta) :=
\begin{dcases}
1 - \mathds{F}_{P \| Q}(\log \beta), & \beta \in [1, \infty), \\[0.1cm]
\mathds{F}_{P \| Q}(\log \beta), & \beta \in (0,1).
\end{dcases}
\end{align}
\end{itemize}
Then,
\begin{align} \label{eq: new int rep Df}
D_f(P \| Q) = \langle w_f, \, G_{P\|Q} \rangle
= \int_0^{\infty} w_f(\beta) \, G_{P\|Q}(\beta) \, \mathrm{d}\beta.
\end{align}
\item  \label{theorem: int. rep. - part 2}
More generally, for an arbitrary $c \in \Reals$, let $\widetilde{w}_{f,c} \colon (0, \infty) \mapsto \Reals$
be a modified real-valued function defined as
\begin{align}  \label{eq: generalized w_f}
\widetilde{w}_{f,c}(\beta) := w_f(\beta) + \frac{c}{\beta^2} \, \bigl( 1\{\beta \geq 1\} - 1\{0 < \beta < 1\} \bigr).
\end{align}
Then,
\begin{align} \label{eq2: new int rep Df}
D_f(P \| Q) = \langle \widetilde{w}_{f,c}, \, G_{P\|Q} \rangle.
\end{align}
\end{enumerate}
\end{theorem}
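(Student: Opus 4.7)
The plan is to derive part (a) from the integral representation in Lemma~\ref{lemma: 1st int. for f-div} by performing the change of variable $t = g(x)$ in each of its two summands, and then the substitution $\beta = \exp(x)$ to bring the integrals onto $(0,\infty)$ with weight $w_f$ acting on the relative information spectrum through $G_{P\|Q}$. I would first compute, using differentiability of $f$, that
\begin{align*}
g'(x) \;=\; -\exp(-x)\,f(\exp(x)) + f'(\exp(x)) - f'(1)\,\exp(-x)
\;=\; f'(\beta) - \frac{f(\beta) + f'(1)}{\beta}, \qquad \beta := \exp(x),
\end{align*}
and observe that, by Lemma~\ref{lemma: g}, this derivative is non-negative on $[0,\infty)$ and non-positive on $(-\infty,0]$. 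Consequently, $|g'(\log\beta)| = \beta\, w_f(\beta)$ for all $\beta > 0$, with the sign of $g'(\log\beta)$ matching the branch distinction $\beta \geq 1$ versus $0 < \beta < 1$.

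Next, in the first summand of \eqref{eq: 1st int. for f-div} the variable $\ell_1(t)$ ranges over $[0,\infty)$ as $t$ sweeps $[0,a)$, so the change of variable $t = g(x)$ with $x = \ell_1(t)$ yields $\mathrm{d}t = g'(x)\,\mathrm{d}x$ (which is non-negative). The further substitution $\beta = \exp(x)$, $\mathrm{d}x = \mathrm{d}\beta/\beta$, converts this integral into $\int_1^\infty [1 - \mathds{F}_{P\|Q}(\log\beta)]\, w_f(\beta)\,\mathrm{d}\beta$. Symmetrically, the second summand, where $\ell_2(t)$ sweeps $(-\infty,0]$ as $t$ runs over $[0,b)$, becomes $\int_0^1 \mathds{F}_{P\|Q}(\log\beta)\, w_f(\beta)\,\mathrm{d}\beta$ after applying $|g'(\log\beta)| = \beta\, w_f(\beta)$. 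Adding the two pieces and recalling the definition of $G_{P\|Q}$ in \eqref{eq: G function} gives \eqref{eq: new int rep Df}.

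For part (b), I would show that the correction added by $\widetilde{w}_{f,c} - w_f$ integrates to zero against $G_{P\|Q}$. Explicitly,
\begin{align*}
\int_0^\infty \bigl[\widetilde{w}_{f,c}(\beta) - w_f(\beta)\bigr]\, G_{P\|Q}(\beta)\,\mathrm{d}\beta
= c\int_1^\infty \frac{1 - \mathds{F}_{P\|Q}(\log\beta)}{\beta^2}\,\mathrm{d}\beta
- c\int_0^1 \frac{\mathds{F}_{P\|Q}(\log\beta)}{\beta^2}\,\mathrm{d}\beta.
\end{align*}
Using $\int_1^\infty \beta^{-2}\,\mathrm{d}\beta = 1$ on the first term and then invoking Lemma~\ref{lemma: identity with RIS} to rewrite the sum of the remaining two integrals as $-c\int_0^\infty \mathds{F}_{P\|Q}(\log\beta)\beta^{-2}\,\mathrm{d}\beta = -c$, the whole expression collapses to $c - c = 0$, which combined with part (a) yields \eqref{eq2: new int rep Df}.

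The main obstacle I anticipate is not any single calculation but the careful bookkeeping of signs and limits under the two changes of variables on the different branches: making sure that $g'$ has the correct sign on each branch (so that $|g'(\log\beta)|$ really equals $\beta w_f(\beta)$ everywhere), that the endpoints $t=a,b$ map correctly to $\beta = \infty, 0$, and that the two branches reassemble exactly into the piecewise $G_{P\|Q}$. A secondary point worth verifying is that the identity $f'(1) = f'_+(1)$ used in $w_f$ is legitimate under the differentiability hypothesis of part (a), so that the weight function in \eqref{eq: weight function} matches the quantity appearing in the definition of $g$ in \eqref{eq:g}.
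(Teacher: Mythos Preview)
Your argument for part~(a) is essentially the paper's own: substitute $t=g(\log\beta)$ in Lemma~\ref{lemma: 1st int. for f-div}, identify $g'(\log\beta)/\beta$ with the signed version of $w_f$, and split according to $\beta\gtrless 1$. Part~(b) is likewise identical to the paper's, reducing to Lemma~\ref{lemma: identity with RIS}.

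There is, however, a genuine gap. Both Lemma~\ref{lemma: g} and Lemma~\ref{lemma: 1st int. for f-div}, on which your argument rests from the very first line, assume that $f$ is \emph{strictly} convex at~$1$; the theorem assumes only that $f\in\set{C}$ is differentiable on $(0,\infty)$. Without strict convexity at~$1$, the inverse branches $\ell_1,\ell_2$ need not exist, so you cannot invoke \eqref{eq: 1st int. for f-div} as your starting point. This is not a vacuous concern: for the $\chi^s$ divergence with $s>2$ one has $f_s\in\set{C}$ differentiable with $f_s''(1)=0$, so $f_s$ is not strictly convex at~$1$ (cf.\ Remark~\ref{remark: strict convexity}).

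The paper closes this gap by a perturbation argument: first prove \eqref{eq: new int rep Df} for the strictly convex function $s(t):=f(t)+(t^2-1)$, then apply the already-established strictly-convex case to $t\mapsto t^2-1$ separately to obtain the integral representation of $\chi^2(P\|Q)$, and subtract. Since $D_s=D_f+\chi^2$ and $w_s=w_f+w_{t^2-1}$ (the latter equality needing the sign information you already have for $\overline{w}_f$), the difference recovers \eqref{eq: new int rep Df} for the original $f$. You should add this step; the rest of your plan then goes through unchanged.
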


\begin{proof}
We start by proving the special integral representation in \eqref{eq: new int rep Df},
and then extend our proof to the general representation in \eqref{eq2: new int rep Df}.
\begin{enumerate}[a)]
\item We first assume an additional requirement that $f$ is strictly
convex at~1. In view of Lemma~\ref{lemma: 1st int. for f-div},
\begin{align}
\label{eq: ell_1 of g}
& \ell_1 \bigl( g(u) \bigr) = u, \quad u \in [0, \infty), \\
\label{eq: ell_2 of g}
& \ell_2 \bigl( g(u) \bigr) = u, \quad u \in (-\infty, 0].
\end{align}
Since by assumption $f \in \set{C}$ is differentiable on $(0, \infty)$
and strictly convex at~1,
the function $g$ in \eqref{eq:g} is differentiable on $\Reals$.
In view of \eqref{eq: ell_1 of g} and \eqref{eq: ell_2 of g},
substituting $t := g \bigl( \log \beta \bigr)$ in \eqref{eq: 1st int. for f-div}
for $\beta > 0$ implies that
\begin{align} \label{eq: int with mod w_f}
D_f(P \| Q) = \int_1^\infty \bigl[ 1 - \mathds{F}_{P \| Q}\bigl( \log \beta \bigr) \bigr] \,
\overline{w}_f(\beta) \, \mathrm{d}\beta
- \int_0^1 \mathds{F}_{P \| Q}\bigl( \log \beta \bigr) \, \overline{w}_f(\beta) \, \mathrm{d}\beta,
\end{align}
where $\overline{w}_f \colon (0, \infty) \mapsto \Reals$ is given by
\begin{align} \label{eq: mod w_f - g}
\overline{w}_f(\beta) & := \frac{g'\bigl( \log \beta \bigr)}{\beta} \, \log \mathrm{e} \\
\label{eq: mod w_f - f}
& = \frac1\beta \left[ f'(\beta) - \frac{f(\beta) + f'(1)}{\beta} \right]
\end{align}
for $\beta>0$, where \eqref{eq: mod w_f - f} follows from \eqref{eq:g}.
Due to the monotonicity properties of $g$ in Lemma~\ref{lemma: g},
\eqref{eq: mod w_f - g} implies that $\overline{w}_f(\beta) \geq 0$
for $\beta \geq 1$, and $\overline{w}_f(\beta) < 0$ for
$\beta \in (0,1)$. Hence, the weight function $w_f$ in
\eqref{eq: weight function} satisfies
\begin{align} \label{eq: w_f/ mod w_f}
w_f(\beta) = \bigl| \overline{w}_f(\beta) \bigr| = \overline{w}_f(\beta) \, \bigl( 1\{\beta \geq 1\}
- 1\{0 < \beta < 1\} \bigr), \quad \beta>0.
\end{align}
The combination of \eqref{eq: G function}, \eqref{eq: int with mod w_f} and \eqref{eq: w_f/ mod w_f}
gives the required result in \eqref{eq: new int rep Df}.

\par
We now extend the result in \eqref{eq: new int rep Df} when $f \in \set{C}$ is differentiable on $(0, \infty)$,
but not necessarily strictly convex at~1.
To that end, let $s \colon (0, \infty) \mapsto \Reals$ be defined as
\begin{align} \label{eq: s function}
s(t) := f(t) + (t^2-1), \quad t>0.
\end{align}
This implies that $s \in \set{C}$ is differentiable on $(0, \infty)$, and it is also strictly convex at~1.
In view of the proof of \eqref{eq: new int rep Df} when $f$ is strict convexity of $f$ at~1, the application
of this result to the function $s$ in \eqref{eq: s function} yields
\begin{align}
\label{eq: apply for s}
D_s(P\|Q) = \langle w_s,  \, G_{P\|Q} \rangle.
\end{align}
In view of \eqref{eq:fD}, \eqref{eq: Hel-divergence}, \eqref{eq: H as fD}, \eqref{eq2: chi^2}
and \eqref{eq: s function},
\begin{align}
\label{eq: D_s}
D_s(P\|Q) = D_f(P\|Q) + \chi^2(P\|Q);
\end{align}
from \eqref{eq: weight function}, \eqref{eq: w_f/ mod w_f}, \eqref{eq: s function}
and the convexity and differentiability of $f \in \set{C}$, it follows that the weight function
$w_s \in (0, \infty) \mapsto [0, \infty)$ satisfies
\begin{align}
\label{eq: w_s}
w_s(\beta) =  w_f(\beta) + \left(1-\frac1{\beta^2}\right) \left(1\{\beta \geq 1\} - 1\{0<\beta<1\}\right)
\end{align}
for $\beta>0$. Furthermore, by applying the result in \eqref{eq: new int rep Df} to the chi-squared
divergence $\chi^2(P\|Q)$ in \eqref{eq2: chi^2} whose corresponding function $f_2(t) := t^2-1$ for $t>0$
is strictly convex at~1, we obtain
\begin{align}
\label{eq: application to chi^2}
\chi^2(P\|Q) = \int_0^\infty \left(1-\frac1{\beta^2}\right) \left(1\{\beta \geq 1\} - 1\{0<\beta<1\}\right)
\, G_{P\|Q}(\beta) \, \mathrm{d}\beta.
\end{align}
Finally, the combination of \eqref{eq: apply for s}--\eqref{eq: application to chi^2}, yields
$D_f(P\|Q) = \langle w_f,  \, G_{P\|Q} \rangle$; this asserts that \eqref{eq: new int rep Df} also holds
by relaxing the condition that $f$ is strictly convex at~1.

\item In view of \eqref{eq: G function}, \eqref{eq: new int rep Df} and \eqref{eq: generalized w_f},
in order to prove \eqref{eq2: new int rep Df} for an arbitrary $c \in \Reals$, it is required to prove
the identity
\begin{align} \label{eq7: identity RIS}
\int_1^{\infty} \frac{1-\mathds{F}_{P \| Q}\bigl( \log \beta \bigr)}{\beta^2} \, \mathrm{d}\beta =
\int_0^1 \frac{\mathds{F}_{P \| Q}\bigl( \log \beta \bigr)}{\beta^2} \, \mathrm{d}\beta.
\end{align}
Equality~\eqref{eq7: identity RIS} can be verified by Lemma~\ref{lemma: identity with RIS}:
by rearranging terms in \eqref{eq7: identity RIS}, we get the identity in \eqref{eq: int. identity with RIS}
(since $\int_1^{\infty} \frac{\mathrm{d}\beta}{\beta^2} = 1$).
\end{enumerate}
\end{proof}

\begin{remark}  \label{remark 2: w_f}
Due to the convexity of $f$, the absolute value in the right side of \eqref{eq: weight function}
is only needed for $\beta \in (0,1)$ (see \eqref{eq: mod w_f - f} and \eqref{eq: w_f/ mod w_f}).
Also, $w_f(1)=0$ since $f(1)=0$.
\end{remark}

\begin{remark} \label{remark 1: w_f}
The weight function $w_f$ only depends on $f$, and the function $G_{P\|Q}$ only depends on the
pair of probability measures $P$ and $Q$. In view of Proposition~\ref{proposition: uniqueness},
it follows that, for $f, g \in \set{C}$, the equality $w_f = w_g$ holds on $(0, \infty)$ if and
only if \eqref{eq: invariance of f-div} is satisfied with an arbitrary constant $c \in \Reals$.
It is indeed easy to verify that \eqref{eq: invariance of f-div} yields $w_f = w_g$ on $(0, \infty)$.
\end{remark}

\begin{remark} \label{remark: G function}
An equivalent way to write $G_{P\|Q}$ in \eqref{eq: G function} is
\begin{align} \label{eq2: G function}
G_{P\|Q}(\beta) =
\begin{dcases}
\prob\left[ \frac{\text{d}P}{\text{d}Q} \, (X) > \beta \right], & \beta \in [1, \infty) \\[0.2cm]
\prob\left[ \frac{\text{d}P}{\text{d}Q} \, (X) \leq \beta \right], & \beta \in (0, 1)
\end{dcases}
\end{align}
where $X \sim P$.
Hence, the function $G_{P\|Q} \colon (0, \infty) \mapsto [0,1]$ is monotonically increasing in $(0,1)$,
and it is monotonically decreasing in $[1, \infty)$; note that this function is in general discontinuous
at~1 unless $\mathds{F}_{P \| Q}(0) = \tfrac12$.
If $P \ll \gg Q$, then
\begin{align}
\lim_{\beta \downarrow 0} G_{P\|Q}(\beta) = \lim_{\beta \to \infty} G_{P\|Q}(\beta) = 0.
\end{align}
Note that if $P = Q$, then $G_{P\|Q}$ is zero everywhere, which is consistent with the fact that $D_f(P \| Q)=0$.
\end{remark}

\begin{remark} \label{remark: strict convexity}
In the proof of Theorem~\ref{theorem: Int. rep.}-\ref{theorem: int. rep. - part 1}),
the relaxation of the condition of strict convexity at~1 for a differentiable function $f \in \set{C}$ is crucial, e.g., for
the $\chi^s$ divergence with $s>2$. To clarify this claim, note that in view of \eqref{eq: f - chi^s div}, the function
$f_s \colon (0, \infty) \mapsto \Reals$ is differentiable if $s>1$, and $f_s \in \set{C}$ with $f_s'(1)=0$;
however, $f_s''(1)=0$ if $s>2$, so $f_s$ in not strictly convex at~1 unless $s \in [1,2]$.
\end{remark}

\begin{remark} \label{remark: utility of part 2}
Theorem~\ref{theorem: Int. rep.}-\ref{theorem: int. rep. - part 2}) with $c \neq 0$ enables, in some cases,
to simplify integral representations of $f$-divergences. This is next exemplified in the proof of
Theorem~\ref{theorem: some int. representations}.
\end{remark}

\par
Theorem~\ref{theorem: Int. rep.} yields integral representations for various $f$-divergences
and related measures; some of these representations were previously derived by Sason and Verd\'{u} in
\cite{ISSV16} in a case by case basis, without the unified approach of Theorem~\ref{theorem: Int. rep.}.
We next provide such integral representations.
Note that, for some $f$-divergences, the function $f \in \set{C}$ is not differentiable
on $(0, \infty)$; hence, Theorem~\ref{theorem: Int. rep.} is not necessarily directly applicable.

\begin{theorem} \label{theorem: some int. representations}
The following integral representations hold as a function of the relative information spectrum:
\begin{enumerate}[1)]
\item Relative entropy \cite[(219)]{ISSV16}:
\begin{align} \label{eq: int. rep. KL}
\tfrac1{\log e} \, D(P\|Q) &= \int_1^{\infty} \frac{1-\mathds{F}_{P\|Q}(\log \beta)}{\beta} \, \mathrm{d}\beta
- \int_0^1 \frac{\mathds{F}_{P\|Q}(\log \beta)}{\beta} \, \mathrm{d}\beta.
\end{align}

\item Hellinger divergence of order $\alpha \in (0,1) \cup (1, \infty)$ \cite[(434) and (437)]{ISSV16}:
\begin{align}
\label{eq: int. rep. Hel}
\mathscr{H}_{\alpha}(P \| Q) &=
\begin{dcases}
\frac1{1-\alpha} - \int_0^\infty \beta^{\alpha-2} \, \mathds{F}_{P\|Q}(\log \beta) \, \mathrm{d}\beta, &\; \alpha \in (0,1) \\[0.2cm]
\int_0^{\infty} \beta^{\alpha-2} \left(1 - \mathds{F}_{P\|Q}(\log \beta) \right)  \, \mathrm{d}\beta - \frac1{\alpha-1}, &\; \alpha \in (1, \infty).
\end{dcases}
\end{align}
In particular, the chi-squared divergence, squared Hellinger distance and Bhattacharyya distance satisfy
\begin{align}
\label{eq: int. rep. chi^2 div}
\chi^2(P \| Q)
&= \int_0^{\infty} \left(1 - \mathds{F}_{P\|Q}(\log \beta) \right)  \, \mathrm{d}\beta - 1; \\[0.1cm]
\label{eq: int. rep. H^2 dist}
\mathscr{H}^2(P\|Q)
&= 1 - \tfrac12 \int_0^\infty \beta^{-\frac32} \, \mathds{F}_{P\|Q}(\log \beta) \, \mathrm{d}\beta; \\[0.1cm]
\label{eq: int. rep. B dist}
B(P\|Q) &= \log 2 - \log \left( \int_0^\infty \beta^{-\frac32} \, \mathds{F}_{P\|Q}(\log \beta) \, \mathrm{d}\beta \right),
\end{align}
where \eqref{eq: int. rep. chi^2 div} appears in \cite[(439]{ISSV16}.

\item R\'enyi divergence \cite[(426) and (427)]{ISSV16}:
For $\alpha \in (0,1) \cup (1,\infty)$,
\begin{align}  \label{eq: int. rep. RenyiD}
D_{\alpha}(P\|Q) &=
\begin{dcases}
\frac1{\alpha-1} \, \log \left( (1-\alpha) \int_0^\infty \beta^{\alpha-2} \,
\mathds{F}_{P\|Q}(\log \beta) \, \mathrm{d}\beta \right), &\; \alpha \in (0,1) \\[0.2cm]
\frac1{\alpha-1} \, \log \left( (\alpha-1) \int_0^{\infty} \beta^{\alpha-2}
\left(1 - \mathds{F}_{P\|Q}(\log \beta) \right)  \, \mathrm{d}\beta \right), &\; \alpha \in (1, \infty).
\end{dcases}
\end{align}

\item $\chi^s$ divergence: For $s \geq 1$
\begin{align}
\label{eq: int. rep. chi^s}
\chi^s(P\|Q) &= \int_1^\infty \frac1{\beta} \left(s-1+\frac1\beta \right)
(\beta-1)^{s-1} \left(1 - \mathds{F}_{P\|Q}(\log \beta) \right)  \, \mathrm{d}\beta \nonumber \\[0.1cm]
& \hspace*{0.4cm} + \int_0^1 \frac1{\beta} \left(s-1+\frac1\beta \right)
(1-\beta)^{s-1} \, \mathds{F}_{P\|Q}(\log \beta)  \, \mathrm{d}\beta.
\end{align}
In particular, the following identities hold for the total variation distance:
\begin{align}  \label{eq: int. rep. TV}
|P-Q| &= 2 \int_1^{\infty} \frac{1-\mathds{F}_{P\|Q}(\log \beta)}{\beta^2} \, \mathrm{d}\beta \\[0.1cm]
\label{eq2: int. rep. TV}
&= 2 \int_0^1 \frac{\mathds{F}_{P\|Q}(\log \beta)}{\beta^2} \, \mathrm{d}\beta,
\end{align}
where \eqref{eq: int. rep. TV} appears in \cite[(214)]{ISSV16}.

\item DeGroot statistical information:
\begin{align}
\label{eq: int. rep. DeGroot Info}
\set{I}_w(P \| Q) &=
\begin{dcases}
(1-w) \int_0^{\frac{1-w}{w}} \frac{\mathds{F}_{P\|Q}(\log \beta)}{\beta^2}
\, \mathrm{d}\beta, & \; w \in \bigl(\tfrac12, 1) \\[0.2cm]
(1-w) \int_{\frac{1-w}{w}}^{\infty} \frac{1-\mathds{F}_{P\|Q}(\log \beta)}{\beta^2}
\, \mathrm{d}\beta, & \; w \in \bigl(0, \tfrac12\bigr].
\end{dcases}
\end{align}

\item Triangular discrimination:
\begin{align}  \label{eq: int. rep. TD}
\Delta(P\|Q) &= 4 \int_0^\infty \frac{1-\mathds{F}_{P\|Q}(\log \beta)}{(\beta+1)^2} \, \mathrm{d}\beta - 2.
\end{align}

\item Lin's measure: For $\theta \in [0,1]$,
\begin{align} \label{eq: int. rep. Lin's div}
L_\theta(P \| Q) &= h(\theta) - (1-\theta) \int_0^{\infty}
\frac{\log \left(1 + \frac{\theta \beta}{1-\theta}\right)}{\beta^2}
\; \mathds{F}_{P\|Q}(\log \beta) \, \mathrm{d}\beta,
\end{align}
where $h \colon [0,1] \mapsto [0, \log 2]$ denotes the binary entropy function. Specifically, the
Jensen-Shannon divergence admits the integral representation:
\begin{align} \label{eq: int. rep. JS div}
\mathrm{JS}(P\|Q) &= \log 2 - \int_0^\infty \frac{\log(\beta+1)}{2 \beta^2}
\; \mathds{F}_{P\|Q}(\log \beta) \; \mathrm{d}\beta.
\end{align}

\item Jeffrey's divergence:
\begin{align}
J(P\|Q) &= \int_1^\infty \bigl(1 - \mathds{F}_{P\|Q}(\log \beta) \bigr) \left( \frac{\log e}{\beta}
+ \frac{\log \beta}{\beta^2} \right) \, \mathrm{d}\beta \nonumber \\[0.1cm]
\label{eq: int. rep. Jefreey's div}
& \hspace*{0.4cm} - \int_0^1 \mathds{F}_{P\|Q}(\log \beta) \, \left( \frac{\log e}{\beta}
+ \frac{\log \beta}{\beta^2} \right) \, \mathrm{d}\beta.
\end{align}

\item $E_\gamma$ divergence: For $\gamma \geq 1$,
\begin{align}  \label{eq: int. rep. E_gamma}
E_\gamma(P\|Q) = \gamma \int_{\gamma}^{\infty} \frac{1 - \mathds{F}_{P\|Q}(\log \beta)}{\beta^2} \, \mathrm{d}\beta.
\end{align}
\end{enumerate}
\end{theorem}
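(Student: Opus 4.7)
The plan is to apply Theorem~\ref{theorem: Int. rep.} separately to the convex function $f \in \set{C}$ associated with each listed divergence. For each case I would: (i) compute $f'$ on $(0,\infty)$ and substitute into \eqref{eq: weight function} to obtain $w_f$; (ii) if the resulting kernel is not yet in the stated form, invoke the generalized representation \eqref{eq2: new int rep Df} with a suitably chosen constant $c \in \Reals$, so that $\widetilde{w}_{f,c}$ collapses onto the claimed expression (equivalently, use Lemma~\ref{lemma: identity with RIS} to trade integrals of $1/\beta^2$ between $(0,1)$ and $[1,\infty)$ and absorb the resulting constants). For the relative entropy, $f(t)=t\log t$ gives $w_f(\beta)=(\log e)\,|\beta-1|/\beta^2$, and the choice $c=\log e$ collapses $\widetilde{w}_{f,c}$ to $\pm(\log e)/\beta$ on $[1,\infty)$ and $(0,1)$, yielding \eqref{eq: int. rep. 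KL}. The Hellinger case with $f_\alpha$ from \eqref{eq: H as fD} produces a pure-power $\beta^{\alpha-2}$ kernel up to a sign, and a single constant shift (plus Lemma~\ref{lemma: identity with RIS}) then delivers \eqref{eq: int. rep. Hel}; the chi-squared, squared Hellinger, Bhattacharyya, and R\'enyi specializations follow directly from \eqref{eq2: chi^2}, \eqref{eq3: Sq Hel}, \eqref{eq: B dist}, and \eqref{renyimeetshellinger}.

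For $\chi^s$ with $s>1$, $f_s(t)=|t-1|^s$ is differentiable on $(0,\infty)$ with $f_s'(1)=0$; Remark~\ref{remark: strict convexity} guarantees that strict convexity at $1$ is not required for Theorem~\ref{theorem: Int. rep.}, so a direct substitution into \eqref{eq: weight function} yields the kernel $\tfrac1\beta(s-1+\tfrac1\beta)|\beta-1|^{s-1}$ of \eqref{eq: int. rep. chi^s}. The triangular discrimination, Lin's measure, Jensen-Shannon, and Jeffrey's divergences are handled analogously from \eqref{eq:tridiv}, \eqref{eq: f of Lin div}, and \eqref{f - jeffreys}; the weight-function calculation for Lin's measure is the lengthiest but still routine, with $\widetilde{w}_{f,c}$ producing the constants $h(\theta)$ and $\log 2$ in \eqref{eq: int. rep. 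Lin's div} and \eqref{eq: int. rep. JS div}.

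The non-smooth cases $f_1(t)=|t-1|$, $f_\gamma(t)=(t-\gamma)^+$, and $\phi_\omega$ fall outside the hypotheses of Theorem~\ref{theorem: Int. rep.}-\ref{theorem: int. rep. - part 1}). For $E_\gamma$ I would proceed directly from \eqref{eq2:E_gamma}: rewrite $\prob[\imath_{P\|Q}(Y)>\log\gamma]$ via the change of measure $\mathrm{d}Q=\exp(-\imath_{P\|Q})\,\mathrm{d}P$ as $\int_{\log\gamma}^{\infty}\exp(-x)\,\mathrm{d}\mathds{F}_{P\|Q}(x)$, and integrate by parts in the variable $x=\log\beta$ to obtain \eqref{eq: int. rep. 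E_gamma}. The DeGroot representation \eqref{eq: int. rep. DeGroot Info} then follows by combining \eqref{eq: DG-EG} with \eqref{eq: int. rep. E_gamma}, treating $\omega\in(0,\tfrac12]$ and $\omega\in(\tfrac12,1)$ separately and swapping $P\leftrightarrow Q$ via \eqref{eq: asymmetry RI} in the second case. The total variation identities \eqref{eq: int. rep. TV}--\eqref{eq2: int. rep. TV} follow from the $E_\gamma$ formula specialized to $\gamma=1$ via \eqref{eq:EG-TV}, with the second form obtained from the first by one application of Lemma~\ref{lemma: identity with RIS}.

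The main obstacle I expect is the combinatorial bookkeeping of additive constants: the $(1-\alpha)^{-1}$, $(\alpha-1)^{-1}$, $\log 2$, and $-1$ offsets in \eqref{eq: int. rep. Hel}--\eqref{eq: int. rep. JS div} all arise from the constant $c$ chosen in \eqref{eq: generalized w_f} or from single applications of Lemma~\ref{lemma: identity with RIS}, and matching the precise sign and normalization required in each line of the statement is the main source of friction. The non-smoothness of $|t-1|$, $(t-\gamma)^+$, and $\phi_\omega$ is the other delicate point, but it is resolved cleanly by treating $E_\gamma$ directly (via integration by parts on the relative information spectrum) and reducing both the total variation distance and the DeGroot statistical information to that case.
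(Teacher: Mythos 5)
Your handling of the differentiable cases (relative entropy, Hellinger and hence chi-squared, squared Hellinger, Bhattacharyya and R\'enyi, $\chi^s$ with $s>1$, triangular discrimination, Lin's measure/Jensen--Shannon, Jeffrey's) coincides with the paper's proof in Appendix~\ref{appendix: proof of identities}: compute $w_f$ from \eqref{eq: weight function}, then shift by a constant $c$ via \eqref{eq: generalized w_f} (equivalently, invoke Lemma~\ref{lemma: identity with RIS}) to match the stated kernels and additive constants; your kernels $(\log e)\,|\beta-1|/\beta^2$, $\beta^{\alpha-2}$ and $\tfrac1\beta\bigl(s-1+\tfrac1\beta\bigr)|\beta-1|^{s-1}$ are exactly those in the paper. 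Where you genuinely diverge is the non-differentiable trio. The paper obtains $|P-Q|$ as the limit $s\downarrow 1$ of $\chi^s$ (dominated convergence), obtains \eqref{eq: int. rep. DeGroot Info} by approximating $\phi_\omega$ with the smooth Liese--Vajda family $f_{\omega,\alpha}$ in \eqref{LieseV-55} and letting $\alpha \downarrow 0$ under monotone convergence, and only then deduces \eqref{eq: int. rep. E_gamma} from \eqref{eq: DG-EG}; you reverse the logical order, deriving \eqref{eq: int. rep. E_gamma} directly from \eqref{eq2:E_gamma} by a change of measure and Fubini/integration by parts on the relative information spectrum, and then reading off the DeGroot representation via \eqref{eq: DG-EG} and the total variation identities via \eqref{eq:EG-TV} together with Lemma~\ref{lemma: identity with RIS}. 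Your route is more elementary, avoiding both limit-interchange arguments, while the paper's route keeps every item under the umbrella of Theorem~\ref{theorem: Int. rep.}, showing that the unified machinery extends to non-differentiable $f$ by approximation; both routes rest on the same implicit absolute-continuity hypothesis (Lemma~\ref{lemma: identity with RIS}, as used in \eqref{eq7: identity RIS}, already presumes $Q \ll P$). Two details worth making explicit if you write your version out: for $\omega \in \bigl(\tfrac12,1\bigr)$ the ``swap $P \leftrightarrow Q$'' step requires re-running your $E_\gamma$ computation for $E_\gamma(Q\|P)$ expressed in terms of $\mathds{F}_{P\|Q}$ (a change of measure from $Q$ to $P$ on negative values of the relative information), and your treatment of the total variation distance should be noted to also cover the $s=1$ instance of \eqref{eq: int. rep. chi^s}, whose kernel degenerates to $1/\beta^2$ on both pieces.
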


\begin{proof}
See Appendix~\ref{appendix: proof of identities}.
\end{proof}

\vspace*{0.2cm}
An application of \eqref{eq: int. rep. E_gamma} yields the following interplay
between the $E_\gamma$ divergence and the relative information spectrum.
\begin{theorem} \label{theorem: RIS -- EG}
Let $X \sim P$, and let the random variable $\imath_{P\|Q}(X)$ have no probability masses. Denote
\begin{align}
\label{eq: A1}
& \set{A}_1 := \bigl\{ E_\gamma(P\|Q) : \gamma \geq 1 \bigr\}, \\[0.1cm]
\label{eq: A2}
& \set{A}_2 := \bigl\{ E_\gamma(Q\|P) : \gamma > 1 \bigr\}.
\end{align}
Then,
\begin{enumerate}[a)]
\item
$E_{\gamma}(P\|Q)$ is a continuously differentiable function of $\gamma$ on
$(1, \infty)$, and $E'_{\gamma}(P\|Q) \leq 0$;
\item
the sets $\set{A}_1$ and $\set{A}_2$ determine, respectively, the relative
information spectrum $\mathds{F}_{P\|Q}(\cdot)$ on $[0, \infty)$ and $(-\infty, 0)$;
\item
for $\gamma > 1$,
\begin{align}
\label{eq: RIS - positive arguments}
& \mathds{F}_{P\|Q}(+\log \gamma) = 1 - E_\gamma(P\|Q) + \gamma E'_\gamma(P\|Q), \\
\label{eq: RIS - negative arguments}
& \mathds{F}_{P\|Q}(-\log \gamma) = -E'_\gamma(Q\|P),  \\
\label{eq1: RIS at 0}
& \mathds{F}_{P\|Q}(0) = 1 - E_1(P\|Q) + \lim_{\gamma \downarrow 1} E'_\gamma(P\|Q) \\
\label{eq2: RIS at 0}
& \hspace*{1.4cm} = -\lim_{\gamma \downarrow 1}  E'_\gamma(Q\|P).
\end{align}
\end{enumerate}
\end{theorem}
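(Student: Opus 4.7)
The central idea is to exploit the integral representation
\begin{equation*}
E_\gamma(P\|Q) = \gamma \int_\gamma^\infty \frac{1 - \mathds{F}_{P\|Q}(\log \beta)}{\beta^2} \, \mathrm{d}\beta
\end{equation*}
from \eqref{eq: int. rep. E_gamma} and differentiate in $\gamma$. The assumption that $\imath_{P\|Q}(X)$, $X \sim P$, has no probability masses makes $\mathds{F}_{P\|Q}$ continuous on $\Reals$, so the integrand above is continuous in $\beta$. By the Leibniz rule, $E_\gamma(P\|Q)$ is continuously differentiable on $(1,\infty)$ with
\begin{equation*}
\gamma E'_\gamma(P\|Q) = E_\gamma(P\|Q) - \bigl( 1 - \mathds{F}_{P\|Q}(\log \gamma) \bigr),
\end{equation*}
which rearranges to \eqref{eq: RIS - positive arguments}. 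Non-positivity of $E'_\gamma(P\|Q)$ then follows by substituting \eqref{eq2:E_gamma} into the right-hand side: with $Y \sim Q$, one gets $\gamma E'_\gamma(P\|Q) = -\gamma \, \prob[\imath_{P\|Q}(Y) > \log \gamma] \leq 0$.

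For \eqref{eq: RIS - negative arguments}, the plan is to apply the same computation to $E_\gamma(Q\|P)$ and then translate the resulting identity in terms of $\mathds{F}_{P\|Q}$. A preliminary check is that $\mathds{F}_{Q\|P}$ is continuous as well: for any $c \in \Reals$, a change-of-measure gives $Q(\imath_{P\|Q} = c) = \mathrm{e}^{-c} \, P(\imath_{P\|Q} = c) = 0$, so $\imath_{Q\|P} = -\imath_{P\|Q}$ has no atoms under $Q$ either. Applying the identity of the previous paragraph with the roles of $P$ and $Q$ swapped yields $\gamma E'_\gamma(Q\|P) = E_\gamma(Q\|P) - \bigl(1 - \mathds{F}_{Q\|P}(\log \gamma)\bigr)$, while \eqref{eq2:E_gamma} with $(P,Q)$ swapped yields $E_\gamma(Q\|P) = 1 - \mathds{F}_{Q\|P}(\log \gamma) - \gamma \, \prob[\imath_{Q\|P}(X) > \log \gamma]$ with $X \sim P$. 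Combining these two and then using $\imath_{Q\|P} = -\imath_{P\|Q}$ together with the continuity of $\mathds{F}_{P\|Q}$ gives $\gamma E'_\gamma(Q\|P) = -\gamma \, \mathds{F}_{P\|Q}(-\log \gamma)$, which is \eqref{eq: RIS - negative arguments}.

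The identities \eqref{eq1: RIS at 0} and \eqref{eq2: RIS at 0} then follow by letting $\gamma \downarrow 1$ in \eqref{eq: RIS - positive arguments} and \eqref{eq: RIS - negative arguments} respectively, with the one-sided limits of $E'_\gamma(P\|Q)$ and $E'_\gamma(Q\|P)$ existing because the right-hand sides of these identities are continuous in $\gamma$, and with $\mathds{F}_{P\|Q}(0^-) = \mathds{F}_{P\|Q}(0)$ by continuity at zero. Part (b) is then an immediate consequence of part (c): from the function $\gamma \mapsto E_\gamma(P\|Q)$ on $[1,\infty)$ one recovers $\mathds{F}_{P\|Q}(\log \gamma)$ for $\gamma \geq 1$ by \eqref{eq: RIS - positive arguments}, and from $\gamma \mapsto E_\gamma(Q\|P)$ on $(1,\infty)$ one recovers $\mathds{F}_{P\|Q}(-\log \gamma)$ by \eqref{eq: RIS - negative arguments}. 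The only delicate point in the whole argument is the transfer of the no-atoms hypothesis from $P$ to $Q$ in the second paragraph; everything else is Leibniz differentiation and rearrangement.
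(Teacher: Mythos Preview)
Your proof is correct and follows essentially the same approach as the paper: differentiate the integral representation \eqref{eq: int. rep. E_gamma}, rearrange to get \eqref{eq: RIS - positive arguments}, then swap $P$ and $Q$ and combine with \eqref{eq2:E_gamma} to obtain \eqref{eq: RIS - negative arguments}. Your explicit verification that $\imath_{Q\|P}(Y)$, $Y\sim Q$, also has no atoms (via the change-of-measure identity $Q(\imath_{P\|Q}=c)=e^{-c}\,P(\imath_{P\|Q}=c)$) is a point the paper uses only tacitly when it swaps $P$ and $Q$ in \eqref{eq: RIS - positive arguments}, so your write-up is in fact slightly more careful there; the only other cosmetic difference is that the paper infers $E'_\gamma(P\|Q)\le 0$ from the monotonicity in \eqref{eq1:E_gamma} rather than from the explicit formula you derive.
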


\begin{proof}
We first prove Item~a).
By our assumption, $\mathds{F}_{P\|Q}(\cdot)$ is continuous on $\Reals$. Hence, it follows from
\eqref{eq: int. rep. E_gamma} that $E_\gamma(P\|Q)$ is continuously differentiable in $\gamma \in (1, \infty)$;
furthermore, \eqref{eq1:E_gamma} implies that $E_\gamma(P\|Q)$ is monotonically decreasing
in $\gamma$, which yields $E'_\gamma(P\|Q) \leq 0$.

We next prove Items~b) and c) together. Let $X \sim P$ and $Y \sim Q$.
From \eqref{eq: int. rep. E_gamma}, for $\gamma > 1$,
\begin{align}
\frac{\mathrm{d}}{\mathrm{d}\gamma} \left(\frac{E_\gamma(P\|Q)}{\gamma}\right)
= -\frac{1-\mathds{F}_{P\|Q}(\log \gamma)}{\gamma^2},
\end{align}
which yields \eqref{eq: RIS - positive arguments}. Due to the continuity of
$\mathds{F}_{P\|Q}(\cdot)$, it follows that the set $\set{A}_1$ determines
the relative information spectrum on $[0, \infty)$.

To prove \eqref{eq: RIS - negative arguments}, we have
\begin{align}
\label{RIS - eq1}
E_{\gamma}(Q\|P) &= \prob[\imath_{Q\|P}(Y) > \log \gamma] - \gamma \, \prob[\imath_{Q\|P}(X) > \log \gamma] \\
\label{RIS - eq2}
&= 1 - \mathds{F}_{Q\|P}(\log \gamma) - \gamma \, \prob[\imath_{Q\|P}(X) > \log \gamma] \\
\label{RIS - eq3}
&= E_\gamma(Q\|P) - \gamma E'_\gamma(Q\|P) - \gamma \, \prob[\imath_{Q\|P}(X) > \log \gamma] \\
\label{RIS - eq4}
&= E_\gamma(Q\|P) - \gamma E'_\gamma(Q\|P) - \gamma \, \prob[\imath_{P\|Q}(X) < -\log \gamma] \\
\label{RIS - eq5}
&= E_\gamma(Q\|P) - \gamma E'_\gamma(Q\|P) - \gamma \, \mathds{F}_{P\|Q}(-\log \gamma)
\end{align}
where \eqref{RIS - eq1} holds by switching $P$ and $Q$ in \eqref{eq2:E_gamma}; \eqref{RIS - eq2}
holds since $Y \sim Q$; \eqref{RIS - eq3} holds by switching $P$ and $Q$ in \eqref{eq: RIS - positive arguments}
(correspondingly, also $X \sim P$ and $Y \sim Q$ are switched); \eqref{RIS - eq4} holds since
$\imath_{Q\|P} = -\imath_{P\|Q}$; \eqref{RIS - eq5} holds by the assumption that
$\frac{\mathrm{d}P}{\mathrm{d}Q} \, (X)$ has no probability masses, which implies that the sign $<$
can be replaced with $\leq$ at the term $\prob[\imath_{P\|Q}(X) < -\log \gamma]$ in the right side
of \eqref{RIS - eq4}. Finally, \eqref{eq: RIS - negative arguments} readily follows from
\eqref{RIS - eq1}--\eqref{RIS - eq5}, which implies that the set $\set{A}_2$ determines
$\mathds{F}_{P\|Q}(\cdot)$ on $(-\infty, 0)$.

Equalities \eqref{eq1: RIS at 0} and \eqref{eq1: RIS at 0} finally follows by letting
$\gamma \downarrow 1$, respectively, on both sides of \eqref{eq: RIS - positive arguments}
and \eqref{eq: RIS - negative arguments}.
\end{proof}

A similar application of \eqref{eq: int. rep. DeGroot Info} yields an interplay
between DeGroot statistical information and the relative information spectrum.
\begin{theorem}
\label{theorem: RIS -- DG}
Let $X \sim P$, and let the random variable $\imath_{P\|Q}(X)$ have no probability masses. Denote
\begin{align}
\label{eq: B1}
& \set{B}_1 := \Bigl\{ \mathcal{I}_\omega(P\|Q) : \omega \in \bigl(0, \tfrac12 \bigr] \Bigr\}, \\[0.1cm]
\label{eq: B2}
& \set{B}_2 := \Bigl\{ \mathcal{I}_\omega(P\|Q) : \omega \in \bigl(\tfrac12, 1 \bigr) \Bigr\}.
\end{align}
Then,
\begin{enumerate}[a)]
\item
$\mathcal{I}_\omega(P\|Q)$ is a continuously differentiable function of $\omega$
on $(0, \tfrac12) \cup (\tfrac12, 1)$,
\begin{align}
\lim_{\omega \uparrow \tfrac12} \, \mathcal{I}'_\omega(P\|Q) - \lim_{\omega \downarrow \tfrac12} \, \mathcal{I}'_\omega(P\|Q) = 2,
\end{align}
and $\mathcal{I}'_\omega(P\|Q)$ is, respectively, non-negative or non-positive on $\bigl(0, \tfrac12\bigr)$ and $\bigl(\tfrac12, 1 \bigr)$;
\item
the sets $\set{B}_1$ and $\set{B}_2$ determine, respectively, the relative
information spectrum $\mathds{F}_{P\|Q}(\cdot)$ on $[0, \infty)$ and $(-\infty, 0)$;
\item
for $\omega \in \bigl(0, \tfrac12 \bigr)$
\begin{align}
\label{eq2: RIS - pos. arg.}
\mathds{F}_{P\|Q}\left(\log \tfrac{1-\omega}{\omega}\right)
= 1 - \mathcal{I}_\omega(P\|Q) - (1-\omega) \, \mathcal{I}'_\omega(P\|Q),
\end{align}
for $\omega \in \bigl(\tfrac12, 1\bigr)$
\begin{align}
\label{eq2: RIS - neg. arg.}
\mathds{F}_{P\|Q}\left(\log \tfrac{1-\omega}{\omega}\right)
= - \mathcal{I}_\omega(P\|Q) - (1-\omega) \, \mathcal{I}'_\omega(P\|Q),
\end{align}
and
\begin{align}
\label{eq3: RIS at 0}
\mathds{F}_{P\|Q}(0) &=
-\mathcal{I}_{\frac12}(P\|Q) - \tfrac12 \lim_{\omega \downarrow \tfrac12}
\, \mathcal{I}'_\omega(P\|Q).
\end{align}
\end{enumerate}
\end{theorem}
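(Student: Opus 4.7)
The plan is to mirror the argument used for Theorem~\ref{theorem: RIS -- EG}, but starting from the integral representation~\eqref{eq: int. rep. DeGroot Info} of the DeGroot statistical information in place of~\eqref{eq: int. rep. E_gamma}. Under the no-atoms hypothesis, $\mathds{F}_{P\|Q}$ is continuous on $\Reals$, so Leibniz's rule applies to each of the two branches in~\eqref{eq: int. rep. DeGroot Info} and yields continuous differentiability of $\omega \mapsto \mathcal{I}_\omega(P\|Q)$ on $\bigl(0, \tfrac12\bigr)$ and on $\bigl(\tfrac12, 1\bigr)$. Since the map $\omega \mapsto \log \tfrac{1-\omega}{\omega}$ is a smooth decreasing bijection from $\bigl(0, \tfrac12\bigr)$ onto $(0, \infty)$ and from $\bigl(\tfrac12, 1\bigr)$ onto $(-\infty, 0)$, this structural observation is already what will let $\set{B}_1$ and $\set{B}_2$ recover $\mathds{F}_{P\|Q}$ on $[0, \infty)$ and $(-\infty, 0)$, respectively, once the formulas in part~c) are in hand.

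For part~c), I would differentiate the upper branch of~\eqref{eq: int. rep. DeGroot Info} on $\bigl(0, \tfrac12\bigr)$ via Leibniz's rule together with the chain rule $\tfrac{\mathrm{d}}{\mathrm{d}\omega}\tfrac{1-\omega}{\omega} = -\tfrac{1}{\omega^2}$; after recognising the surviving integral as $\tfrac{1}{1-\omega}\,\mathcal{I}_\omega(P\|Q)$, this gives
\begin{align*}
(1-\omega)\,\mathcal{I}'_\omega(P\|Q) = -\mathcal{I}_\omega(P\|Q) + 1 - \mathds{F}_{P\|Q}\!\left(\log \tfrac{1-\omega}{\omega}\right),
\end{align*}
which rearranges to~\eqref{eq2: RIS - pos. arg.}. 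The same manipulation applied to the lower branch on $\bigl(\tfrac12, 1\bigr)$ produces $(1-\omega)\,\mathcal{I}'_\omega(P\|Q) = -\mathcal{I}_\omega(P\|Q) - \mathds{F}_{P\|Q}(\log \tfrac{1-\omega}{\omega})$, which is~\eqref{eq2: RIS - neg. arg.}. Equation~\eqref{eq3: RIS at 0} then drops out by letting $\omega \downarrow \tfrac12$ in~\eqref{eq2: RIS - neg. arg.} and invoking continuity of $\mathds{F}_{P\|Q}$ at~$0$.

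For the sign and jump assertions in part~a), I plan to bypass a direct estimation and instead compose Theorem~\ref{theorem: RIS -- EG}a) with the identity~\eqref{eq: DG-EG}. Differentiating $\mathcal{I}_\omega(P\|Q) = \omega\,E_{(1-\omega)/\omega}(P\|Q)$ on $\bigl(0, \tfrac12\bigr)$ gives
\begin{align*}
\mathcal{I}'_\omega(P\|Q) = E_{(1-\omega)/\omega}(P\|Q) - \tfrac{1}{\omega}\,E'_{(1-\omega)/\omega}(P\|Q),
\end{align*}
which is non-negative because $E_\gamma \geq 0$ and $E'_\gamma \leq 0$; the analogous computation on $\bigl(\tfrac12, 1\bigr)$, using $\mathcal{I}_\omega(P\|Q) = (1-\omega)\,E_{\omega/(1-\omega)}(Q\|P)$, produces a non-positive expression by the same two sign facts. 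To obtain the jump, I take one-sided limits in the two identities of the previous paragraph: the right sides tend to $1 - \mathcal{I}_{1/2}(P\|Q) - \mathds{F}_{P\|Q}(0)$ as $\omega \uparrow \tfrac12$ and to $-\mathcal{I}_{1/2}(P\|Q) - \mathds{F}_{P\|Q}(0)$ as $\omega \downarrow \tfrac12$, while the left sides both equal $\tfrac12 \lim \mathcal{I}'_\omega(P\|Q)$; subtracting yields $\tfrac12\bigl(\lim_{\omega \uparrow 1/2} - \lim_{\omega \downarrow 1/2}\bigr)\mathcal{I}'_\omega(P\|Q) = 1$, hence a jump of exactly~$2$.

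The main obstacle is bookkeeping rather than conceptual: keeping the chain-rule sign $-1/\omega^2$ consistent across both branches, and verifying that the no-atoms hypothesis is precisely what is needed both to legitimise Leibniz's rule (continuity of the integrand) and to justify passing to the one-sided limits at~$\omega = \tfrac12$ (continuity of $\mathds{F}_{P\|Q}$ at~$0$). Part~b) is then immediate from part~c) together with the fact that $\omega \mapsto \log\tfrac{1-\omega}{\omega}$ sweeps $[0, \infty)$ bijectively as $\omega$ ranges over $\bigl(0, \tfrac12\bigr]$ and sweeps $(-\infty, 0)$ bijectively as $\omega$ ranges over $\bigl(\tfrac12, 1\bigr)$.
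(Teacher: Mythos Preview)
Your proposal is correct and follows exactly the route the paper intends: the paper does not spell out a proof of this theorem but simply remarks that ``a similar application of~\eqref{eq: int. rep. DeGroot Info}'' to that used for Theorem~\ref{theorem: RIS -- EG} yields the result, and your Leibniz-rule differentiation of the two branches of~\eqref{eq: int. rep. DeGroot Info}, followed by one-sided limits at $\omega=\tfrac12$, is precisely that. One cosmetic slip: in~\eqref{eq: int. rep. DeGroot Info} as displayed in the paper, the \emph{first} line is the $\omega\in(\tfrac12,1)$ branch and the \emph{second} line is the $\omega\in(0,\tfrac12]$ branch, so your references to ``upper'' and ``lower'' branches are swapped---the computations themselves are fine.
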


\begin{remark} \label{remark: discontinuities}
By relaxing the condition in Theorems~\ref{theorem: RIS -- EG} and \ref{theorem: RIS -- DG}
where $\frac{\mathrm{d}P}{\mathrm{d}Q} \, (X)$ has no probability masses with $X \sim P$,
it follows from the proof of Theorem~\ref{theorem: RIS -- EG} that each one of the sets
\begin{align}
\label{eq: union A}
& \set{A} := \set{A}_1 \cup \set{A}_2 = \Bigl\{ \bigl(E_\gamma(P\|Q), E_\gamma(Q\|P) \bigr): \gamma \geq 1 \Bigr\}, \\
\label{eq: union B}
& \set{B} := \set{B}_1 \cup \set{B}_2 = \Bigl\{ \mathcal{I}_\omega(P\|Q): \omega \in (0,1) \Bigr\}
\end{align}
determines $\mathds{F}_{P\|Q}(\cdot)$ at every point on $\Reals$ where this
relative information spectrum is continuous. Note that, as a cumulative distribution
function, $\mathds{F}_{P\|Q}(\cdot)$ is discontinuous at a countable number of points.
Consequently, under the condition that $f \in \set{C}$ is differentiable on $(0, \infty)$,
the integral representations of $D_f(P\|Q)$ in Theorem~\ref{theorem: Int. rep.} are not
affected by the countable number of discontinuities for $\mathds{F}_{P\|Q}(\cdot)$.
\end{remark}

In view of Theorems~\ref{theorem: Int. rep.}, \ref{theorem: RIS -- EG} and \ref{theorem: RIS -- DG}
and Remark~\ref{remark: discontinuities}, we get the following result.
\begin{corollary} \label{corollary: DG-EG/ D_f}
Let $f \in \set{C}$ be a differentiable function on $(0, \infty)$, and let $P \ll \gg Q$
be probability measures. Then, each one of the sets $\set{A}$ and $\set{B}$ in \eqref{eq: union A}
and \eqref{eq: union B}, respectively, determines $D_f(P\|Q)$.
\end{corollary}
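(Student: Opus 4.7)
The plan is to combine the integral representation from Theorem~\ref{theorem: Int. rep.} with the inversion formulas in Theorems~\ref{theorem: RIS -- EG} and \ref{theorem: RIS -- DG}, and then handle the discontinuity points of the relative information spectrum via a measure-theoretic argument as flagged in Remark~\ref{remark: discontinuities}.

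First, since $f \in \set{C}$ is differentiable on $(0,\infty)$ and $P \ll \gg Q$, Theorem~\ref{theorem: Int. rep.}-\ref{theorem: int. rep. - part 1}) applies and gives
\begin{align*}
D_f(P\|Q) = \int_0^\infty w_f(\beta) \, G_{P\|Q}(\beta) \, \mathrm{d}\beta,
\end{align*}
where $w_f$ depends only on $f$, and $G_{P\|Q}$ depends on $(P,Q)$ only through the cumulative distribution function $\mathds{F}_{P\|Q}$ via \eqref{eq: G function}. Hence, to prove that $\set{A}$ (respectively $\set{B}$) determines $D_f(P\|Q)$, it suffices to show that each of these sets determines $\mathds{F}_{P\|Q}$ almost everywhere on $\Reals$ with respect to Lebesgue measure.

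Next I would invoke the proofs of Theorems~\ref{theorem: RIS -- EG} and \ref{theorem: RIS -- DG}, which establish inversion formulas for $\mathds{F}_{P\|Q}$ under the extra assumption that $\imath_{P\|Q}(X)$ with $X \sim P$ has no probability masses. As noted in Remark~\ref{remark: discontinuities}, the same inversion formulas still recover $\mathds{F}_{P\|Q}(x)$ at every continuity point $x \in \Reals$ when this extra assumption is dropped: indeed, \eqref{eq: RIS - positive arguments}--\eqref{eq: RIS - negative arguments} (and their DeGroot analogues) rely on differentiating $\gamma \mapsto E_\gamma(P\|Q)/\gamma$ and $\gamma \mapsto E_\gamma(Q\|P)$, and this differentiability holds at every $\log \gamma$ at which $\mathds{F}_{P\|Q}$ is continuous. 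Thus, knowledge of $\set{A}$ pins down $\mathds{F}_{P\|Q}(x)$ for every continuity point $x \in \Reals$, and analogously for $\set{B}$ via \eqref{eq2: RIS - pos. arg.}--\eqref{eq2: RIS - neg. arg.}.

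Finally, since $\mathds{F}_{P\|Q}$ is a cumulative distribution function, its set of discontinuities is at most countable, hence of Lebesgue measure zero. The integrand in the representation above is therefore determined almost everywhere by $\set{A}$ (or by $\set{B}$), and consequently the integral $D_f(P\|Q)$ is determined as well. The main obstacle I foresee is the careful bookkeeping of the continuity requirement: one must check that the differentiation arguments in the proofs of Theorems~\ref{theorem: RIS -- EG} and \ref{theorem: RIS -- DG} really go through at every continuity point of $\mathds{F}_{P\|Q}$ without the global no-atoms hypothesis, so that the exceptional set where $\mathds{F}_{P\|Q}$ is not recoverable from $\set{A}$ or $\set{B}$ is genuinely countable and hence Lebesgue-negligible in the integral representation of $D_f(P\|Q)$.
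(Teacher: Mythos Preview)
Your proposal is correct and follows essentially the same approach as the paper: the paper simply states that the corollary follows from Theorem~\ref{theorem: Int. rep.}, Theorems~\ref{theorem: RIS -- EG} and~\ref{theorem: RIS -- DG}, and Remark~\ref{remark: discontinuities}, and you have spelled out exactly this chain of reasoning---the integral representation reduces the problem to determining $\mathds{F}_{P\|Q}$ almost everywhere, the inversion formulas recover $\mathds{F}_{P\|Q}$ at its continuity points from $\set{A}$ or $\set{B}$, and the at-most-countable set of discontinuities is Lebesgue-null and hence irrelevant to the integral.
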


\begin{remark}
Corollary~\ref{corollary: DG-EG/ D_f} is supported by the integral representation of $D_f(P\|Q)$
in \cite[Theorem~11]{LieseV_IT2006}, expressed as a function of the set of values in $\set{B}$,
and its analogous representation in \cite[Proposition~3]{ISSV16} as a function of the set of values
in $\set{A}$. More explicitly, \cite[Theorem~11]{LieseV_IT2006} states that if $f \in \set{C}$, then
\begin{align} \label{eq1: LieseV06}
D_f(P\|Q) = \int_0^1 \set{I}_\omega(P\|Q) \, \text{d}\Gamma_f(\omega)
\end{align}
where $\Gamma_f$ is a certain $\sigma$-finite measure defined on the Borel subsets of $(0,1)$;
it is also shown in \cite[(80)]{LieseV_IT2006} that if $f \in \set{C}$ is twice
differentiable on $(0, \infty)$, then
\begin{align}
\label{eq4: LieseV06}
D_f(P\|Q)
&= \int_0^1  \set{I}_\omega(P\|Q) \; \frac1{\omega^3}
\; f''\left(\frac{\omega}{1-\omega}\right) \, \text{d}\omega.
\end{align}
\end{remark}

\vspace*{0.2cm}
\section{New $f$-divergence Inequalities}
\label{section: new inequalities}

Various approaches for the derivation of $f$-divergence inequalities were studied
in the literature (see Section~\ref{section: introduction} for references). This section suggests
a new approach, leading to a lower bound on an arbitrary $f$-divergence by
means of the $E_\gamma$ divergence of an arbitrary order $\gamma \geq 1$
(see \eqref{eq1:E_gamma}) or the DeGroot statistical information (see \eqref{eq:DG f-div}).
This approach leads to generalizations of the Bretagnole-Huber inequality
\cite{BretagnolleH79}, whose generalizations are later motivated in this section.
The utility of the $f$-divergence inequalities
in this section is exemplified in the setup of Bayesian binary hypothesis testing.

In the following, we provide the first main result in this section for the
derivation of new $f$-divergence inequalities by means of the $E_\gamma$
divergence. Generalizing the total variation distance, the $E_\gamma$ divergence
in \eqref{eq1:E_gamma}--\eqref{eq:Eg f-div} is an $f$-divergence whose
utility in information theory has been exemplified in \cite[Chapter~3]{CZK98},
\cite{Liu17}, \cite[p.~2314]{PPV10} and \cite{PW15}; the properties of
this measure were studied in \cite[Section~2.B]{Liu17} and
\cite[Section~7]{ISSV16}.

\vspace*{0.1cm}
\begin{theorem} \label{theorem: f-div ineq}
Let $f \in \set{C}$, and let $f^\ast \in \set{C}$ be the
conjugate convex function as defined in \eqref{eq: conjugate f}.
Let $P$ and $Q$ be probability measures.
Then, for all $\gamma \in [1, \infty)$,
\begin{align} \label{eq: f-div ineq}
D_f(P\|Q) \geq f^\ast\left(1 + \tfrac1\gamma \, E_\gamma(P\|Q) \right)
+ f^\ast\left(\tfrac1\gamma \, \bigl(1- E_\gamma(P\|Q) \bigr) \right)
- f^\ast\left( \tfrac1\gamma \right).
\end{align}
\end{theorem}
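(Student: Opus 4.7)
The plan is to combine the data-processing inequality (DPI) with two applications of Jensen's inequality for a carefully chosen convex surrogate. First I would invoke the DPI for the binary reduction coming from the Neyman--Pearson set at threshold $\gamma$: set $\set{U}^\star := \{\mathrm{d}P/\mathrm{d}Q > \gamma\}$ (randomizing on the level set $\{\mathrm{d}P/\mathrm{d}Q = \gamma\}$ if it has positive $Q$-mass), write $p := P(\set{U}^\star)$ and $q := Q(\set{U}^\star)$, so that by \eqref{eq1:E_gamma}--\eqref{eq2:E_gamma} one has $p - \gamma q = E_\gamma(P\|Q) =: e$; passing to the induced laws $\tilde P = (p, 1-p)$ and $\tilde Q = (q, 1-q)$ on a two-point space, the DPI gives
\begin{align*}
D_f(P\|Q) \;\geq\; D_f(\tilde P \| \tilde Q) \;=\; q \, f\bigl(p/q\bigr) + (1-q) \, f\bigl((1-p)/(1-q)\bigr).
\end{align*}

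Second, I would introduce the auxiliary function $g(t) := t \, f(\gamma/t)$ for $t > 0$. Using the identity $f(u) = u \, f^\ast(1/u)$ provided by Proposition~\ref{proposition: conjugate}, a direct computation gives $g(t) = \gamma \, f^\ast(t/\gamma)$, so in particular $g$ is convex. The role of this substitution is that it puts both sides of the desired bound into a common $g$-form: multiplying the binary $f$-divergence above by $\gamma$ gives $p \, g(\gamma q/p) + (1-p) \, g\bigl(\gamma(1-q)/(1-p)\bigr)$, while $\gamma$ times the right-hand side of \eqref{eq: f-div ineq} equals $g(\gamma + e) + g(1-e) - g(1)$. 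The theorem therefore reduces to the algebraic inequality
\begin{align*}
p \, g(\gamma q/p) + (1-p) \, g\bigl(\gamma(1-q)/(1-p)\bigr) + g(1) \;\geq\; g(\gamma + e) + g(1-e).
\end{align*}

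The crux of the proof is the pair of convex decompositions
\begin{align*}
\gamma + e \;=\; p \cdot 1 + (1-p) \cdot \frac{\gamma(1-q)}{1-p}, \qquad 1 - e \;=\; p \cdot \frac{\gamma q}{p} + (1-p) \cdot 1,
\end{align*}
both valid because $e = p - \gamma q$; each expresses one of the target arguments as a convex combination of $1$ with one of the scaled Neyman--Pearson ratios, using the common weights $(p, 1-p)$. Jensen's inequality applied to the convex $g$ twice then yields $g(\gamma + e) \leq p \, g(1) + (1-p) \, g\bigl(\gamma(1-q)/(1-p)\bigr)$ and $g(1-e) \leq p \, g(\gamma q/p) + (1-p) \, g(1)$; summing these two bounds and rearranging gives precisely the displayed algebraic inequality. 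Unwinding the substitution $g(t) = \gamma \, f^\ast(t/\gamma)$ and dividing by $\gamma$ converts this back into $D_f(\tilde P\|\tilde Q) \geq f^\ast\bigl(1 + e/\gamma\bigr) + f^\ast\bigl((1-e)/\gamma\bigr) - f^\ast(1/\gamma)$, which combined with the DPI from the first paragraph is the claim. The principal obstacle is this ``design step'' of recognizing both the convex surrogate $g(t) = t \, f(\gamma/t)$ and the twin convex combinations above; once these are in hand the rest is routine, modulo the standard handling of the Neyman--Pearson level set by randomization so that $p - \gamma q$ equals $E_\gamma(P\|Q)$ exactly.
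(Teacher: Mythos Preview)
Your proof is correct, but the route differs from the paper's. The paper works directly with the integral representation $D_f(P\|Q)=D_{f^\ast}(Q\|P)=\int p\,f^\ast(q/p)\,\mathrm{d}\mu$, applies the pointwise identity $f^\ast(x)=f^\ast(\max\{a,x\})+f^\ast(\min\{a,x\})-f^\ast(a)$ with $a=1/\gamma$, and then invokes Jensen's inequality once on each of the two resulting integrals; the averages $\int p\max\{1/\gamma,q/p\}\,\mathrm{d}\mu$ and $\int p\min\{1/\gamma,q/p\}\,\mathrm{d}\mu$ evaluate directly to $1+E_\gamma/\gamma$ and $(1-E_\gamma)/\gamma$. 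Your argument instead first passes through the DPI to the two-point reduction on $\set{U}^\star$, and then recovers the same bound via the surrogate $g(t)=\gamma f^\ast(t/\gamma)$ and the twin convex combinations $\gamma+e=p\cdot 1+(1-p)\cdot\gamma(1-q)/(1-p)$ and $1-e=p\cdot\gamma q/p+(1-p)\cdot 1$. The paper's approach is shorter and avoids the intermediate binary reduction; your approach has the conceptual merit of making explicit that the bound depends on $(P,Q)$ only through the Neyman--Pearson statistics $(P(\set{U}^\star),Q(\set{U}^\star))$, and that DPI loses nothing here. Incidentally, randomization on the level set is unnecessary: taking $\set{U}^\star=\{\mathrm{d}P/\mathrm{d}Q>\gamma\}$ already attains $E_\gamma(P\|Q)$ exactly by \eqref{eq2:E_gamma}.
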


\begin{proof}
Let $p = \frac{\mathrm{d}P}{\mathrm{d}\mu}$
and $q = \frac{\mathrm{d}Q}{\mathrm{d}\mu}$
be the densities of $P$ and $Q$
with respect to a dominating measure
$\mu$ $(P, Q \ll \mu)$. Then,
for an arbitrary $a \in \Reals$,
\begin{align}
D_f(P \| Q) &= D_{f^\ast}(Q \| P) \\[0.1cm]
&= \int p \, f^\ast\left( \frac{q}{p} \right) \, \mathrm{d}\mu \\[0.1cm]
&= \int p \left[ f^\ast\left( \max\left\{a, \frac{q}{p} \right\} \right)
+ f^\ast\left( \min\left\{a, \frac{q}{p} \right\} \right) - f^\ast(a) \right] \, \mathrm{d}\mu \\[0.1cm]
\label{eq1: Jensen}
&\geq f^\ast\left( \int p \max\left\{a, \frac{q}{p} \right\} \, \mathrm{d}\mu \right)
+ f^\ast\left( \int p \min\left\{a, \frac{q}{p} \right\} \, \mathrm{d}\mu \right) - f^\ast(a)
\end{align}
where \eqref{eq1: Jensen} follows from the convexity of $f^\ast$ and by invoking Jensen's inequality.

Setting $a := \frac1\gamma$ with $\gamma \in [1, \infty)$ gives
\begin{align}
\int p \max\left\{a, \frac{q}{p} \right\} \, \mathrm{d}\mu
&= \int \max \left\{ \frac{p}{\gamma}, q \right\} \, \mathrm{d}\mu \\[0.1cm]
&= \int q \, \mathrm{d}\mu + \int \max \left\{ \frac{p}{\gamma} - q, 0 \right\} \, \mathrm{d}\mu \\[0.1cm]
&= 1 + \frac1\gamma \int q \max\left\{\frac{p}{q} - \gamma, 0 \right\} \, \mathrm{d}\mu \\[0.1cm]
\label{eq: 1st integral}
&= 1 + \tfrac1\gamma \, E_\gamma(P \| Q),
\end{align}
and
\begin{align}
\int p \min\left\{a, \frac{q}{p} \right\} \, \mathrm{d}\mu
&= \int p \left( a + \frac{q}{p} - \max\left\{a, \frac{q}{p} \right\} \right) \mathrm{d}\mu \\[0.1cm]
&= a+1 - \int p \max\left\{a, \frac{q}{p} \right\} \, \mathrm{d}\mu \\[0.1cm]
\label{eq: 2nd integral}
&= \tfrac1\gamma \, \bigl( 1 - E_\gamma(P \| Q) \bigr)
\end{align}
where \eqref{eq: 2nd integral} follows from \eqref{eq: 1st integral} by setting $a := \frac1\gamma$.
Substituting \eqref{eq: 1st integral} and \eqref{eq: 2nd integral} into the right side of
\eqref{eq1: Jensen} gives \eqref{eq: f-div ineq}.
\end{proof}

An application of Theorem~\ref{theorem: f-div ineq} gives the following lower bounds
on the Hellinger and R\'enyi divergences with arbitrary positive orders, expressed
as a function of the $E_\gamma$ divergence with an arbitrary order $\gamma \geq 1$.

\begin{corollary} \label{cor: Hel RD}
For all $\alpha > 0$ and $\gamma \geq 1$,
\begin{align} \label{eq: cor - Hel}
\mathscr{H}_{\alpha}(P \| Q) \geq
\begin{dcases}
\frac1{\alpha-1} \left[ \left(1 + \frac1\gamma \, E_\gamma(P\|Q) \right)^{1-\alpha}
+ \left( \frac{1-E_\gamma(P\|Q)}{\gamma} \right)^{1-\alpha} - 1 -
\gamma^{\alpha-1} \right], & \quad \alpha \neq 1 \\[0.2cm]
-\log_{\mathrm{e}} \Biggl( \left(1 + \frac1\gamma \, E_\gamma(P\|Q) \right)
\bigl(1-E_\gamma(P\|Q)\bigr) \Biggr), &\quad \alpha=1,
\end{dcases}
\end{align}
and
\begin{align} \label{eq: cor - RD}
D_{\alpha}(P \| Q) \geq
\begin{dcases}
\frac1{\alpha-1} \log \Biggl( \left(1 + \frac1\gamma \, E_\gamma(P\|Q) \right)^{1-\alpha}
+ \gamma^{\alpha-1} \left[ \bigl(1-E_\gamma(P\|Q) \bigr)^{1-\alpha} - 1 \right] \Biggr),
& \quad \alpha \neq 1 \\[0.2cm]
-\log \Biggl( \left(1 + \frac1\gamma \, E_\gamma(P\|Q) \right) \bigl(1-E_\gamma(P\|Q)\bigr)
\Biggr), &\quad \alpha=1.
\end{dcases}
\end{align}
\end{corollary}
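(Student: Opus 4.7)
The plan is to derive both inequalities by specializing Theorem~\ref{theorem: f-div ineq} to the Hellinger function $f_\alpha(t) = \tfrac{t^\alpha-1}{\alpha-1}$ from \eqref{eq: H as fD}, and then transporting the resulting bound to the R\'enyi divergence via the one-to-one relation \eqref{renyimeetshellinger}. Both parts reduce to routine algebra once the conjugate is identified and the right substitutions are made.

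\textbf{Step 1 (Conjugate of the Hellinger function).} For $\alpha \in (0,1)\cup(1,\infty)$, I would compute from \eqref{eq: conjugate f} that
\begin{align}
f_\alpha^\ast(t) = t\,f_\alpha\!\bigl(\tfrac1t\bigr) = \frac{t^{1-\alpha} - t}{\alpha-1}, \qquad t>0.
\end{align}
For $\alpha = 1$, taking the analytic extension (equivalently, $f_1(t) = t\log_{\mathrm{e}} t$) gives $f_1^\ast(t) = -\log_{\mathrm{e}} t$. These are the only computations needed to instantiate the right side of \eqref{eq: f-div ineq}.

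\textbf{Step 2 (Apply Theorem~\ref{theorem: f-div ineq} to get the Hellinger bound).} Substituting the expression for $f_\alpha^\ast$ into the three terms of \eqref{eq: f-div ineq} evaluated at $1+\tfrac{1}{\gamma}E_\gamma(P\|Q)$, $\tfrac{1}{\gamma}\bigl(1-E_\gamma(P\|Q)\bigr)$, and $\tfrac{1}{\gamma}$, the linear part contributes
\begin{align}
-\bigl(1+\tfrac{1}{\gamma}E_\gamma(P\|Q)\bigr) - \tfrac{1}{\gamma}\bigl(1-E_\gamma(P\|Q)\bigr) + \tfrac{1}{\gamma} = -1,
\end{align}
so the affine pieces cancel cleanly and one is left with exactly the right side of \eqref{eq: cor - Hel} for $\alpha \neq 1$. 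The case $\alpha=1$ is handled either by the same substitution with $f_1^\ast(t) = -\log_{\mathrm{e}} t$, which turns the three terms into a single logarithm of a product, or by taking $\alpha \to 1$ inside \eqref{eq: cor - Hel} and noting the $\tfrac{1}{\log \mathrm{e}}$ factor coming from \eqref{eq1: KL}.

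\textbf{Step 3 (Transfer to R\'enyi).} The identity \eqref{renyimeetshellinger} states $D_\alpha(P\|Q) = \tfrac{1}{\alpha-1}\log\bigl(1+(\alpha-1)\mathscr{H}_\alpha(P\|Q)\bigr)$, which is monotonically increasing in $\mathscr{H}_\alpha(P\|Q)$ for every $\alpha \in (0,1)\cup(1,\infty)$ (the sign of $\alpha-1$ appears twice and cancels). Plugging the Hellinger lower bound from Step~2 into \eqref{renyimeetshellinger}, the $\tfrac{1}{\alpha-1}$ factor combines with $(\alpha-1)$ in front of the Hellinger expression, cancelling the $-1$ outside the bracket in \eqref{eq: cor - Hel} and leaving the argument of the logarithm in \eqref{eq: cor - RD}. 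After regrouping $\bigl((1-E_\gamma(P\|Q))/\gamma\bigr)^{1-\alpha} - \gamma^{\alpha-1} = \gamma^{\alpha-1}\bigl[(1-E_\gamma(P\|Q))^{1-\alpha} - 1\bigr]$, the bound reads exactly as in \eqref{eq: cor - RD}. The case $\alpha=1$ again follows by continuity, or by observing that $D_1 = D$ and $D(P\|Q) = \mathscr{H}_1(P\|Q)\log \mathrm{e}$ from \eqref{eq1: KL}, which converts the natural log in the Hellinger bound into the generic $\log$ used for $D$.

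\textbf{Anticipated obstacle.} There is no genuine conceptual difficulty; the only care required is bookkeeping of the base of the logarithm and of the signs when $\alpha<1$ versus $\alpha>1$ (the factor $\tfrac{1}{\alpha-1}$ flips inequality directions if one is not careful). A secondary check is that the arguments of the logarithms in \eqref{eq: cor - RD} remain positive — this is automatic because the underlying Hellinger expression equals $1 + (\alpha-1)\mathscr{H}_\alpha$ evaluated at the lower bound, and since $\mathscr{H}_\alpha \geq 0$ the bracketed quantity in \eqref{eq: cor - RD} is at least $1 + (\alpha-1)\cdot 0 = 1$ whenever the lower bound on $\mathscr{H}_\alpha$ is non-negative.
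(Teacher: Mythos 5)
Your proposal is correct and follows exactly the paper's route: instantiate Theorem~\ref{theorem: f-div ineq} with the Hellinger function $f_\alpha$ of \eqref{eq: H as fD} (whose conjugate you compute correctly, with the affine pieces cancelling to the $-1$ term), then pass to the R\'enyi divergence through the monotone relation \eqref{renyimeetshellinger}, handling $\alpha=1$ via \eqref{eq1: KL}. The paper's proof is merely a citation of these same ingredients, so your write-up supplies the omitted algebra without deviating from its approach.
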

\begin{proof}
Inequality~\eqref{eq: cor - Hel}, for $\alpha \in (0,1) \cup (1, \infty)$, follows from
Theorem~\ref{theorem: f-div ineq} and \eqref{eq: Hel-divergence}; for $\alpha=1$,
it holds in view of Theorem~\ref{theorem: f-div ineq}, and equalities \eqref{eq: KL divergence}
and \eqref{eq1: KL}. Inequality~\eqref{eq: cor - RD}, for $\alpha \in (0,1) \cup (1, \infty)$,
follows from \eqref{renyimeetshellinger} and \eqref{eq: cor - Hel}; for $\alpha=1$,
it holds in view of \eqref{eq1: KL}, \eqref{eq: cor - Hel} and since $D_1(P\|Q)=D(P\|Q)$.
\end{proof}

\vspace*{0.1cm}
Specialization of Corollary~\ref{cor: Hel RD} for $\alpha=2$ in \eqref{eq: cor - Hel} and $\alpha=1$
in \eqref{eq: cor - RD} gives the following result.
\begin{corollary} \label{corollary 3}
For $\gamma \in [1, \infty)$, the following upper bounds on $E_\gamma$ divergence hold as a function
of the relative entropy and $\chi^2$ divergence:
\begin{align}
\label{eq: chi^2-EG}
& E_\gamma(P \| Q) \leq \tfrac12 \left[ 1-\gamma + \sqrt{(\gamma-1)^2
+ \frac{4 \gamma \, \chi^2(P \| Q)}{1 + \gamma + \chi^2(P \| Q)}} \; \right], \\[0.2cm]
\label{eq: generalized BH}
& E_\gamma(P \| Q) \leq \tfrac12 \left[ 1-\gamma + \sqrt{(\gamma-1)^2
+ 4 \gamma \bigl( 1 -\exp(-D(P\|Q)) \bigr)} \right].
\end{align}
\end{corollary}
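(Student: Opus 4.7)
The plan is to invert the inequalities of Corollary~\ref{cor: Hel RD} so that $E_\gamma(P\|Q)$ sits alone on one side. Each of \eqref{eq: cor - Hel} with $\alpha=2$ and \eqref{eq: cor - RD} with $\alpha=1$ will turn into a quadratic inequality in the scalar $E := E_\gamma(P\|Q)$, and the two claimed bounds are exactly the positive root of the corresponding quadratic.

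For \eqref{eq: chi^2-EG}, I would start from \eqref{eq: cor - Hel} at $\alpha=2$, using $\mathscr{H}_2(P\|Q)=\chi^2(P\|Q)$ from \eqref{eq2: chi^2}. With $1-\alpha=-1$, the right-hand side becomes
\begin{align*}
\chi^2(P\|Q) \;\geq\; \frac{\gamma}{\gamma+E} + \frac{\gamma}{1-E} - 1 - \gamma
\;=\; \frac{\gamma(\gamma+1)}{(\gamma+E)(1-E)} - (1+\gamma).
\end{align*}
Multiplying through by $(\gamma+E)(1-E) = \gamma + (1-\gamma)E - E^2 \geq 0$ (non-negative since $E\in[0,1-1/\gamma]$ by \eqref{eq1:E_gamma}) and setting $C:=1+\gamma+\chi^2(P\|Q)$, the inequality rearranges to the quadratic
\begin{align*}
C\,E^2 + C(\gamma-1)\,E \;-\; C\gamma \;+\; \gamma(\gamma+1) \;\leq\; 0.
\end{align*}
Applying the quadratic formula and using $1-(\gamma+1)/C = \chi^2(P\|Q)/C$ yields \eqref{eq: chi^2-EG}, after selecting the larger root (since $E\geq 0$ must lie below the upper zero of a quadratic with positive leading coefficient).

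For \eqref{eq: generalized BH}, I would start from \eqref{eq: cor - RD} at $\alpha=1$, giving
\begin{align*}
D(P\|Q) \;\geq\; -\log\!\left(\left(1+\tfrac{1}{\gamma}E\right)(1-E)\right).
\end{align*}
Exponentiating and multiplying by $\gamma$ gives $\gamma\exp(-D(P\|Q)) \leq \gamma + (1-\gamma)E - E^2$, i.e.
\begin{align*}
E^2 + (\gamma-1)E + \gamma\bigl(\exp(-D(P\|Q))-1\bigr) \;\leq\; 0,
\end{align*}
and the quadratic formula, again selecting the larger (positive) root, produces \eqref{eq: generalized BH}.

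There is no real obstacle here; the work is essentially bookkeeping. The two small points worth flagging are (i) justifying that the appropriate root is the larger one (the inequality is quadratic with positive leading coefficient, so the admissible set for $E$ is the closed interval between the two roots, and since $E\geq 0$ we keep the upper bound), and (ii) checking that the discriminants under the square roots are non-negative, which follows from $\chi^2(P\|Q)\geq 0$ and $D(P\|Q)\geq 0$ respectively, so that $1-\exp(-D(P\|Q))\geq 0$.
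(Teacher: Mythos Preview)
Your proposal is correct and follows exactly the route the paper indicates: specialize Corollary~\ref{cor: Hel RD} at $\alpha=2$ in \eqref{eq: cor - Hel} and at $\alpha=1$ in \eqref{eq: cor - RD}, then invert the resulting inequalities by solving a quadratic in $E=E_\gamma(P\|Q)$. The paper states only the specialization step and leaves the algebra implicit; you have supplied it correctly. One minor correction: the range you quote, $E\in[0,\,1-1/\gamma]$, is not right (for $\gamma=1$ it would force $E=0$); from \eqref{eq1:E_gamma} one has $E\in[0,1]$, which is all you need for $(\gamma+E)(1-E)\geq 0$.
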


\begin{remark}
From \cite[(58)]{ReidW11},
\begin{align} \label{eq: lb chi-square - TV}
\chi^2(P\|Q) \geq
\begin{dcases}
|P-Q|^2, & \quad \mbox{$|P-Q| \in \bigl[0, 1)$} \\[0.2cm]
\frac{|P-Q|}{2-|P-Q|}, & \quad \mbox{$|P-Q| \in \bigl[1, 2)$}
\end{dcases}
\end{align}
is a tight lower bound on the chi-squared divergence as a function of the
total variation distance. In view of \eqref{eq:EG-TV}, we compare
\eqref{eq: lb chi-square - TV} with the specialized version of
\eqref{eq: chi^2-EG} when $\gamma=1$. The latter bound is expected to be
looser than the tight bound in \eqref{eq: lb chi-square - TV}, as a result of
the use of Jensen's inequality in the proof of Theorem~\ref{theorem: f-div ineq};
however, it is interesting to examine how much we loose in the tightness of
this specialized bound with $\gamma=1$. From \eqref{eq:EG-TV}, the substitution
of $\gamma=1$ in \eqref{eq: chi^2-EG} gives
\begin{align} \label{eq2: LB Chi^2-TV}
\chi^2(P\|Q) \geq \frac{2 |P-Q|^2}{4-|P-Q|^2}, & \qquad |P-Q| \in [0,2),
\end{align}
and, it can be easily verified that
\begin{itemize}
\item if $|P-Q| \in [0,1)$, then the lower bound in the right side of \eqref{eq2: LB Chi^2-TV} is
at most twice smaller than the tight lower bound in the right side of
\eqref{eq: lb chi-square - TV};
\item if $|P-Q| \in [1,2)$, then the lower bound in the right side of \eqref{eq2: LB Chi^2-TV} is
at most $\tfrac{3}{2}$ times smaller than the tight lower bound in the right side of
\eqref{eq: lb chi-square - TV}.
\end{itemize}
\end{remark}

\begin{remark}
Setting $\gamma=1$ in \eqref{eq: generalized BH}, and using \eqref{eq:EG-TV},
specializes to the Bretagnole-Huber inequality \cite{BretagnolleH79}:
\begin{align} \label{eq: BretagnolleH79}
|P-Q| \leq 2 \sqrt{1- \exp\bigl(-D(P\|Q)\bigr)}.
\end{align}
\end{remark}

\par
Inequality \eqref{eq: BretagnolleH79} forms a counterpart to Pinsker's inequality:
\begin{align}  \label{eq: Pinsker}
\tfrac12  |P-Q|^2 \log e \leq D(P \| Q),
\end{align}
proved by Csisz\'{a}r \cite{Csiszar67a} and Kullback \cite{kullbackTV67}, with Kemperman
\cite{kemperman} independently a bit later. As upper bounds on the total variation distance,
\eqref{eq: Pinsker} outperforms \eqref{eq: BretagnolleH79} if $D(P\|Q) \leq 1.594$ nats,
and \eqref{eq: BretagnolleH79} outperforms \eqref{eq: Pinsker} for larger values of
$D(P\|Q)$.

\begin{remark}
In \cite[(8)]{Vajda70}, Vajda introduced a lower bound on the relative entropy
as a function of the total variation distance:
\begin{align} \label{eq: Vajda's LB}
D(P\|Q) \geq \log\left(\frac{2+|P-Q|}{2-|P-Q|}\right)-\frac{2|P-Q| \, \log e}{2+|P-Q|}, \quad |P-Q| \in [0,2).
\end{align}
The lower bound in the right side of \eqref{eq: Vajda's LB} is asymptotically tight
in the sense that it tends to $\infty$ if $|P-Q| \uparrow 2$, and the difference between
$D(P\|Q)$ and this lower bound is everywhere upper bounded
by $\frac{2|P-Q|^3}{(2+|P-Q|)^2} \leq 4$ (see \cite[(9)]{Vajda70}). The Bretagnole-Huber
inequality in \eqref{eq: BretagnolleH79}, on the other hand, is equivalent to
\begin{align} \label{eq2: BretagnolleH79}
D(P\|Q) \geq -\log\left(1 - \tfrac14 |P-Q|^2 \right), \quad |P-Q| \in [0,2).
\end{align}
Although it can be verified numerically that the lower bound on the relative entropy
in \eqref{eq: Vajda's LB} is everywhere slightly tighter than the lower bound in \eqref{eq2: BretagnolleH79}
(for $|P-Q| \in [0,2)$), both lower bounds on $D(P\|Q)$ are of the same asymptotic tightness
in a sense that they both tend to $\infty$ as $|P-Q| \uparrow 2$ and their ratio tends to~1.
Apart of their asymptotic tightness, the Bretagnole-Huber inequality in \eqref{eq2: BretagnolleH79}
is appealing since it provides a closed-form simple upper bound on $|P-Q|$ as a function of $D(P\|Q)$
(see \eqref{eq: BretagnolleH79}), whereas such a closed-form simple upper bound cannot be obtained from
\eqref{eq: Vajda's LB}.
In fact, by the substitution $v := -\frac{2-|P-Q|}{2+|P-Q|}$ and the exponentiation of both sides
of \eqref{eq: Vajda's LB}, we get the inequality $v e^v \geq -\tfrac1e \, \exp\bigl(-D(P\|Q)\bigr)$
whose solution is expressed by the Lambert $W$ function \cite{Corless96};
it can be verified that \eqref{eq: Vajda's LB} is equivalent to the following upper bound
on the total variation distance as a function of the relative entropy:
\begin{align}
\label{eq: UB igal}
& |P-Q| \leq \frac{2 \bigl(1+W(z)\bigr)}{1-W(z)}, \\[0.1cm]
\label{eq2: UB igal}
& z := -\tfrac1{e} \, \exp\bigl(-D(P\|Q)\bigr),
\end{align}
where $W$ in the right side of \eqref{eq: UB igal} denotes the principal real branch of the Lambert $W$ function.
The difference between the upper bounds in \eqref{eq: BretagnolleH79} and \eqref{eq: UB igal}
can be verified to be marginal if $D(P\|Q)$ is large (e.g., if $D(P\|Q)=4$ nats, then the upper bounds
on $|P-Q|$ are respectively equal to 1.982 and 1.973), though the former upper bound in \eqref{eq: BretagnolleH79}
is clearly more simple and amenable to analysis.

The Bretagnole-Huber inequality in \eqref{eq: BretagnolleH79} is proved to be useful in the context of lower
bounding the minimax risk (see, e.g., \cite[pp.~89--90, 94]{Tsybakov09}), and the problem of density estimation
(see, e.g., \cite[Section~1.6]{Vapnik98}). The utility of this inequality motivates its generalization in this section
(see Corollaries~\ref{cor: Hel RD} and~\ref{corollary 3}, and also see later Theorem~\ref{theorem: DG UBs} followed by
Example~\ref{example: Poisson}).
\end{remark}

\vspace*{0.2cm}
In \cite[Section~7.C]{ISSV16}, Sason and Verd\'{u} generalized Pinsker's inequality
by providing an upper bound on the $E_\gamma$ divergence, for $\gamma>1$, as a function
of the relative entropy. In view of \eqref{eq:EG-TV} and the optimality of the constant in
Pinsker's inequality \eqref{eq: Pinsker}, it follows
that the minimum achievable $D(P\|Q)$ is quadratic in $E_1(P\|Q)$ for small values of $E_1 (P\|Q)$.
It has been proved in \cite[Section~7.C]{ISSV16} that this situation ceases to be the case for
$\gamma > 1$, in which case it is possible to upper bound $E_\gamma(P\|Q)$ as a constant times
$D(P\|Q)$ where this constant tends to infinity as we let $\gamma \downarrow 1$.
We next cite the result in \cite[Theorem~30]{ISSV16}, extending \eqref{eq: Pinsker}
by means of the $E_\gamma$ divergence for $\gamma>1$, and compare it numerically to
the bound in \eqref{eq: generalized BH}.

\begin{theorem} (\cite[Theorem~30]{ISSV16})
\label{thm:EG vs. RE}
For every $\gamma > 1$,
\begin{align}  \label{eq:sup-EG and RE}
\sup \frac{E_{\gamma}(P\|Q)}{D(P\|Q)} = c_{\gamma}
\end{align}
where the supremum is over $P \ll Q,  P \neq Q$, and $c_{\gamma}$ is a universal
function (independent of $(P, Q)$), given by
\begin{align}
\label{eq: c_gamma}
& c_{\gamma} = \frac{t_\gamma-\gamma}{t_\gamma \, \log t_\gamma + (1-t_\gamma) \, \log e},
\\[0.1cm]
\label{eq: t_gamma}
& t_\gamma = - \gamma \, W_{-1}\left(-\tfrac1{\gamma} \, e^{-\frac1{\gamma}} \right)
\end{align}
where $W_{-1}$ in \eqref{eq: t_gamma} denotes the secondary real branch of the
Lambert $W$ function \cite{Corless96}.
\end{theorem}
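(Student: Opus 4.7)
The plan is to prove the claim by establishing two matching inequalities: (i) the universal upper bound $E_\gamma(P\|Q) \leq c_\gamma \, D(P\|Q)$ for every $P \ll Q$ with $P \neq Q$, and (ii) the existence of a sequence of admissible pairs for which this ratio approaches $c_\gamma$. Throughout I work in natural logarithms; the factor $\log e$ in \eqref{eq: c_gamma} only absorbs a change of logarithm base and plays no structural role.

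The reduction for the upper bound is to the pointwise inequality
\begin{align*}
\psi(p) := c_\gamma \, (p \ln p - p + 1) - (p - \gamma)^+ \geq 0, \quad p \geq 0,
\end{align*}
since integrating $\psi\bigl(\tfrac{\mathrm{d}P}{\mathrm{d}Q}\bigr)$ against $Q$ and using $\int \tfrac{\mathrm{d}P}{\mathrm{d}Q} \, \mathrm{d}Q = 1$ yields exactly $c_\gamma D(P\|Q) - E_\gamma(P\|Q)$. On $(0, \gamma]$ the $(\cdot)^+$ is inactive and $\psi \geq 0$ is the standard Kullback inequality $p \ln p \geq p - 1$. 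On $(\gamma, \infty)$, $\psi'(p) = c_\gamma \ln p - 1$ vanishes uniquely at $p_\star = e^{1/c_\gamma}$, which is the global minimum on this ray since $\psi'' > 0$. The key algebraic step is to identify $p_\star$ with $t_\gamma$: from $t_\gamma = -\gamma \, W_{-1}\bigl(-e^{-1/\gamma}/\gamma\bigr)$ and the defining property $W(z) e^{W(z)} = z$ one deduces $\gamma \ln t_\gamma = t_\gamma - 1$, and \eqref{eq: c_gamma} then collapses to $c_\gamma = \gamma/(t_\gamma - 1) = 1/\ln t_\gamma$, so $t_\gamma = e^{1/c_\gamma} = p_\star$. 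One also verifies $t_\gamma > \gamma$ because the $W_{-1}$ branch takes values in $(-\infty, -1]$, and because for $\gamma > 1$ the argument $-e^{-1/\gamma}/\gamma$ lies in the domain $(-1/e, 0)$ of $W_{-1}$ (equivalent to $\ln \gamma + 1/\gamma > 1$, which holds strictly for $\gamma > 1$). Direct substitution then gives
\begin{align*}
\psi(t_\gamma) = (t_\gamma - \gamma)\left[\frac{c_\gamma (t_\gamma - 1)}{\gamma} - 1\right] = 0,
\end{align*}
closing the upper bound.

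For achievability I would take the binary pair $Q_n = (q_n, 1 - q_n)$ and $P_n = (q_n t_\gamma, 1 - q_n t_\gamma)$ with $q_n \downarrow 0$, so that the likelihood ratio is $t_\gamma > \gamma$ on the first atom and $s_n = (1 - q_n t_\gamma)/(1 - q_n) \to 1^-$ on the second. Then $E_\gamma(P_n\|Q_n) = q_n(t_\gamma - \gamma)$, while a first-order expansion $\ln s_n = -q_n(t_\gamma - 1) + o(q_n)$ yields $D(P_n\|Q_n) = q_n \bigl[t_\gamma \ln t_\gamma - (t_\gamma - 1)\bigr] + o(q_n)$; using $\gamma \ln t_\gamma = t_\gamma - 1$ this becomes $q_n(t_\gamma - \gamma) \ln t_\gamma + o(q_n)$. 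Hence $E_\gamma(P_n\|Q_n)/D(P_n\|Q_n) \to 1/\ln t_\gamma = c_\gamma$, matching the upper bound; the supremum is therefore equal to $c_\gamma$, and it is approached but not attained since $\psi(p) > 0$ strictly for $p \notin \{1, t_\gamma\}$.

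The main obstacle is the algebraic bookkeeping that reduces the closed-form expressions \eqref{eq: c_gamma}--\eqref{eq: t_gamma} to the single transcendental identity $\gamma \ln t_\gamma = t_\gamma - 1$ together with its consequence $c_\gamma = 1/\ln t_\gamma = \gamma/(t_\gamma - 1)$. Both the identification $p_\star = t_\gamma$ in the upper-bound step and the Taylor matching in the achievability step hinge on this identity, so careful handling of the Lambert $W$ inversion and of the $\log e$ conversion factor is the single delicate point.
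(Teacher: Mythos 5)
Your proof is correct. Note that the paper itself offers no proof of this statement---it is imported verbatim as \cite[Theorem~30]{ISSV16}---so there is no internal argument to compare against; what you have supplied is a self-contained derivation. Both halves check out. For the upper bound, the reduction to the pointwise inequality $\psi(p)\ge 0$ is valid because $\int\bigl(p\ln p-p+1\bigr)\,\mathrm{d}Q=D(P\|Q)$ and $\int(p-\gamma)^{+}\,\mathrm{d}Q=E_\gamma(P\|Q)$ with $p=\tfrac{\mathrm{d}P}{\mathrm{d}Q}$, and your Lambert-$W$ bookkeeping is right: $W_{-1}$ applied to $-e^{-1/\gamma}/\gamma$ (which lies in $(-1/e,0)$ precisely because $\ln\gamma>1-1/\gamma$ for $\gamma>1$) yields $\gamma\ln t_\gamma=t_\gamma-1$ with $t_\gamma>\gamma$, whence $c_\gamma=\gamma/(t_\gamma-1)=1/\ln t_\gamma>0$ and $\psi$ attains its minimum value $0$ on $(\gamma,\infty)$ exactly at $t_\gamma$, while on $(0,\gamma]$ non-negativity is just $p\ln p\ge p-1$. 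The achievability via the two-point pair with likelihood ratios $t_\gamma$ and $s_n\to 1$ is also correct, and the first-order cancellation $t_\gamma\ln t_\gamma-(t_\gamma-1)=(t_\gamma-\gamma)\ln t_\gamma$ is exactly what makes the ratio converge to $1/\ln t_\gamma=c_\gamma$. This ``dominating tangent function plus vanishing perturbation'' route is arguably more elementary than the derivation in the cited reference, which obtains $c_\gamma$ as the limiting slope of the minimal relative entropy subject to a prescribed value of $E_\gamma$; your version has the advantage of exhibiting the extremal likelihood-ratio value $t_\gamma$ directly as the tangency point of $\psi$.
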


As an immediate consequence of \eqref{eq:sup-EG and RE}, it follows that
\begin{align} \label{eq:SLB-EG-RE}
E_{\gamma}(P\|Q) \leq  c_{\gamma} D(P\|Q),
\end{align}
which forms a straight-line bound on the $E_{\gamma}$ divergence as a function of
the relative entropy for $\gamma>1$.
Similarly to the comparison of the Bretagnole-Huber inequality \eqref{eq: BretagnolleH79} and
Pinsker's inequality \eqref{eq: Pinsker}, we exemplify numerically that the extension of
Pinsker's inequality to the $E_\gamma$ divergence in \eqref{eq:SLB-EG-RE} forms a counterpart
to the generalized version of the Bretagnole-Huber inequality in \eqref{eq: generalized BH}.

\begin{figure}[h]
\vspace*{-4.5cm}
\centerline{\includegraphics[width=12cm]{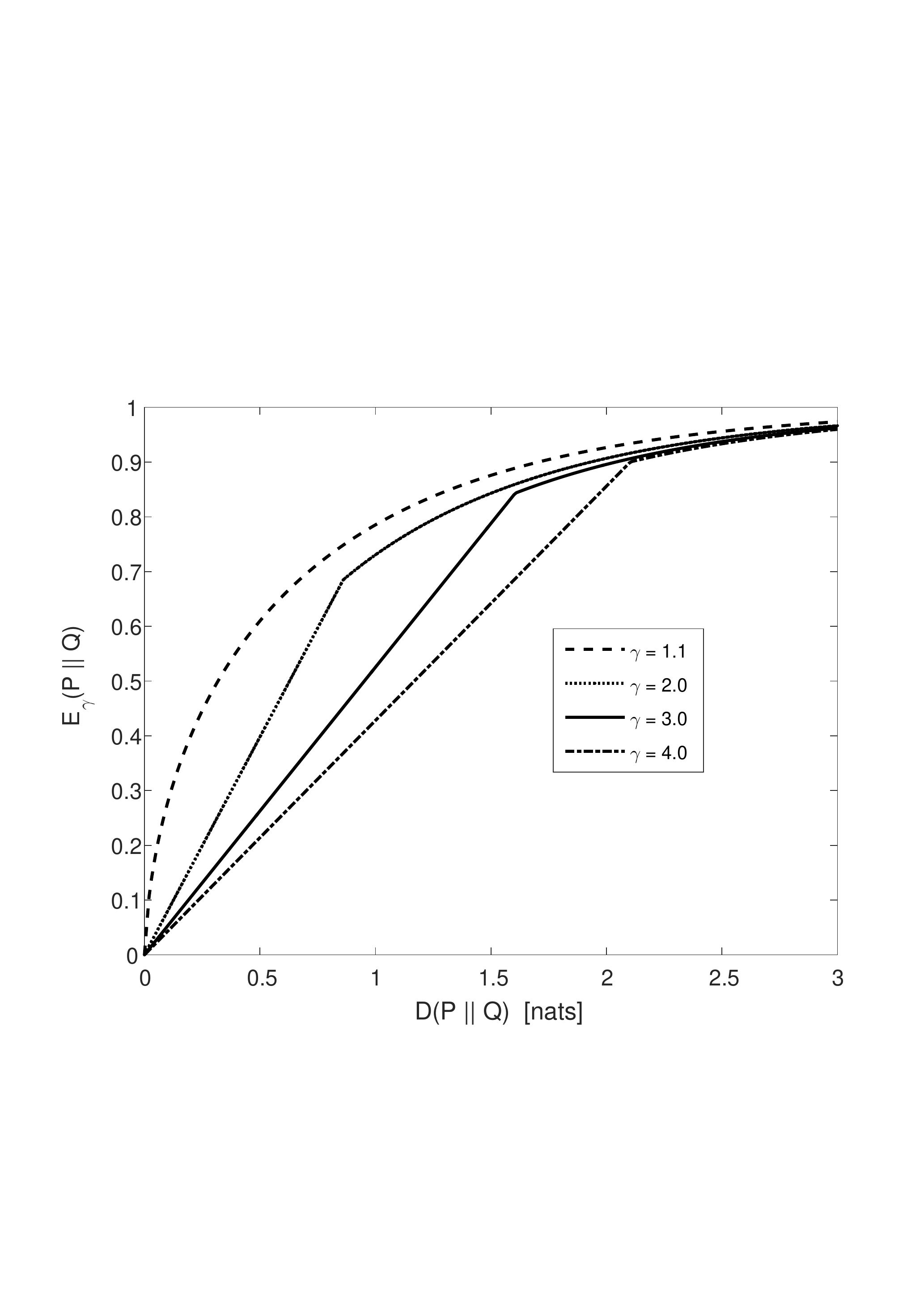}}
\vspace*{-3.6cm}
\caption{\label{figure:RE-EG}
Upper bounds on the $E_\gamma$ divergence, for $\gamma > 1$, as a function of the relative entropy
(the curvy and straight lines follow from \eqref{eq: generalized BH} and \eqref{eq:SLB-EG-RE}, respectively).}
\end{figure}
Figure~\ref{figure:RE-EG} plots an upper bound on the $E_\gamma$ divergence,
for $\gamma \in \{1.1, 2.0, 3.0, 4.0\}$, as a function of the relative entropy
(or, alternatively, a lower bound on the relative entropy as a function of the
$E_\gamma$ divergence).
The upper bound on $E_\gamma(P\|Q)$ for $\gamma > 1$, as a function of $D(P\|Q)$,
is composed of the following two components:
\begin{enumerate}[a)]
\item the straight-line bound, which refers to the right side of \eqref{eq:SLB-EG-RE}, is tighter than
the bound in the right side of \eqref{eq: generalized BH} if the relative entropy is below a certain value
that is denoted by $d(\gamma)$ in nats (it depends on $\gamma$);
\item the curvy line, which refers to the bound in the right side of \eqref{eq: generalized BH},
is tighter than the straight-line bound in the right side of \eqref{eq:SLB-EG-RE} for larger values
of the relative entropy.
\end{enumerate}
It is supported by Figure~\ref{figure:RE-EG} that $d \colon (1, \infty) \mapsto (0, \infty)$ is positive
and monotonically increasing, and $\underset{\gamma \downarrow 1}{\lim} \, d(\gamma)=0$; e.g., it
can be verified that $d(1.1) \approx 0.02$, $d(2) \approx 0.86$, $d(3) \approx 1.61$, and
$d(4) \approx 2.10$ (see Figure~\ref{figure:RE-EG}).

\subsection*{Bayesian Binary Hypothesis Testing}
The DeGroot statistical information \cite{DeGroot62} has the following meaning:
consider two hypotheses $H_{\mathtt{0}}$ and $H_{\mathtt{1}}$, and let
$\prob[H_{\mathtt{0}}]=\omega$ and $\prob[H_{\mathtt{1}}]=1-\omega$ with $\omega \in (0,1)$.
Let $P$ and $Q$ be probability measures, and consider an observation $Y$ where
$Y | H_{\mathtt{0}} \sim P$, and $Y | H_{\mathtt{1}} \sim Q$. Suppose that one wishes
to decide which hypothesis is more likely given the observation $Y$. The operational meaning
of the DeGroot statistical information, denoted by $\set{I}_\omega(P\|Q)$, is that this
measure is equal to the minimal difference between the {\em a-priori} error probability
(without side information) and {\em a posteriori} error probability (given the observation $Y$).
This measure was later identified as an $f$-divergence by Liese and Vajda \cite{LieseV_IT2006}
(see \eqref{eq:DG f-div} here).

\begin{theorem} \label{theorem: DG UBs}
The DeGroot statistical information satisfies the following upper bound as a function of the
chi-squared divergence:
\begin{align}
\label{eq: DG-chi^2 UB}
\mathcal{I}_\omega(P\|Q) \leq
\begin{dcases}
\omega - \tfrac12 + \sqrt{ \tfrac14 - \frac{\omega (1-\omega)}{1+\omega \, \chi^2(P\|Q)}} \, ,
& \quad \omega \in \bigl(0, \tfrac12 \bigr], \\[0.2cm]
\tfrac12 - \omega + \sqrt{ \tfrac14 - \frac{\omega (1-\omega)}{1+\omega \, \chi^2(Q\|P)}} \, ,
& \quad \omega \in \bigl(\tfrac12, 1\bigr),
\end{dcases}
\end{align}
and the following bounds as a function of the relative entropy:
\begin{enumerate}[1)]
\item
\begin{align}
\label{eq1: DG-KL UB}
\mathcal{I}_\omega(P\|Q) \leq
\begin{dcases}
\omega \, c_{\frac{1-\omega}{\omega}} \, D(P\|Q)\, ,
& \quad \omega \in \bigl(0, \tfrac12\bigr), \\[0.2cm]
\sqrt{ \tfrac1{8 \log e} \, \min\bigl\{ D(P\|Q), D(Q\|P) \bigr\}} \, , & \quad \omega = \tfrac12, \\[0.2cm]
(1-\omega) \, c_{\frac{\omega}{1-\omega}} \, D(Q\|P) \, ,
& \quad \omega \in \bigl(\tfrac12, 1\bigr),
\end{dcases}
\end{align}
where $c_\gamma$ for $\gamma > 1$ is introduced in \eqref{eq: c_gamma};
\item
\begin{align}
\label{eq2: DG-KL UB}
\mathcal{I}_\omega(P\|Q) \leq
\begin{dcases}
\omega - \tfrac12 + \sqrt{\tfrac14 - \omega (1-\omega) \, \exp\bigl(-D(P\|Q)\bigr)}\, ,
& \quad \omega \in \bigl(0, \tfrac12\bigr], \\[0.2cm]
\tfrac12 - \omega + \sqrt{\tfrac14 - \omega (1-\omega) \, \exp\bigl(-D(Q\|P)\bigr)}\, ,
& \quad \omega \in \bigl(\tfrac12, 1\bigr).
\end{dcases}
\end{align}
\end{enumerate}
\end{theorem}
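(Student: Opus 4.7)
The plan is to reduce every bound in the theorem to the already-established bounds on the $E_\gamma$ divergence via the identity in \eqref{eq: DG-EG}, which expresses the DeGroot statistical information in terms of $E_\gamma$ divergence with a parameter $\gamma = \tfrac{1-\omega}{\omega}$ or $\gamma = \tfrac{\omega}{1-\omega}$, both of which lie in $[1,\infty)$ on the relevant range of $\omega$. Once this reduction is made, the three stated bounds correspond one-to-one to the three tools proved earlier: \eqref{eq: chi^2-EG} for the $\chi^2$ bound, \eqref{eq:SLB-EG-RE} for the first relative-entropy bound, and \eqref{eq: generalized BH} for the second. The main technical content is then a bookkeeping calculation confirming that, after substitution, the right-hand sides coincide with those in \eqref{eq: DG-chi^2 UB}, \eqref{eq1: DG-KL UB}, and \eqref{eq2: DG-KL UB}.

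First, for \eqref{eq: DG-chi^2 UB} and \eqref{eq2: DG-KL UB}, I would treat the case $\omega \in (0,\tfrac12]$ and set $\gamma := \tfrac{1-\omega}{\omega} \geq 1$, so that \eqref{eq: DG-EG} gives $\mathcal{I}_\omega(P\|Q) = \omega \, E_\gamma(P\|Q)$. Plugging this into \eqref{eq: chi^2-EG}, the linear term becomes $\omega \cdot \tfrac12 (1-\gamma) = \omega - \tfrac12$, and the factor $\omega$ can be pushed inside the square root as $\omega^2$. Using $\omega^2(\gamma-1)^2 = (1-2\omega)^2$ and $\omega^2 \cdot 4\gamma = 4\omega(1-\omega)$, followed by a single common-denominator manipulation with $1+\gamma+\chi^2(P\|Q) = \tfrac{1+\omega\chi^2(P\|Q)}{\omega}$, collapses the radicand to $\tfrac14 - \tfrac{\omega(1-\omega)}{1+\omega\chi^2(P\|Q)}$, matching \eqref{eq: DG-chi^2 UB}. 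The exactly analogous substitution into \eqref{eq: generalized BH} yields $\sqrt{\tfrac14 - \omega(1-\omega)\exp(-D(P\|Q))}$, giving \eqref{eq2: DG-KL UB}. The case $\omega \in (\tfrac12,1)$ is handled by the parallel identity $\mathcal{I}_\omega(P\|Q) = (1-\omega) \, E_{\omega/(1-\omega)}(Q\|P)$, which by the symmetry $\omega \leftrightarrow 1-\omega$ yields the stated bounds with $P$ and $Q$ swapped.

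For \eqref{eq1: DG-KL UB}, when $\omega \neq \tfrac12$ the bound is immediate from \eqref{eq: DG-EG} combined with \eqref{eq:SLB-EG-RE}, giving $\mathcal{I}_\omega(P\|Q) \leq \omega \, c_{(1-\omega)/\omega} \, D(P\|Q)$ for $\omega \in (0,\tfrac12)$ and the symmetric statement for $\omega \in (\tfrac12,1)$. The boundary case $\omega = \tfrac12$ requires a separate argument, since $\gamma = 1$ trivializes Theorem~\ref{thm:EG vs. RE}. Here I would invoke \eqref{eq:DG-TV} to rewrite $\mathcal{I}_{1/2}(P\|Q) = \tfrac14 \, |P-Q|$ and then apply Pinsker's inequality \eqref{eq: Pinsker} in both directions $D(P\|Q)$ and $D(Q\|P)$ (using the symmetry of $|P-Q|$), which gives $\tfrac14|P-Q| \leq \sqrt{\tfrac{1}{8\log e}\,\min\{D(P\|Q),D(Q\|P)\}}$ as required.

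The main obstacle is purely the algebraic verification in the first step: one has to check carefully that after the substitution $\gamma = \tfrac{1-\omega}{\omega}$ both the linear and radical terms collapse to the compact forms in \eqref{eq: DG-chi^2 UB} and \eqref{eq2: DG-KL UB}, and that the corresponding manipulations go through for $\omega > \tfrac12$ under the swap $P \leftrightarrow Q$. The only conceptual subtlety is at $\omega = \tfrac12$ in \eqref{eq1: DG-KL UB}, where the $E_\gamma$-to-relative-entropy linear bound degenerates and must be replaced by Pinsker's inequality; everything else is bookkeeping applied to results established earlier in the section.
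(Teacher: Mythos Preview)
Your proposal is correct and matches the paper's own proof essentially line for line: the paper also derives \eqref{eq: DG-chi^2 UB} by combining \eqref{eq: DG-EG} with \eqref{eq: chi^2-EG}, derives \eqref{eq1: DG-KL UB} from \eqref{eq: DG-EG} with \eqref{eq:SLB-EG-RE} for $\omega \neq \tfrac12$ and from \eqref{eq:DG-TV} with Pinsker's inequality \eqref{eq: Pinsker} for $\omega = \tfrac12$, and derives \eqref{eq2: DG-KL UB} from \eqref{eq: DG-EG} with \eqref{eq: generalized BH}. Your write-up simply spells out the algebraic bookkeeping that the paper leaves implicit.
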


\begin{proof}
The first bound in \eqref{eq: DG-chi^2 UB} holds by combining \eqref{eq: DG-EG} and \eqref{eq: chi^2-EG};
the second bound in \eqref{eq1: DG-KL UB} follows from \eqref{eq:SLB-EG-RE} and \eqref{eq: DG-EG} for
$\omega \in \bigl(0, \tfrac12\bigr) \cup \bigl(\tfrac12, 1\bigr)$, and it follows from
\eqref{eq:DG-TV} and \eqref{eq: Pinsker} when $\omega = \tfrac12$; finally,
the third bound in \eqref{eq2: DG-KL UB} follows from \eqref{eq: generalized BH} and \eqref{eq: DG-EG}.
\end{proof}

\begin{remark}
The bound in \eqref{eq1: DG-KL UB} forms an extension of Pinsker's inequality \eqref{eq: Pinsker}
when $\omega \neq \tfrac12$ (i.e., in the asymmetric case where the hypotheses $H_{\mathtt{0}}$ and
$H_{\mathtt{1}}$ are not equally probable).
Furthermore, in view of \eqref{eq:DG-TV}, the bound in \eqref{eq2: DG-KL UB} is specialized
to the Bretagnole-Huber inequality in \eqref{eq: BretagnolleH79} by letting $\omega = \tfrac12$.
\end{remark}

\begin{remark}
Numerical evidence shows that none of the bounds in \eqref{eq: DG-chi^2 UB}--\eqref{eq2: DG-KL UB}
supersedes the others.
\end{remark}

\begin{remark}
The upper bounds on $\set{I}_\omega(P_\mu \| P_\lambda)$ in \eqref{eq: DG-chi^2 UB} and
\eqref{eq2: DG-KL UB} are asymptotically tight when we let $D(P\|Q)$ and $D(Q\|P)$ tend
to infinity. To verify this, first note that (see \cite[Theorem~5]{GibbsSu02})
\begin{align} \label{grout425 - introduction}
D ( P \| Q) &\leq \log \bigl( 1 + \chi^2(P\| Q) \bigr),
\end{align}
which implies that also $\chi^2(P\| Q)$ and $\chi^2(Q\| P)$ tend to infinity. In this case,
it can be readily verified that the bounds in \eqref{eq: DG-chi^2 UB} and \eqref{eq2: DG-KL UB} are
specialized to $\set{I}_\omega(P\|Q) \leq \min\{\omega, 1-\omega\}$; this upper bound, which
is equal to the {\em a-priori} error probability, is also equal to the DeGroot statistical
information since the {\em a-posterior} error probability tends to zero in the considered
extreme case where $P$ and $Q$ are sufficiently far from each other, so that $H_{\mathtt{0}}$
and $H_{\mathtt{1}}$ are easily distinguishable in high probability when the observation $Y$
is available.
\end{remark}

\begin{remark}
Due to the one-to-one correspondence between the $E_\gamma$ divergence and DeGroot statistical
information in \eqref{eq: DG-EG}, which shows that the two measures are related by a multiplicative
scaling factor, the numerical results shown in Figure~\ref{figure:RE-EG} also apply to the bounds
in \eqref{eq1: DG-KL UB} and \eqref{eq2: DG-KL UB}; i.e., for $\omega \neq \tfrac12$, the first bound
in \eqref{eq1: DG-KL UB} is tighter than the second bound in \eqref{eq2: DG-KL UB} for small values
of the relative entropy, whereas \eqref{eq2: DG-KL UB} becomes tighter than \eqref{eq1: DG-KL UB}
for larger values of the relative entropy.
\end{remark}

\begin{corollary} \label{corolary: DG}
Let $f \in \set{C}$, and let $f^\ast \in \set{C}$ be as defined in \eqref{eq: conjugate f}.
Then,
\begin{enumerate}[1)]
\item
for $w \in (0, \tfrac12\bigr]$,
\begin{align} \label{eq1: ineq. with DG}
D_f(P\|Q) \geq f^\ast\left( 1 + \frac{\set{I}_w(P \| Q)}{1-w} \right)
+ f^\ast\left( \frac{w - \set{I}_w(P \| Q)}{1-w} \right)
- f^\ast\left(\frac{w}{1-w}\right);
\end{align}
\item
for $w \in \bigl(\tfrac12, 1 \bigr)$,
\begin{align} \label{eq2: ineq. with DG}
D_f(P\|Q) \geq f^\ast\left( 1 + \frac{\set{I}_w(Q \| P)}{w} \right)
+ f^\ast\left( \frac{1-w - \set{I}_w(Q \| P)}{w} \right)
- f^\ast\left(\frac{1-w}{w}\right).
\end{align}
\end{enumerate}
\end{corollary}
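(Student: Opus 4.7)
The plan is to derive Corollary~\ref{corolary: DG} as an immediate specialization of Theorem~\ref{theorem: f-div ineq}, using the identity \eqref{eq: DG-EG} to rewrite $E_\gamma(P\|Q)$ back into the DeGroot statistical information for a suitable choice of $\gamma$ as a function of $w$. Since \eqref{eq: DG-EG} expresses $\set{I}_w$ (up to a multiplicative factor) as an $E_\gamma$ divergence with $\gamma \geq 1$, and Theorem~\ref{theorem: f-div ineq} is valid for every $\gamma \geq 1$, no new inequality is needed beyond this change of parameter.

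For the regime $w \in (0, \tfrac12]$, I would set $\gamma := \frac{1-w}{w} \geq 1$ in \eqref{eq: f-div ineq}. The first branch of \eqref{eq: DG-EG} gives $E_\gamma(P\|Q) = \frac{\set{I}_w(P\|Q)}{w}$, while $\tfrac1\gamma = \frac{w}{1-w}$, so the two arguments of $f^\ast$ inside \eqref{eq: f-div ineq} collapse to $1 + \frac{\set{I}_w(P\|Q)}{1-w}$ and $\frac{w - \set{I}_w(P\|Q)}{1-w}$, and the subtracted term becomes $f^\ast\!\left(\frac{w}{1-w}\right)$. This is precisely \eqref{eq1: ineq. with DG}.

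For the regime $w \in (\tfrac12, 1)$, the plan is analogous with $\gamma := \frac{w}{1-w} > 1$. Swapping the roles of $P$ and $Q$ in the second branch of \eqref{eq: DG-EG}, which is legitimate because \eqref{eq: DG-EG} holds for any pair of probability measures, yields $E_\gamma(P\|Q) = \frac{\set{I}_w(Q\|P)}{1-w}$, and combined with $\tfrac1\gamma = \frac{1-w}{w}$ the same algebraic simplification in \eqref{eq: f-div ineq} produces \eqref{eq2: ineq. with DG}.

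I do not foresee any genuine obstacle: the entire derivation is a pair of substitutions. The only point that calls for care is keeping track of which branch of \eqref{eq: DG-EG} to invoke, and in particular noting that the second regime naturally expresses the bound in terms of $\set{I}_w(Q\|P)$ rather than $\set{I}_w(P\|Q)$ because the underlying formula in that branch is stated with the roles of the two measures interchanged.
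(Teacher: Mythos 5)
Your proposal is correct and is exactly the paper's proof: the paper likewise obtains both inequalities by combining Theorem~\ref{theorem: f-div ineq} with the relation \eqref{eq: DG-EG}, choosing $\gamma=\tfrac{1-w}{w}$ on $\bigl(0,\tfrac12\bigr]$ and $\gamma=\tfrac{w}{1-w}$ (with $P$ and $Q$ interchanged in the second branch) on $\bigl(\tfrac12,1\bigr)$. The algebraic substitutions you carry out all check.
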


\begin{proof}
Inequalities~\eqref{eq1: ineq. with DG} and \eqref{eq2: ineq. with DG} follow by combining \eqref{eq: f-div ineq} and \eqref{eq: DG-EG}.
\end{proof}

We end this section by exemplifying the utility of the bounds in Theorem~\ref{theorem: DG UBs}.
\begin{example} \label{example: Poisson}
Let $\prob[H_{\mathtt{0}}]=\omega$ and $\prob[H_{\mathtt{1}}]=1-\omega$ with $\omega \in (0,1)$,
and assume that the observation $Y$ given that the hypothesis is $H_{\mathtt{0}}$ or
$H_{\mathtt{1}}$ is Poisson distributed with the positive parameter $\mu$ or $\lambda$, respectively:
\begin{align}
& Y | H_{\mathtt{0}} \sim P_\mu, \\
& Y | H_{\mathtt{1}} \sim P_\lambda
\end{align}
where
\begin{align}
P_\lambda[k] = \frac{e^{-\lambda} \lambda^k}{k!}, \quad k \in \{0, 1, \ldots\}.
\end{align}
Without any loss of generality, let $\omega \in \bigl(0, \tfrac12\bigr]$.
The bounds on the DeGroot statistical information $\set{I}_\omega(P_\mu \| P_\lambda)$
in Theorem~\ref{theorem: DG UBs} can be expressed in a closed form by relying on the
following identities:
\begin{align}
& D(P_\mu \| P_\lambda) = \mu \log\Bigl(\frac{\mu}{\lambda}\Bigr) + (\lambda-\mu) \log e, \\
& \chi^2(P_\mu \| P_\lambda) = e^{\frac{(\mu-\lambda)^2}{\lambda}} - 1.
\end{align}
In this example, we compare the simple closed-form bounds on $\set{I}_\omega(P_\mu \| P_\lambda)$
in \eqref{eq: DG-chi^2 UB}--\eqref{eq2: DG-KL UB} with its exact value
\begin{align} \label{eq1: exact DG}
\set{I}_\omega(P_\mu \| P_\lambda) &= \min\{\omega, 1-\omega\} - \sum_{k=0}^{\infty} \min \Bigl\{ \omega P_\mu[k], (1-\omega) P_\lambda[k] \Bigr\}.
\end{align}
To simplify the right side of \eqref{eq1: exact DG}, let $\mu > \lambda$, and define
\begin{align} \label{eq: k_0}
k_0 = k_0(\lambda, \mu, \omega) := \left\lfloor \frac{\mu-\lambda
+ \ln \frac{1-\omega}{\omega}}{\ln \frac{\mu}{\lambda}} \right\rfloor,
\end{align}
where, for $x \in \Reals$, $\lfloor x \rfloor$ denotes the largest integer
that is smaller than or equal to $x$. It can be verified that
\begin{align}
\begin{dcases} \label{eq: compare PMFs}
\omega P_\mu[k] \leq (1-\omega) P_\lambda[k], & \quad \mbox{for $k \leq k_0$}\\
\omega P_\mu[k] > (1-\omega) P_\lambda[k], & \quad \mbox{for $k > k_0$.}
\end{dcases}
\end{align}
Hence, from \eqref{eq1: exact DG}--\eqref{eq: compare PMFs},
\begin{align}  \label{eq2: exact DG}
\set{I}_\omega(P_\mu \| P_\lambda) &= \min\{\omega, 1-\omega\}
- \omega \sum_{k=0}^{k_0} P_\mu[k] - (1-\omega) \sum_{k=k_0+1}^{\infty} P_\lambda[k] \\
\label{eq3: exact DG}
&= \min\{\omega, 1-\omega\} - \omega \sum_{k=0}^{k_0} P_\mu[k]
- (1-\omega) \left(1- \sum_{k=0}^{k_0} P_\lambda[k] \right).
\end{align}
To exemplify the utility of the bounds in Theorem~\ref{theorem: DG UBs}, suppose that
$\mu$ and $\lambda$ are close, and we wish to obtain a guarantee on how small
$\set{I}_\omega(P_\mu \| P_\lambda)$ is. For example, let $\lambda = 99$, $\mu = 101$,
and $\omega = \tfrac1{10}$. The upper bounds on $\set{I}_\omega(P_\mu \| P_\lambda)$ in
\eqref{eq: DG-chi^2 UB}--\eqref{eq2: DG-KL UB} are, respectively, equal to $4.6 \cdot 10^{-4}$,
$5.8 \cdot 10^{-4}$ and $2.2 \cdot 10^{-3}$; we therefore get an informative guarantee by
easily calculable bounds. The exact value of $\set{I}_\omega(P_\mu \| P_\lambda)$ is, on
the other hand, hard to compute since $k_0 = 209$ (see \eqref{eq: k_0}), and the calculation
of the right side of \eqref{eq3: exact DG} appears to be sensitive to the selected parameters in this setting.
\end{example}

\section{Local Behavior of $f$-divergences}
\label{section: local behavior}
This section is focused on the local behavior of $f$-divergences;
the starting point relies on \cite[Section~3]{PardoV03} which studies
the asymptotic properties of $f$-divergences. The reader is also
referred to a related study in \cite[Section~4.F]{ISSV16}.

\begin{lemma} \label{lemma: tilted}
Let
\begin{itemize}
\item
$\{P_n\}$ be a sequence of probability measures on a measurable space $(\mathcal{A}, \mathscr{F})$;
\item
the sequence $\{P_n\}$ converge to a probability measure $Q$ in the sense that
\begin{align}
\label{eq: 1st condition}
\lim_{n \to \infty} \text{ess\,sup} \frac{\text{d}P_n}{\text{d}Q} \, (Y) = 1, \quad Y \sim Q
\end{align}
where $P_n \ll Q$ for all sufficiently large $n$;
\item $f, g \in \mathcal{C}$ have continuous second derivatives
at~1 and $g''(1) > 0$.
\end{itemize}
Then
\begin{align} \label{eq: limit of ratio of f-div}
\lim_{n \to \infty} \frac{D_f(P_n \| Q)}{D_g(P_n \| Q)} = \frac{f''(1)}{g''(1)}.
\end{align}
\end{lemma}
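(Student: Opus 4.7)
The plan is to perform a second-order Taylor expansion of $f$ about $t=1$ and integrate against $Q$. Set $Z_n := \tfrac{\text{d}P_n}{\text{d}Q}$, which exists for all $n$ sufficiently large by the absolute-continuity hypothesis. Since $P_n$ and $Q$ are probability measures with $P_n \ll Q$, we have $\expectation_Q[Z_n]=1$ and hence $\expectation_Q[Z_n-1]=0$; this eliminates the linear term in the expansion upon integration. Moreover $\expectation_Q[(Z_n-1)^2] = \chi^2(P_n\|Q)$, and this common chi-squared factor will govern the scale of both numerator and denominator.

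Continuity of $f''$ at $1$ together with $f(1)=0$ yields, for every $\epsilon>0$, some $\eta>0$ such that
\[
\left| f(t) - f'(1)(t-1) - \tfrac12 f''(1)(t-1)^2 \right| \leq \epsilon\,(t-1)^2, \quad |t-1|<\eta.
\]
The essential-supremum condition \eqref{eq: 1st condition} forces $Z_n \to 1$ $Q$-almost surely in a uniform sense, so for all $n$ sufficiently large $|Z_n-1|<\eta$ $Q$-a.s.\ and the above estimate applies pointwise with $t=Z_n$. Integrating against $Q$ and using $\expectation_Q[Z_n-1]=0$,
\[
\left| D_f(P_n\|Q) - \tfrac12 f''(1)\,\chi^2(P_n\|Q) \right| \leq \epsilon\,\chi^2(P_n\|Q),
\]
and the analogous bound holds for $g$ with $g''(1)$ in place of $f''(1)$.

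For $n$ large with $P_n \neq Q$, $\chi^2(P_n\|Q)>0$; picking $\epsilon < g''(1)/4$ ensures $D_g(P_n\|Q)>0$ via the $g$-bound, so the ratio is well defined. Dividing the two inequalities yields
\[
\frac{\tfrac12 f''(1)-\epsilon}{\tfrac12 g''(1)+\epsilon} \leq \frac{D_f(P_n\|Q)}{D_g(P_n\|Q)} \leq \frac{\tfrac12 f''(1)+\epsilon}{\tfrac12 g''(1)-\epsilon},
\]
and letting first $n\to\infty$ and then $\epsilon\downarrow 0$ pinches the ratio at $f''(1)/g''(1)$. The main obstacle I anticipate is converting \eqref{eq: 1st condition} into an a.s.\ uniform bound $|Z_n-1|\leq\delta_n\to 0$ so that the Taylor remainder is controlled pointwise and thus integrates to $o(\chi^2(P_n\|Q))$; this is the step where the strong essential-supremum mode of convergence is essential, as opposed to weaker notions such as $L^1$ convergence or convergence in probability, under which the quadratic approximation could be spoiled by large deviations of $Z_n$ on small sets.
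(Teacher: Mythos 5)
Your argument is, in substance, the paper's own proof carried out from first principles: the paper invokes the analysis of Pardo and Vajda to obtain the sandwich
$\bigl|D_f(P_n\|Q)-\tfrac12 f''(1)\,\chi^2(P_n\|Q)\bigr| \le \tfrac12\sup_{|y-1|\le\varepsilon_n}|f''(y)-f''(1)|\,\chi^2(P_n\|Q)$
together with $\chi^2(P_n\|Q)\to0$, and then divides the two resulting limits; your Taylor expansion, with the linear term annihilated by $\expectation_Q[Z_n-1]=0$, produces exactly the same estimate.

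The obstacle you flag in your last sentence is, however, the one genuine gap, and it cannot be closed from \eqref{eq: 1st condition} as literally written: that condition is one-sided, bounding $Z_n$ only from above, and it does not imply $\esssup|Z_n-1|\to0$. Concretely, take $Q$ uniform on $[0,1]$ and $\mathrm{d}P_n/\mathrm{d}Q$ equal to $0$ on $[0,a_n]$ and to $(1-a_n)^{-1}$ on $(a_n,1]$ with $a_n\downarrow0$; then $\esssup \mathrm{d}P_n/\mathrm{d}Q=(1-a_n)^{-1}\to1$, yet $\chi^2(P_n\|Q)=a_n/(1-a_n)$ while $D(P_n\|Q)=-\log_{\mathrm{e}}(1-a_n)$, so that $D(P_n\|Q)/\chi^2(P_n\|Q)\to1\neq\tfrac12$, violating \eqref{eq: limit of ratio of f-div} for $f(t)=t\log_{\mathrm{e}}t$ and $g(t)=(t-1)^2$: the exceptional set where $Z_n$ is far below $1$ has vanishing $Q$-measure but contributes a non-negligible fraction of $\chi^2(P_n\|Q)$, so the quadratic approximation is indeed "spoiled by large deviations of $Z_n$ on small sets" exactly as you feared. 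Hence your step "forces $Z_n\to1$ $Q$-a.s.\ in a uniform sense" is an additional hypothesis rather than a deduction. This is not a defect of your argument relative to the paper's: the proof via \eqref{eq1: PardoV}, whose supremum runs over the two-sided interval $[1-\varepsilon_n,1+\varepsilon_n]$, tacitly reads \eqref{eq: 1st condition} as the two-sided condition $\esssup\bigl|\mathrm{d}P_n/\mathrm{d}Q-1\bigr|\to0$, and that is what actually holds where the lemma is applied (in Theorem~\ref{theorem: local behavior 2018}, $\mathrm{d}R_n/\mathrm{d}Q-1=\lambda_n(\mathrm{d}P/\mathrm{d}Q-1)$ is bounded on both sides under \eqref{eq: bounded RND}). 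Under that two-sided reading your proof is complete and correct.
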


\begin{proof}
This follows from \cite[Theorem~3]{PardoV03}, even
without the additional restriction in \cite[Section~3]{PardoV03} which would require
that the second derivatives of $f$ and $g$ are locally Lipschitz at a neighborhood of~1.
More explicitly, in view of the analysis in \cite[p.~1863]{PardoV03}, we get by relaxing
the latter restriction that (cf. \cite[(31)]{PardoV03})
\begin{align}
& \bigl| D_f(P_n \| Q) - \tfrac12 \, f''(1) \, \chi^2(P_n \| Q) \bigr| \nonumber \\
\label{eq1: PardoV}
& \leq \tfrac12 \, \sup_{y \in [1-\varepsilon_n, \, 1+\varepsilon_n]} \bigl| f''(y)-f''(1) \bigr| \; \chi^2(P_n \| Q),
\end{align}
with $\varepsilon_n \downarrow 0$ as we let $n \to \infty$, and also
\begin{align} \label{eq2: PardoV}
\lim_{n \to \infty} \chi^2(P_n \| Q) = 0.
\end{align}
By our assumption, due to the continuity of $f''$ and $g''$ at~1, it follows from \eqref{eq1: PardoV}
and \eqref{eq2: PardoV} that
\begin{align}
\label{eq3: PardoV}
\lim_{n \to \infty} \frac{D_f(P_n \| Q)}{\chi^2(P_n \| Q)} = \tfrac12 \, f''(1), \\[0.2cm]
\label{eq4: PardoV4}
\lim_{n \to \infty} \frac{D_g(P_n \| Q)}{\chi^2(P_n \| Q)} = \tfrac12 \, g''(1),
\end{align}
which yields \eqref{eq: limit of ratio of f-div} (recall that, by assumption, $g''(1)>0$).
\end{proof}

\vspace*{0.1cm}
\begin{remark}
Since $f$ and $g$ in Lemma~\ref{lemma: tilted} are assumed
to have continuous second derivatives at~1, the left and right derivatives
of the weight function $w_f$ in \eqref{eq: weight function} at~1 satisfy,
in view of Remark~\ref{remark 2: w_f},
\begin{align}
w_f'(1^+) = -w_f'(1^-) = f''(1).
\end{align}
Hence, the limit in the right side of \eqref{eq: limit of ratio of f-div}
is equal to $\frac{w_f'(1^+)}{w_g'(1^+)}$ or also to $\frac{w_f'(1^-)}{w_g'(1^-)}$.
\end{remark}

\vspace*{0.1cm}
\begin{lemma} \label{lemma: chi^2 div}
\begin{align}
\label{eq: chi^2 identity}
\chi^2(\lambda P + (1-\lambda)Q \, \| \, Q) = \lambda^2 \, \chi^2(P \| Q), \quad \forall \, \lambda \in [0,1].
\end{align}
\end{lemma}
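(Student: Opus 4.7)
The plan is to unfold the definition of $\chi^2$ as an $f$-divergence with $f(t) = (t-1)^2$ (see \eqref{eq: chi squared}--\eqref{eq: f for chi^2}), compute the Radon-Nikodym derivative of the mixture $\lambda P + (1-\lambda)Q$ with respect to $Q$, and observe that the quadratic integrand factors out a clean $\lambda^2$.

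More concretely, I would first handle the edge cases: if $\lambda = 0$ then both sides vanish trivially, and if $\lambda \in (0,1]$ I choose a dominating measure $\mu$ with $P,Q \ll \mu$ (e.g.\ $\mu = P+Q$) and set $p = \mathrm{d}P/\mathrm{d}\mu$, $q = \mathrm{d}Q/\mathrm{d}\mu$. The density of $R_\lambda := \lambda P + (1-\lambda)Q$ with respect to $\mu$ is $r_\lambda = \lambda p + (1-\lambda) q$. Since $R_\lambda \ll Q$ is required for a finite $\chi^2$, the computation below is understood with the convention \eqref{eq2:fD2} (the support issue is automatic because $r_\lambda/q = \lambda(p/q) + (1-\lambda)$ on $\{q>0\}$, and on $\{q=0\}$ the mass of $R_\lambda$ is $\lambda \cdot P(q=0)$, contributing $\lambda^2 P(q=0) \, f^\ast(0)$ by the same scaling).

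The main step is then the direct computation
\begin{align}
\chi^2(R_\lambda \| Q)
&= \int q \left( \frac{\lambda p + (1-\lambda) q}{q} - 1 \right)^{\!2} \mathrm{d}\mu
 = \int q \left( \lambda \, \frac{p}{q} + (1-\lambda) - 1 \right)^{\!2} \mathrm{d}\mu \notag \\
&= \int q \, \lambda^2 \left( \frac{p}{q} - 1 \right)^{\!2} \mathrm{d}\mu
 = \lambda^2 \int q \left( \frac{p}{q} - 1 \right)^{\!2} \mathrm{d}\mu
 = \lambda^2 \, \chi^2(P \| Q). \notag
\end{align}
There is no genuine obstacle here: the identity is a direct consequence of the fact that the function $f(t)=(t-1)^2$ is homogeneous of degree~2 in $(t-1)$, combined with the affine relation $r_\lambda/q - 1 = \lambda(p/q - 1)$. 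The only mild subtlety to mention is the treatment of the set $\{q=0\}$, which is resolved uniformly by the $0 \cdot \infty = 0$ convention in \eqref{eq2:fD2} and yields the same $\lambda^2$ scaling on both sides (and produces $+\infty = \lambda^2 \cdot \infty$ precisely when $P \not\ll Q$ and $\lambda > 0$).
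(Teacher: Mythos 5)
Your proposal is correct and follows essentially the same route as the paper: the paper's proof is exactly the direct density computation $\chi^2(\lambda P + (1-\lambda)Q \,\|\, Q) = \int \frac{\bigl((\lambda p + (1-\lambda)q) - q\bigr)^2}{q}\,\mathrm{d}\mu = \lambda^2 \int \frac{(p-q)^2}{q}\,\mathrm{d}\mu$ with respect to a common dominating measure, i.e., your factorization of $r_\lambda/q - 1 = \lambda(p/q-1)$. Your extra discussion of the set $\{q=0\}$ is fine (both sides are $+\infty$ there when $\lambda>0$, though the coefficient of $f^\ast(0)$ is $\lambda P(q=0)$ rather than $\lambda^2 P(q=0)$, which is immaterial since $f^\ast(0)=\infty$), and the paper simply leaves this implicit.
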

\begin{proof}
Let $p = \frac{\mathrm{d}P}{\mathrm{d}\mu}$
and $q = \frac{\mathrm{d}Q}{\mathrm{d}\mu}$
be the densities of $P$ and $Q$
with respect to an arbitrary probability measure
$\mu$ such that $P, Q \ll \mu$. Then,
\begin{align}
\chi^2(\lambda P + (1-\lambda)Q \, \| \, Q) &= \int \frac{\bigl( (\lambda p + (1-\lambda)q) - q \bigr)^2}{q} \, \mathrm{d}\mu \\
&= \lambda^2 \int \frac{(p-q)^2}{q}  \, \mathrm{d}\mu \\
&= \lambda^2 \; \chi^2(P \| Q).
\end{align}
\end{proof}

\begin{remark}
The result in Lemma~\ref{lemma: chi^2 div}, for the chi-squared divergence,
is generalized to the identity
\begin{align}
\label{eq: chi^s identity}
\chi^s(\lambda P + (1-\lambda)Q \, \| \, Q) = \lambda^s \, \chi^s(P \| Q), \quad \forall \, \lambda \in [0,1],
\end{align}
for all $s \geq 1$ (see \eqref{eq: chi^s div}). The special case of $s=2$ is required in the continuation of this section.
\end{remark}

\begin{remark}
The result in Lemma~\ref{lemma: chi^2 div} can be generalized as follows:
let $P, Q, R$ be probability measures, and $\lambda \in [0,1]$. Let
$P, Q, R \ll \mu$ for an arbitrary probability measure $\mu$, and
$p := \frac{\mathrm{d}P}{\mathrm{d}\mu}$, $q := \frac{\mathrm{d}Q}{\mathrm{d}\mu}$,
and $r := \frac{\mathrm{d}R}{\mathrm{d}\mu}$ be the corresponding
densities with respect to $\mu$. Calculation shows that
\begin{align} \label{eq: gen. identity chi^2}
\chi^2(\lambda P + (1-\lambda) Q \, \| \, R) - \chi^2(Q \| R)
&= c \lambda + \bigl[ \chi^2(P\|R) - \chi^2(Q\|R) - c\bigr] \lambda^2
\end{align}
with
\begin{align} \label{eq: c}
c & := \int \frac{(p-q)q}{r} \, \mathrm{d}\mu.
\end{align}
If $Q=R$, then $c=0$ in \eqref{eq: c}, and \eqref{eq: gen. identity chi^2}
is specialized to \eqref{eq: chi^2 identity}. However, if $Q \neq R$, then
$c$ may be non-zero. This shows that, for small $\lambda \in [0,1]$, the
left side of \eqref{eq: gen. identity chi^2} scales linearly in $\lambda$
if $c \neq 0$, and it has a quadratic scaling in $\lambda$ if $c=0$ and
$\chi^2(P\|R) \neq \chi^2(Q\|R)$ (e.g., if $Q=R$, as in Lemma~\ref{lemma: chi^2 div}).
The identity in \eqref{eq: gen. identity chi^2} yields
\begin{align}
\frac{\mathrm{d}}{\mathrm{d}\lambda} \, \chi^2(\lambda P + (1-\lambda) Q \, \| \, R) \, \Bigl|_{\lambda=0}
= \lim_{\lambda \downarrow 0} \, \frac{\chi^2(\lambda P + (1-\lambda) Q \, \| \, R) - \chi^2(Q \| R)}{\lambda} = c.
\end{align}
\end{remark}

We next state the main result in this section.
\begin{theorem} \label{theorem: local behavior 2018}
Let
\begin{itemize}
\item $P$ and $Q$ be probability measures defined on a measurable space $(\mathcal{A}, \mathscr{F})$, $Y \sim Q$, and suppose that
\begin{align} \label{eq: bounded RND}
\text{ess\,sup} \frac{\text{d}P}{\text{d}Q} \, (Y) < \infty;
\end{align}
\item $f \in \mathcal{C}$, and $f''$ be continuous at 1.
\end{itemize}
Then,
\begin{align}
\lim_{\lambda \downarrow 0} \frac1{\lambda^2} \; D_f( \lambda P + (1-\lambda)Q \, \| \, Q)
&= \lim_{\lambda \downarrow 0} \frac1{\lambda^2} \; D_f(Q \, \| \, \lambda P + (1-\lambda)Q)
\label{eq1: local behavior} \\[0.1cm]
&= \tfrac12 \, f''(1) \, \chi^2(P \| Q).
\label{eq2: local behavior}
\end{align}
\end{theorem}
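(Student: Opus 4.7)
Set $P_\lambda := \lambda P + (1-\lambda)Q$ for $\lambda \in (0,1]$. The strategy is to reduce both limits to Lemma~\ref{lemma: tilted} applied to the one-parameter family $\{P_\lambda\}$ as $\lambda \downarrow 0$, using the chi-squared divergence as a benchmark, and then to invoke Lemma~\ref{lemma: chi^2 div} to expose the $\lambda^2$ scaling.

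The first step is to check the hypotheses of Lemma~\ref{lemma: tilted} with $Q$ playing the role of the limit measure. Since $\frac{\mathrm{d}P_\lambda}{\mathrm{d}Q} = \lambda \frac{\mathrm{d}P}{\mathrm{d}Q} + (1-\lambda)$, assumption \eqref{eq: bounded RND} gives
\begin{align*}
\esssup \frac{\mathrm{d}P_\lambda}{\mathrm{d}Q}(Y) = \lambda \esssup \frac{\mathrm{d}P}{\mathrm{d}Q}(Y) + (1-\lambda) \; \longrightarrow \; 1,
\end{align*}
so condition \eqref{eq: 1st condition} holds along any sequence $\lambda_n \downarrow 0$. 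Applying Lemma~\ref{lemma: tilted} to the pair $(f, g)$ with $g(t) := (t-1)^2$, for which $g''(1) = 2$ and $D_g = \chi^2$ by \eqref{eq: f for chi^2}, yields $\lim_{\lambda \downarrow 0} D_f(P_\lambda \| Q) / \chi^2(P_\lambda \| Q) = \tfrac12 f''(1)$. Combined with $\chi^2(P_\lambda \| Q) = \lambda^2 \, \chi^2(P\|Q)$ from Lemma~\ref{lemma: chi^2 div}, this delivers the first equality $\lambda^{-2} D_f(P_\lambda\|Q) \to \tfrac12 f''(1) \, \chi^2(P\|Q)$. The degenerate case $\chi^2(P\|Q)=0$, equivalent to $P=Q$, is trivial since both sides vanish identically.

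For the reversed-argument limit $\lambda^{-2} \, D_f(Q \| P_\lambda)$, the plan is to exploit the conjugate symmetry $D_f(Q \| P_\lambda) = D_{f^\ast}(P_\lambda \| Q)$ from Proposition~\ref{proposition: conjugate} and rerun the previous argument with $f$ replaced by $f^\ast$. A direct computation from $f^\ast(t) = t \, f(1/t)$ gives $(f^\ast)''(t) = t^{-3} \, f''(1/t)$, so $(f^\ast)''$ inherits continuity at $1$ from $f''$, with $(f^\ast)''(1) = f''(1)$. The same chain of deductions therefore yields $\lambda^{-2} \, D_{f^\ast}(P_\lambda \| Q) \to \tfrac12 f''(1) \, \chi^2(P\|Q)$, which is the second equality.

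No serious obstacle is expected; the two technical points deserving care are the essential-supremum convergence displayed above --- where it is crucial that \eqref{eq: bounded RND} provides a \emph{finite} bound so that the $\lambda$-term vanishes --- and the identity $(f^\ast)''(1) = f''(1)$ for the Csisz\'ar conjugate, both of which are routine.
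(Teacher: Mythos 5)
Your proposal is correct and follows essentially the same route as the paper's proof: verify the essential-supremum condition for $P_\lambda = \lambda P + (1-\lambda)Q$, invoke the local comparison with the chi-squared divergence from Lemma~\ref{lemma: tilted} (the paper uses the intermediate limit \eqref{eq3: PardoV} directly, which is equivalent to your choice $g(t)=(t-1)^2$), apply the scaling identity of Lemma~\ref{lemma: chi^2 div}, and handle the reversed arguments via the conjugate $f^\ast$ with $(f^\ast)''(1)=f''(1)$. Your explicit treatment of the degenerate case $P=Q$ is a minor point the paper leaves implicit.
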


\begin{proof}
Let $\{\lambda_n\}_{n \in \ensuremath{\mathbb{N}}}$ be a sequence in $[0,1]$, which tends to zero.
Define the sequence of probability measures
\begin{align} \label{eq: R_n}
R_n := \lambda_n P + (1-\lambda_n) Q, \qquad n \in \ensuremath{\mathbb{N}}.
\end{align}
Note that $P \ll Q$ implies that $R_n \ll Q$ for all $n \in \ensuremath{\mathbb{N}}$. Since
\begin{align}
\frac{\mathrm{d}R_n}{\mathrm{d}Q} = \lambda_n \, \frac{\mathrm{d}P}{\mathrm{d}Q} + (1-\lambda_n),
\end{align}
it follows from \eqref{eq: bounded RND} that
\begin{align}
\lim_{n \to \infty} \text{ess\,sup} \frac{\mathrm{d}R_n}{\mathrm{d}Q} \; (Y) = 1.
\end{align}
Consequently, \eqref{eq3: PardoV} implies that
\begin{align} \label{eq: limit R_n}
\lim_{n \to \infty} \frac{D_f(R_n \| Q)}{\chi^2(R_n \| Q)} = \tfrac12 \, f''(1)
\end{align}
where $\{\lambda_n\}$ in \eqref{eq: R_n} is an arbitrary sequence which tends to zero.
Hence, it follows from \eqref{eq: R_n} and \eqref{eq: limit R_n} that
\begin{align} \label{eq: 101}
\lim_{\lambda \downarrow 0} \frac{D_f(\lambda P + (1-\lambda) Q
\, \| \, Q)}{\chi^2(\lambda P + (1-\lambda) Q \, \| \, Q)}
= \tfrac12 \, f''(1),
\end{align}
and, by combining \eqref{eq: chi^2 identity} and \eqref{eq: 101}, we get
\begin{align} \label{eq1: limit}
\lim_{\lambda \downarrow 0} \frac1{\lambda^2} \; D_f( \lambda P
+ (1-\lambda)Q \, \| \, Q) = \tfrac12 \, f''(1) \, \chi^2(P \| Q).
\end{align}

We next prove the result for the limit in the right side of \eqref{eq1: local behavior}.
Let $f^\ast \colon (0, \infty) \mapsto \ensuremath{\mathbb{R}}$ be the conjugate function of $f$,
which is given in \eqref{eq: conjugate f}. By the assumption that $f$ has a second continuous
derivative, so is $f^\ast$ and it is easy to verify that the second derivatives of $f$
and $f^\ast$ coincide at~1. Hence, from \eqref{eq: Df and Df^ast} and \eqref{eq1: limit},
\begin{align}
\lim_{\lambda \downarrow 0} \frac1{\lambda^2} \; D_f(Q \, \| \, \lambda P + (1-\lambda)Q)
&= \lim_{\lambda \downarrow 0} \frac1{\lambda^2} \; D_{f^\ast}( \lambda P + (1-\lambda)Q \, \| \, Q) \\
&= \tfrac12 \, f''(1) \, \chi^2(P \| Q).
\end{align}
\end{proof}

\begin{remark}
Although an $f$-divergence is in general not symmetric, in the sense that
the equality $D_f(P\|Q) = D_f(Q\|P)$ does not necessarily hold for all pairs
of probability measures $(P,Q)$, the reason for the equality in
\eqref{eq1: local behavior} stems from the fact that the second derivatives
of $f$ and $f^\ast$ coincide at~1 when $f$ is twice differentiable.
\end{remark}

\begin{remark}
Under the conditions in Theorem~\ref{theorem: local behavior 2018}, it follows
from \eqref{eq2: local behavior} that
\begin{align}
\label{first derivative D_f}
& \frac{\mathrm{d}}{\mathrm{d}\lambda} \; D_f(\lambda P + (1-\lambda)Q \, \| \, Q) \, \Bigl|_{\lambda=0}
= \lim_{\lambda \downarrow 0} \frac1\lambda \; D_f(\lambda P + (1-\lambda) Q \, \| \, Q) = 0, \\[0.1cm]
\label{second derivative D_f}
& \lim_{\lambda \downarrow 0} \; \frac{\mathrm{d^2}}{\mathrm{d}\lambda^2} \; D_f(\lambda P + (1-\lambda)Q \, \| \, Q)
= 2 \, \lim_{\lambda \downarrow 0} \frac1{\lambda^2} \; D_f(\lambda P + (1-\lambda) Q \, \| \, Q) = f''(1) \, \chi^2(P\|Q)
\end{align}
where \eqref{second derivative D_f} relies on L'H\^{o}pital's rule.
The convexity of $D_f(P \| Q)$ in $(P,Q)$ also implies that, for all $\lambda \in (0,1]$,
\begin{align}
D_f(\lambda P + (1-\lambda) Q \, \| \, Q) \leq \lambda D_f(P \| Q).
\end{align}
\end{remark}

The following result refers to the local behavior of R\'{e}nyi divergences of an
arbitrary non-negative order.
\begin{corollary}
Under the condition in \eqref{eq: bounded RND}, for every $\alpha \in [0, \infty]$,
\begin{align}
\lim_{\lambda \downarrow 0} \frac1{\lambda^2} \; D_\alpha( \lambda P + (1-\lambda)Q \, \| \, Q)
&= \lim_{\lambda \downarrow 0} \frac1{\lambda^2} \; D_\alpha(Q \, \| \, \lambda P + (1-\lambda)Q)
\label{eq1: local behavior RD} \\[0.1cm]
&= \tfrac12 \, \alpha \, \chi^2(P \| Q) \, \log e.
\label{eq2: local behavior RD}
\end{align}
\end{corollary}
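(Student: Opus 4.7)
The plan is to reduce the R\'{e}nyi statement to an application of Theorem~\ref{theorem: local behavior 2018} via the identity \eqref{renyimeetshellinger}, which links R\'{e}nyi and Hellinger divergences of the same order. The main case is $\alpha \in (0,1) \cup (1, \infty)$; the boundary orders $\alpha \in \{0, 1, \infty\}$ are treated separately.

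For $\alpha \in (0,1) \cup (1,\infty)$, I would first observe that $f_\alpha''(1) = \alpha$ from \eqref{eq: H as fD} and invoke Theorem~\ref{theorem: local behavior 2018} (whose hypothesis \eqref{eq: bounded RND} is exactly our assumption) to obtain
\begin{align}
\lim_{\lambda \downarrow 0} \frac{1}{\lambda^2} \, \mathscr{H}_\alpha(\lambda P + (1-\lambda)Q \, \| \, Q)
= \lim_{\lambda \downarrow 0} \frac{1}{\lambda^2} \, \mathscr{H}_\alpha(Q \, \| \, \lambda P + (1-\lambda)Q)
= \tfrac12 \, \alpha \, \chi^2(P\|Q).
\end{align}
In particular, both Hellinger divergences are $O(\lambda^2)$ as $\lambda \downarrow 0$. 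Setting $x := (\alpha-1) \mathscr{H}_\alpha(\cdot \, \| \, \cdot) = O(\lambda^2)$ in \eqref{renyimeetshellinger} and using the Taylor expansion $\log(1+x) = x \log e + O(x^2)$, the prefactor $\tfrac{1}{\alpha-1}$ cancels and I obtain
\begin{align}
D_\alpha(\lambda P + (1-\lambda)Q \, \| \, Q) = \mathscr{H}_\alpha(\lambda P + (1-\lambda)Q \, \| \, Q) \, \log e + O(\lambda^4),
\end{align}
and symmetrically with the arguments swapped. Dividing by $\lambda^2$ and passing to the limit yields \eqref{eq1: local behavior RD}--\eqref{eq2: local behavior RD} in this range.

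The case $\alpha = 1$ is the relative entropy: $D_1(P\|Q) = D(P\|Q) = D_f(P\|Q)$ with $f(t) = t \log t$, for which $f''(1) = \log e$, so Theorem~\ref{theorem: local behavior 2018} gives the claim directly. For $\alpha = 0$, both divergences vanish identically: using the extended definition $D_0(P'\|Q') = -\log Q'(\mathrm{d}P'/\mathrm{d}Q' > 0)$, together with $\mathrm{d}R_\lambda/\mathrm{d}Q = \lambda \, \mathrm{d}P/\mathrm{d}Q + (1-\lambda) \geq 1 - \lambda > 0$ (and symmetrically for $\mathrm{d}Q/\mathrm{d}R_\lambda$), both limits are $0 = \tfrac12 \cdot 0 \cdot \chi^2(P\|Q) \log e$. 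For $\alpha = \infty$ with $M := \esssup_Q \, \mathrm{d}P/\mathrm{d}Q < \infty$ by \eqref{eq: bounded RND}, we have $D_\infty(R_\lambda \| Q) = \log(1 + \lambda(M-1))$, which is of order $\lambda$ whenever $P \neq Q$, so $\lambda^{-2} D_\infty(R_\lambda \| Q) \to \infty$, matching $\tfrac12 \cdot \infty \cdot \chi^2(P\|Q) \log e = \infty$ (the reverse direction is analogous with $\essinf$ replacing $\esssup$, and the degenerate subcase $P = Q$ is trivial on both sides).

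The only step worth double-checking is the Taylor expansion: uniformity of the $O(x^2)$ remainder is immediate since $\mathscr{H}_\alpha(\cdot \| \cdot) \to 0$ puts the argument into any prescribed neighborhood of $0$ eventually. Beyond that, the only bookkeeping is to track the sign of $\alpha - 1$ when $\alpha \in (0,1)$ and to verify the endpoint case analysis; no substantial obstacle arises.
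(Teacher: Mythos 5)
Your proposal is correct, and for the main range $\alpha \in (0,1) \cup (1,\infty)$ it follows exactly the paper's route: apply Theorem~\ref{theorem: local behavior 2018} to $f_\alpha$ in \eqref{eq: H as fD} (with $f_\alpha''(1)=\alpha$) to get the local quadratic behavior of $\mathscr{H}_\alpha$, then pass to $D_\alpha$ through \eqref{renyimeetshellinger} via the first-order expansion of the logarithm; your remark about the uniformity of the remainder and the sign of $\alpha-1$ is the same observation the paper makes implicitly when computing $\lim_{u\to 0}\frac{\log(1+(\alpha-1)u)}{(\alpha-1)u}=\log e$. The only genuine divergence is at the endpoints $\alpha\in\{0,1,\infty\}$: the paper dispatches all three at once by sandwiching $D_\alpha$ between $D_{\alpha'}$ and $D_{\alpha''}$ using the monotonicity of the R\'enyi divergence in its order (\cite[Theorem~3]{ErvenH14}) and letting $\alpha',\alpha''$ tend to the endpoint, whereas you compute each case directly ($f(t)=t\log t$ with $f''(1)=\log e$ for $\alpha=1$; vanishing of $D_0$ from the lower bound $\mathrm{d}R_\lambda/\mathrm{d}Q\geq 1-\lambda$; the explicit $\log(1+\lambda(M-1))$ asymptotics for $\alpha=\infty$). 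Your endpoint computations are all valid (and the $\alpha=\infty$ case is arguably more informative, since it exhibits the first-order rate $\lambda(M-1)\log e$ explicitly), but the monotonicity argument is shorter and avoids having to invoke the extended definitions of $D_0$ and $D_\infty$ at all. Either way the corollary stands.
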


\begin{proof}
Let $\alpha \in (0,1) \cup (1, \infty)$. In view of \eqref{eq: H as fD} and Theorem~\ref{theorem: local behavior 2018},
it follows that the local behavior of the Hellinger divergence of order $\alpha$ satisfies
\begin{align}
\lim_{\lambda \downarrow 0} \frac1{\lambda^2} \; \mathscr{H}_{\alpha}( \lambda P + (1-\lambda)Q \, \| \, Q)
&= \lim_{\lambda \downarrow 0} \frac1{\lambda^2} \; \mathscr{H}_{\alpha}(Q \, \| \, \lambda P + (1-\lambda)Q)
\label{eq1: local behavior Hel} \\[0.1cm]
&= \tfrac12 \, \alpha \, \chi^2(P \| Q).
\label{eq2: local behavior Hel}
\end{align}
The result now follows from \eqref{renyimeetshellinger}, which implies that
\begin{align}
\label{eq1: ratio RD/Hel}
\lim_{\lambda \downarrow 0} \frac{D_{\alpha}( \lambda P + (1-\lambda)Q \, \| \, Q)}{\mathscr{H}_{\alpha}( \lambda P + (1-\lambda)Q \, \| \, Q)}
&= \lim_{\lambda \downarrow 0} \frac{D_{\alpha}(Q \, \| \, \lambda P + (1-\lambda)Q)}{\mathscr{H}_{\alpha}(Q \, \| \, \lambda P + (1-\lambda)Q)} \\
&= \frac1{\alpha-1} \lim_{u \to 0} \frac{\log \bigl(1 + (\alpha-1) u \bigr)}{u} \\
&= \log e. \label{eq2: ratio RD/Hel}
\end{align}
The result in \eqref{eq1: local behavior RD} and \eqref{eq2: local behavior RD}, for $\alpha \in (0,1) \cup (1, \infty)$,
follows by combining the equalities in \eqref{eq1: local behavior Hel}--\eqref{eq2: ratio RD/Hel}.

Finally, the result in \eqref{eq1: local behavior RD} and \eqref{eq2: local behavior RD} for $\alpha \in \{0, 1, \infty\}$
follows from its validity for all $\alpha \in (0,1) \cup (1, \infty)$, and also due to the property where
$D_{\alpha}(\cdot \| \cdot)$ is monotonically increasing in $\alpha$ (see \cite[Theorem~3]{ErvenH14}).
\end{proof}

\appendices

\section{Proof of Theorem~\ref{theorem: some int. representations}}
\label{appendix: proof of identities}

We prove in the following the integral representations of $f$-divergences and
related measures in Theorem~\ref{theorem: some int. representations}.
\begin{enumerate}[1)]
\item Relative entropy: The function $f \in \set{C}$ in \eqref{eq: f for KL}
yields the following weight function in \eqref{eq: weight function}:
\begin{align}
w_f(\beta) = \left( \frac1\beta - \frac1{\beta^2} \right) \left( 1\{\beta \geq 1\} - 1\{0 < \beta < 1\}\right) \, \log e, \quad \beta > 0.
\end{align}
Consequently, setting $c:=\log e$ in \eqref{eq: generalized w_f} yields
\begin{align}
\label{eq: modified w for KL}
\widetilde{w}_{f,c}(\beta) = \frac1\beta \left( 1\{\beta \geq 1\} - 1\{0 < \beta < 1\}\right) \log e,
\end{align}
for $\beta>0$. Equality \eqref{eq: int. rep. KL} follows from the substitution of \eqref{eq: modified w for KL}
into the right side of \eqref{eq2: new int rep Df}.

\item Hellinger divergence: In view of \eqref{eq: Hel-divergence}, for $\alpha \in (0,1) \cup (1, \infty)$,
the weight function $w_{f_\alpha} \colon (0, \infty) \mapsto [0, \infty)$ in \eqref{eq: weight function}
which corresponds to $f_\alpha \colon (0, \infty) \mapsto \Reals$ in \eqref{eq: H as fD} can be
verified to be equal to
\begin{align} \label{eq1: w for Hel}
w_{f_\alpha}(\beta) = \left( \beta^{\alpha-2} - \frac1{\beta^2} \right) \left( 1\{\beta \geq 1\} - 1\{0 < \beta < 1\} \right)
\end{align}
for $\beta>0$. In order to simplify the integral representation of the Hellinger divergence $\mathscr{H}_{\alpha}(P\|Q)$, we
apply Theorem~\ref{theorem: Int. rep.}-\ref{theorem: int. rep. - part 2}). From \eqref{eq1: w for Hel}, setting $c:=1$ in
\eqref{eq: generalized w_f} implies that $\widetilde{w}_{f_\alpha, 1} \colon (0, \infty) \to \Reals$ is given by
\begin{align} \label{eq2: w for Hel}
\widetilde{w}_{f_\alpha, 1}(\beta) = \beta^{\alpha-2} \left( 1\{\beta \geq 1\} - 1\{0 < \beta < 1\} \right)
\end{align}
for $\beta>0$. Hence, substituting \eqref{eq: G function} and \eqref{eq2: w for Hel} into \eqref{eq2: new int rep Df} yields
\begin{align} \label{eq3: int. rep. Hel}
\mathscr{H}_{\alpha}(P\|Q) = \int_1^\infty \beta^{\alpha-2} \, \bigl(1-\mathds{F}_{P\|Q}(\log \beta)\bigr) \, \mathrm{d}\beta
- \int_0^1 \beta^{\alpha-2} \, \mathds{F}_{P\|Q}(\log \beta) \, \mathrm{d}\beta.
\end{align}
For $\alpha > 1$, \eqref{eq3: int. rep. Hel} yields
\begin{align} \label{eq4: int. rep. Hel}
\mathscr{H}_{\alpha}(P\|Q) &= \int_0^\infty \beta^{\alpha-2} \, \bigl(1-\mathds{F}_{P\|Q}(\log \beta)\bigr) \,
\mathrm{d}\beta - \int_0^1 \beta^{\alpha-2} \, \mathrm{d}\beta \\[0.1cm]
&= \int_0^\infty \beta^{\alpha-2} \, \bigl(1-\mathds{F}_{P\|Q}(\log \beta)\bigr) \, \mathrm{d}\beta - \frac1{\alpha-1},
\end{align}
and, for $\alpha \in (0,1)$, \eqref{eq3: int. rep. Hel} yields
\begin{align} \label{eq5: int. rep. Hel}
\mathscr{H}_{\alpha}(P\|Q) &= \int_1^\infty \beta^{\alpha-2} \, \mathrm{d}\beta
- \int_0^\infty \beta^{\alpha-2} \, \mathds{F}_{P\|Q}(\log \beta) \, \mathrm{d}\beta \\[0.1cm]
\label{eq6: int. rep. Hel}
&= \frac1{1-\alpha} - \int_0^\infty \beta^{\alpha-2} \, \mathds{F}_{P\|Q}(\log \beta) \, \mathrm{d}\beta.
\end{align}
This proves \eqref{eq: int. rep. Hel}. We next consider the following special cases:
\begin{itemize}
\item In view of \eqref{eq2: chi^2}, equality \eqref{eq: int. rep. chi^2 div}
readily follows from \eqref{eq: int. rep. Hel} with $\alpha=2$.

\item In view of \eqref{eq3: Sq Hel}, equality \eqref{eq: int. rep. H^2 dist}
readily follows from \eqref{eq: int. rep. Hel} with $\alpha=\tfrac12$.

\item In view of \eqref{eq: B dist}, equality \eqref{eq: int. rep. B dist}
readily follows from \eqref{eq: int. rep. H^2 dist}.
\end{itemize}

\item R\'enyi divergence:
In view of the one-to-one correspondence in \eqref{renyimeetshellinger}
between the R\'enyi divergence and the Hellinger divergence of the same order,
\eqref{eq: int. rep. RenyiD} readily follows from \eqref{eq: int. rep. Hel}.

\item $\chi^s$ divergence with $s \geq 1$: We first consider the case where $s>1$.
From \eqref{eq: chi^s div}, the function $f_s \colon (0, \infty) \mapsto \Reals$
in \eqref{eq: f - chi^s div} is differentiable and $f_s'(1)=0$. Hence, the respective
weight function $w_{f_s} \colon (0, \infty) \mapsto (0, \infty)$ can be verified from
\eqref{eq: weight function} to be given by
\begin{align} \label{eq: weight function chi^s}
w_{f_s}(\beta) = \frac1\beta \left( s-1+\frac1\beta \right) |\beta-1|^{s-1}, \quad \beta>0.
\end{align}
The result in \eqref{eq: int. rep. chi^s}, for $s > 1$, follows readily from
\eqref{eq: chi^s div}, \eqref{eq: G function}, \eqref{eq: new int rep Df} and
\eqref{eq: weight function chi^s}.

We next prove \eqref{eq: int. rep. chi^s} with $s=1$. In view of \eqref{eq: f - chi^s div},
\eqref{eq: f-TV}, \eqref{eq1: TV distance} and the dominated convergence theorem,
\begin{align}
\label{eq1: TV - s=1}
|P-Q| &= \lim_{s \downarrow 1} \, \chi^s(P \| Q) \\
\label{eq2: TV - s=1}
&= \int_1^{\infty} \frac{1-\mathds{F}_{P\|Q}(\log \beta)}{\beta^2} \, \mathrm{d}\beta +
\int_0^1 \frac{\mathds{F}_{P\|Q}(\log \beta)}{\beta^2} \, \mathrm{d}\beta.
\end{align}
This extends \eqref{eq: int. rep. chi^s} for all $s \geq 1$, although $f_1(t)=|t-1|$
for $t>0$ is not differentiable at~1. For $s=1$, in view of \eqref{eq7: identity RIS},
the integral representation in the right side of \eqref{eq2: TV - s=1} can be
simplified to \eqref{eq: int. rep. TV} and \eqref{eq2: int. rep. TV}.

\item DeGroot statistical information:
In view of \eqref{eq:DG f-div}--\eqref{eq: f for DG}, since the function
$\phi_w \colon (0, \infty) \mapsto \Reals$ is not differentiable at the point
$\frac{1-\omega}{\omega} \in (0, \infty)$ for $\omega \in (0,1)$,
Theorem~\ref{theorem: Int. rep.} cannot be applied directly to get an integral
representation of the DeGroot statistical information. To that end,
for $(\omega, \alpha) \in (0,1)^2$, consider the family of convex functions
$f_{\omega, \alpha} \colon (0, \infty) \mapsto \Reals$ given by (see \cite[(55)]{LieseV_IT2006})
\begin{align} \label{LieseV-55}
f_{\omega, \alpha}(t) = \frac1{1-\alpha} \left( \Bigl[ (\omega t )^{\frac1\alpha}
+ (1-\omega)^{\frac1{\alpha}} \Bigr]^\alpha - \Bigl[ \omega^{\frac1\alpha}
+ (1-\omega)^{\frac1{\alpha}} \Bigr]^\alpha \right),
\end{align}
for $t > 0$. These differentiable functions also satisfy
\begin{align}
\label{eq: f --> phi}
\lim_{\alpha \downarrow 0} f_{\omega, \alpha}(t) = \phi_w(t),
\end{align}
which holds due to the identities
\begin{align}
\label{eq1: 2 identities}
& \lim_{\alpha \downarrow 0} \left(a^{\frac1{\alpha}} + b^{\frac1{\alpha}}\right)^\alpha = \max\{a,b\}, \quad \; \; a, b \geq 0; \\
\label{eq2: 2 identities}
& \min\{a,b\} = a+b - \max\{a,b\}, \quad a, b \in \Reals.
\end{align}
The application of Theorem~\ref{theorem: Int. rep.}-\ref{theorem: int. rep. - part 2}) to the set
of functions $f_{\omega, \alpha} \in \set{C}$ with
\begin{align}
c:= \frac{(1-\omega)^{\frac1\alpha}}{\alpha-1} \left[ \omega^{\frac1\alpha}
+ (1-\omega)^{\frac1{\alpha}} \right]^{\alpha-1}
\end{align}
yields
\begin{align}
\label{eq: w - Arimoto info.}
\widetilde{w}_{f_{\omega, \alpha}, \, c}(\beta) = \frac{1-\omega}{1-\alpha} \,
\frac1{\beta^2} \left[ 1 + \left( \frac{\omega \beta}{1-\omega} \right)^{\frac1\alpha} \right]^{\alpha-1} \;
\Bigl[ 1\{0<\beta<1\} - 1\{\beta \geq 1\} \Bigr],
\end{align}
for $\beta>0$, and
\begin{align}
\label{eq: Arimoto info.}
D_{f_{\omega, \alpha}}(P\|Q) = \int_0^{\infty} \widetilde{w}_{f_{\omega, \alpha}, \, c}(\beta) \, G_{P\|Q}(\beta) \, \mathrm{d}\beta
\end{align}
with $G_{P\|Q}(\cdot)$ as defined in \eqref{eq: G function}, and $(\omega, \alpha) \in (0,1)^2$.
From \eqref{eq1: 2 identities} and \eqref{eq: w - Arimoto info.}, it follows that
\begin{align}
\label{eq: limit of w for DeGroot info.}
\lim_{\alpha \downarrow 0} \, \widetilde{w}_{f_{\omega, \alpha}, \, c}(\beta)
= \frac{1-\omega}{\beta^2} \; \Bigl[ 1\{0<\beta<1\} - 1\{\beta \geq 1\} \Bigr] \;
\Bigl[ \tfrac12 \, 1\bigl\{\beta = \tfrac{1-\omega}{\omega}\bigr\} + 1\bigl\{0<\beta<\tfrac{1-\omega}{\omega}\bigr\} \Bigr],
\end{align}
for $\beta>0$. In view of \eqref{eq:DG f-div}, \eqref{eq: f for DG}, \eqref{eq: G function}, \eqref{eq: f --> phi},
\eqref{eq: Arimoto info.} and \eqref{eq: limit of w for DeGroot info.}, and the monotone convergence theorem,
\begin{align}
\mathcal{I}_{\omega}(P\|Q) &= D_{\phi_\omega}(P\|Q) \\
&=\lim_{\alpha \downarrow 0} D_{f_{\omega, \alpha}}(P\|Q)\\
\label{eq: to be simplified}
&= (1-\omega) \int_0^{\min\{1, \frac{1-\omega}{\omega}\}} \; \frac{\mathds{F}_{P\|Q}(\log \beta)}{\beta^2} \, \mathrm{d}\beta
- (1-\omega) \int_1^{\max\{1, \frac{1-\omega}{\omega}\}} \; \frac{1-\mathds{F}_{P\|Q}(\log \beta)}{\beta^2} \, \mathrm{d}\beta,
\end{align}
for $\omega \in (0,1)$. We next simplify \eqref{eq: to be simplified} as follows:
\begin{enumerate}[a)]
\item if $\omega \in \bigl(1, \tfrac{1-\omega}{\omega} \bigr)$, then $\tfrac{1-\omega}{\omega} < 1$ and \eqref{eq: to be simplified} yields
\begin{align}
\mathcal{I}_{\omega}(P\|Q) &= (1-\omega) \int_0^{\frac{1-\omega}{\omega}} \; \frac{\mathds{F}_{P\|Q}(\log \beta)}{\beta^2} \, \mathrm{d}\beta;
\end{align}
\item if $\omega \in \bigl(0, \tfrac12 \bigr]$, then $\tfrac{1-\omega}{\omega} \geq 1$ and \eqref{eq: to be simplified} yields
\begin{align}
\mathcal{I}_{\omega}(P\|Q) &= (1-\omega) \int_0^1 \frac{\mathds{F}_{P\|Q}(\log \beta)}{\beta^2} \, \mathrm{d}\beta -
(1-\omega) \int_1^{\frac{1-\omega}{\omega}} \frac{1-\mathds{F}_{P\|Q}(\log \beta)}{\beta^2} \, \mathrm{d}\beta \\[0.2cm]
\label{eq: DeGroot + Lemma}
&= (1-\omega)  \int_1^\infty \frac{1-\mathds{F}_{P\|Q}(\log \beta)}{\beta^2} \, \mathrm{d}\beta -
(1-\omega) \int_1^{\frac{1-\omega}{\omega}} \frac{1-\mathds{F}_{P\|Q}(\log \beta)}{\beta^2} \, \mathrm{d}\beta \\[0.2cm]
&= (1-\omega) \int_{\frac{1-\omega}{\omega}}^\infty \frac{1-\mathds{F}_{P\|Q}(\log \beta)}{\beta^2} \, \mathrm{d}\beta,
\end{align}
where \eqref{eq: DeGroot + Lemma} follows from \eqref{eq7: identity RIS} (or its equivalent from in \eqref{eq: int. identity with RIS}).
\end{enumerate}
This completes the proof of \eqref{eq: int. rep. DeGroot Info}. Note that, due to \eqref{eq7: identity RIS},
the integral representation of $\mathcal{I}_{\omega}(P\|Q)$ in \eqref{eq: int. rep. DeGroot Info} is indeed
continuous at $\omega=\tfrac12$.

\item Triangular discrimination: In view of \eqref{eq:delta}--\eqref{eq:tridiv},
the corresponding function $\widetilde{w}_{f,1} \colon (0, \infty) \mapsto \Reals$ in
\eqref{eq: generalized w_f} (i.e., with $c:=1$) can be verified to be given by
\begin{align} \label{eq: w for delta div}
\widetilde{w}_{f,1}(\beta) = \frac{4}{(\beta+1)^2} \left(1\{\beta \geq 1\} - 1\{0 < \beta < 1\} \right)
\end{align}
for $\beta>0$. Substituting \eqref{eq: G function} and \eqref{eq: w for delta div} into
\eqref{eq2: new int rep Df} proves \eqref{eq: int. rep. TD} as follows:
\begin{align}
\Delta(P\|Q) &= 4 \left( \int_1^\infty \frac{1-\mathds{F}_{P\|Q}(\log \beta)}{(\beta+1)^2} \, \mathrm{d}\beta
- \int_0^1 \frac{\mathds{F}_{P\|Q}(\log \beta)}{(\beta+1)^2}  \, \mathrm{d}\beta \right) \\[0.3cm]
&= 4 \left( \int_0^\infty \frac{1-\mathds{F}_{P\|Q}(\log \beta)}{(\beta+1)^2} \, \mathrm{d}\beta
- \int_0^1 \frac1{(\beta+1)^2}  \, \mathrm{d}\beta \right) \\[0.3cm]
&= 4 \int_0^\infty \frac{1-\mathds{F}_{P\|Q}(\log \beta)}{(\beta+1)^2} \, \mathrm{d}\beta - 2.
\end{align}

\item Lin's measure and the Jensen-Shannon divergence: Let $\theta \in (0,1)$
(if $\theta \in \{0, 1\}$, then \eqref{eq: Lin91}--\eqref{eq2: Lin91}
imply that $L_\theta(P\|Q)=0$). In view of \eqref{eq: Lin div as Df},
the application of Theorem~\ref{theorem: Int. rep.}-\ref{theorem: int. rep. - part 1})
with the function $f_{\theta} \colon (0, \infty) \mapsto \Reals$ in
\eqref{eq: f of Lin div} yields the weight function
$w_{f_\theta} \colon (0, \infty) \mapsto [0, \infty)$ defined as
\begin{align} \label{eq: w_f - Lin}
w_{f_\theta}(\beta) &= \frac{(1-\theta) \, \log(\theta \beta + 1 - \theta)}{\beta^2}
\; \Bigl(1\{\beta \geq 1\} - 1\{0 < \beta < 1\}\Bigr).
\end{align}
Consequently, we get
\begin{align}
\label{eq11: Lin int. rep.}
L_{\theta}(P\|Q) = & \, (1-\theta) \left( \int_1^\infty \tfrac{\log(\theta \beta + 1 - \theta)}{\beta^2}
\left(1 - \mathds{F}_{P \| Q}(\log \beta)\right) \, \mathrm{d}\beta
- \int_0^1 \tfrac{\log(\theta \beta + 1 - \theta)}{\beta^2} \;
\mathds{F}_{P \| Q}(\log \beta) \, \mathrm{d}\beta \right) \\[0.3cm]
\label{eq12: Lin int. rep.}
= & \, (1-\theta) \left( \int_1^\infty \frac{\log(\theta \beta + 1 - \theta)}{\beta^2} \, \mathrm{d}\beta
- \int_0^\infty \frac{\log(\theta \beta + 1 - \theta)}{\beta^2} \;
\mathds{F}_{P \| Q}(\log \beta) \, \mathrm{d}\beta \right) \\[0.3cm]
\label{eq13: Lin int. rep.}
= & \, \theta \log \frac1\theta - (1-\theta)  \int_0^\infty \frac{\log(\theta \beta + 1 - \theta)}{\beta^2} \;
\mathds{F}_{P \| Q}(\log \beta) \, \mathrm{d}\beta \\[0.3cm]
\label{eq14: Lin int. rep.}
= & \, h(\theta) - (1-\theta)  \int_0^\infty
\frac1{\beta^2} \, \log\left(\frac{\theta \beta}{1 - \theta} + 1\right) \;
\mathds{F}_{P \| Q}(\log \beta) \, \mathrm{d}\beta
\end{align}
where \eqref{eq11: Lin int. rep.} follows from \eqref{eq: G function}, \eqref{eq: new int rep Df}
and \eqref{eq: w_f - Lin}; for $\theta \in (0,1)$, equality \eqref{eq13: Lin int. rep.} holds since
\begin{align}
\label{eq15: Lin int. rep.}
\int_1^\infty \frac{\log(\theta \beta + 1 - \theta)}{\beta^2} \, \mathrm{d}\beta
= \frac{\theta}{1-\theta} \, \log \frac1{\theta};
\end{align}
finally, \eqref{eq14: Lin int. rep.} follows from \eqref{eq: int. identity with RIS} where
$h \colon [0,1] \mapsto [0,\log 2]$ denotes the binary entropy function. This proves
\eqref{eq: int. rep. Lin's div}. In view of \eqref{eq:js1}, the identity in \eqref{eq: int. rep. JS div}
for the Jensen-Shannon divergence follows from \eqref{eq: int. rep. Lin's div} with $\theta = \tfrac12$.

\item Jeffrey's divergence: In view of \eqref{eq2: jeffreys}--\eqref{f - jeffreys}, the corresponding
weight function $w_f \colon (0, \infty) \mapsto [0, \infty)$ in \eqref{eq: weight function} can be
verified to be given by
\begin{align} \label{eq1: w for jeffreys}
w_f(\beta) = \left( \frac{\log e}{\beta} + \frac1{\beta^2} \, \log \frac{\beta}{e} \right)
\left( 1\{\beta \geq 1\} - 1\{0<\beta<1\} \right).
\end{align}
Hence, setting $c := \log e$ in \eqref{eq: generalized w_f} implies that
\begin{align} \label{eq2: w for jeffreys}
\widetilde{w}_{f,c}(\beta)
= \left( \frac{\log e}{\beta} + \frac{\log \beta}{\beta^2} \right) \left( 1\{\beta \geq 1\} - 1\{0<\beta<1\} \right)
\end{align}
for $\beta > 0$. Substituting \eqref{eq: G function} and \eqref{eq2: w for jeffreys} into
\eqref{eq2: new int rep Df} yields \eqref{eq: int. rep. Jefreey's div}.

\item $E_\gamma$ divergence: Let $\gamma \geq 1$, and let
$\omega \in \bigl(0, \tfrac12 \bigr]$ satisfy $\frac{1-\omega}{\omega} = \gamma$; hence,
$\omega = \frac1{1+\gamma}$. From \eqref{eq: DG-EG}, we get
\begin{align} \label{eq1: EG-DG}
E_{\gamma}(P\|Q) = (1+\gamma) \, \mathcal{I}_{\frac1{1+\gamma}}(P\|Q).
\end{align}
The second line in the right side of \eqref{eq: int. rep. DeGroot Info} yields
\begin{align} \label{eq2: EG-DG}
\mathcal{I}_{\frac1{1+\gamma}}(P\|Q) = \frac{\gamma}{1+\gamma}
\int_\gamma^\infty \frac{1-\mathds{F}_{P\|Q}(\log \beta)}{\beta^2} \, \mathrm{d}\beta.
\end{align}
Finally, substituting \eqref{eq2: EG-DG} into the right side of \eqref{eq1: EG-DG}
yields \eqref{eq: int. rep. E_gamma}.
\end{enumerate}

\begin{remark}
In view of \eqref{eq7: identity RIS}, the integral representation
for the $\chi^s$ divergence in \eqref{eq: int. rep. chi^s} specializes to
\eqref{eq: int. rep. chi^2 div} and \eqref{eq: int. rep. TV}--\eqref{eq2: int. rep. TV}
by letting $s=2$ and $s=1$, respectively.
\end{remark}

\begin{remark}
In view of \eqref{eq:EG-TV}, the first identity for the total variation distance
in \eqref{eq: int. rep. TV} follows readily from \eqref{eq: int. rep. E_gamma}
with $\gamma=1$. The second identity in \eqref{eq2: int. rep. TV} follows from
\eqref{eq: int. identity with RIS} and \eqref{eq: int. rep. TV}, and since
$\int_1^\infty \frac{\mathrm{d}\beta}{\beta^2} = 1$.
\end{remark}

\section*{Acknowledgment}
The author is grateful to Sergio Verd\'{u} and the two anonymous reviewers, whose suggestions improved the
presentation in this paper.


\end{document}